  \newcolumntype{d}[1]{D{.}{.}{#1|}}
\title{\sys: A Framework for Defining  Differentially-Private Computations}
\author{Dan Zhang}
\affiliation{
  \department{College of Information and Computer Sciences}
  \institution{University of Massachusetts, Amherst}
}
\email{dzhang@cs.umass.edu}
\author{Ryan McKenna}
\affiliation{
  \department{College of Information and Computer Sciences}
  \institution{University of Massachusetts, Amherst}
}
\email{rmckenna@cs.umass.edu}
\author{Ios Kotsogiannis}
\affiliation{
  \department{Department  of Computer Science}
  \institution{Duke University}
}
\email{iosk@cs.duke.edu}
\author{George Bissias}
\affiliation{
  \department{College of Information and Computer Sciences}
  \institution{University of Massachusetts, Amherst}
}
\email{gbiss@cs.umass.edu}
\author{Michael Hay}
\affiliation{
  \department{Computer Science Department}
  \institution{Colgate University}
}
\email{mhay@colgate.edu}
\author{Ashwin Mach\-anav\-ajjhala}
\affiliation{
  \department{Department  of Computer Science}
  \institution{Duke University}
}
\email{ashwin@cs.duke.edu}
\author{Gerome Miklau}
\affiliation{
  \department{College of Information and Computer Sciences}
  \institution{University of Massachusetts, Amherst}
}
\email{miklau@cs.umass.edu}
\crefname{section}{Sec.}{Secs.}
\DeclareMathOperator*{\argmin}{arg\,min}
\newcommand{\privOp}{{\color{BrickRed} Private}\xspace}
\newcommand{\privPubOp}{{\color{Orange} Private$\rightarrow$Public}\xspace}
\newcommand{\pubOp}{{\color{OliveGreen} Public}\xspace}
\newcommand{\squishlist}{
	\begin{list}{$\bullet$}
		{
			\setlength{\itemsep}{0pt}
			\setlength{\parsep}{3pt}
			\setlength{\topsep}{3pt}
			\setlength{\partopsep}{0pt}
			\setlength{\leftmargin}{1.5em}
			\setlength{\labelwidth}{1em}
			\setlength{\labelsep}{0.5em} }
		}
\newcommand{\squishend}{
\end{list}  }
\newcommand{\op}[1]{{\sc #1}\xspace}
\newcommand{\eat}[1]{}
\renewcommand{\paragraph}[1]{\vskip 3pt \noindent {\em #1} }
\renewcommand{\hl}[1]{#1}
\newcommand{\ahpOp}[0]{{\algoname{\hl{AHPpartition}}}\xspace}
\newcommand{\reduceOp}[0]{\op{\hl{V-ReduceByPartition}}\xspace}
\newcommand{\partitionOp}[0]{\op{\hl{V-SplitByPartition}}\xspace}
\newcommand{\tablePartitionOp}[0]{\op{\hl{SplitByPartition}}\xspace}
\newcommand{\planref}[1]{\textnormal{\textsc{Plan \##1}}}
\def\planidentity{{1}\xspace}
\def\planmwem{{7}\xspace}
\def\planahp{{8}\xspace}
\def\plandawa{{9}\xspace}
\def\plandawastripe{{14}\xspace}
\def\planhbstripe{{15}\xspace}
\def\planhbstripekron{{16}\xspace}
\def\planprivbayesls{{17}\xspace}
\def\planmwemb{{18}\xspace}
\def\planmwemc{{19}\xspace}
\def\planmwemd{{20}\xspace}
\newcommand{\pk}{protected kernel\xspace}
\def\sys{\textsc{$\epsilon$ktelo}\xspace}
\newcommand{\algoname}[1]{\textnormal{\textsc{#1}}}
\newcommand{\norm}[1]{\left\lVert#1\right\rVert}
\def\QH{\mathcal{Q}}
\def\budget{\epsilon_{tot}}
\def\C{\mathbb{C}}
\def\Czero{\mathbb{C}_0}
\def\client{S_{client}}
\def\api{S_{kernel}}%{\textrm{PK}}
\def\init{\textrm{\bf Init}}
\def\Q{\mathbf{M}}   % query matrix
\def\W{\mathbf{W}}   % query matrix
\def\P{\mathbf{P}}   % partition matrix
\def\qsize{m}
\def\wsize{m}
\def\psize{p}
\def\D{\mathbf{D}} % diagonal matrix
\def\A{\mathbf{A}}
\def\B{\mathbf{B}}
\def\db{T}  % db instance
\def\algG{\mathcal{A}}  % generic DP algorithm
\newcommand{\vect}[1]{\mathbf{#1}}
\def\x{\vect{x}}
\def\xhat{\vect{\hat{x}}}
\def\y{\vect{y}}
\def\q{\vect{q}}
\def\m{\vect{m}}
\def\u{\vect{u}}
\def\v{\vect{v}}
\def\z{\vect{z}}
\begin{document}

\begin{abstract}
%!TEX root = ms.tex

The adoption of differential privacy is growing but the complexity of designing private, efficient and accurate algorithms is still high.  We propose a novel programming framework and system, \sys, for implementing both existing and new privacy algorithms.  For the task of answering linear counting queries, we show that nearly all existing algorithms can be composed from operators, each conforming to one of a small number of operator classes.  While past programming frameworks have helped to ensure the privacy of programs, the novelty of our framework is its significant support for authoring accurate and efficient (as well as private) programs.

%that it
After describing the design and architecture of the \sys system, we show that \sys is expressive, allows for safer implementations through code reuse, and that it allows both privacy novices and experts to easily design algorithms.  We provide a number of novel implementation techniques to support the generality and scalability of \sys operators.  These include methods to automatically compute lossless reductions of the data representation, implicit matrices that avoid materialized state but still support computations, and iterative inference implementations which generalize techniques from the privacy literature.

We demonstrate the utility of \sys by designing several new state-of-the-art algorithms, most of which result from simple re-combinations of operators defined in the framework.  We study the accuracy and scalability of \sys plans in a thorough empirical evaluation.

\end{abstract}

% adds page numbers
%\settopmatter{printfolios=true}

\maketitle

%!TEX root = ms.tex

%\listoftodos

\section{Introduction} \label{sec:intro}

As the collection of personal data has increased, many institutions face an urgent need for reliable privacy protection mechanisms. They must balance the need to protect individuals with demands to use collected data for new applications, to model their users' behavior, or share  data with external partners. Differential privacy \cite{dwork2006calibrating,Dwork14Algorithmic} is a rigorous privacy definition that offers a persuasive assurance to individuals, provable guarantees, and the ability to analyze the impact of combined releases of data. Informally, an algorithm satisfies differential privacy if its output does not change too much when any one record in the input database is added or removed.

%\revision{
The research community has actively investigated differential privacy and algorithms are known for a variety of tasks ranging from data exploration to query answering to machine learning. However, the adoption of differentially private techniques in real-world applications remains rare. This is because implementing programs that provably satisfy privacy and ensure sufficient utility for a given task is still extremely challenging for non-experts in differential privacy. In fact, the few real world deployments of differential privacy -- like  OnTheMap~\cite{onthemap,haney17:census} (a U.S. Census Bureau data product), RAPPOR \cite{Erlingsson14Rappor:} (a Google Chrome extension), and Apple's private collection of emoji's and HealthKit data -- have required teams of privacy experts to ensure that implementations meet the privacy standard and that they deliver acceptable utility. There are at least three important challenges in implementing and deploying differentially private algorithms.

The first and foremost challenge is the difficulty of designing utility-optimal algorithms: i.e., algorithms that can extract the maximal accuracy given a fixed ``privacy budget.'' %which bounds the acceptable privacy loss of individuals.
While there are a number of general-purpose differentially private algorithms, such as the Laplace Mechanism~\cite{dwork2006calibrating},
%(which adds noise to queries posed on the private data)
they typically offer suboptimal accuracy if applied directly.  A carefully designed algorithm can improve on general-purpose methods by an order of magnitude or more---without weakening privacy: that is, accuracy is improved by careful engineering and sophisticated algorithm design.

One might hope for a single dominant algorithm for each task, but a recent empirical study~\cite{Hay16Principled} showed that the accuracy of existing algorithms is complex: no single algorithm delivers the best accuracy across the range of settings in which it may be deployed.  The choice of the best algorithm may depend on the particular task, the available privacy budget, and properties of the input data.
%including how much data is available or distributional properties of the data.
Therefore, to achieve state-of-the-art accuracy, a practitioner currently has to make a host of complex algorithm choices, which may include choosing a low-level representation for the input data, translating their queries into that representation, choosing among available algorithms, and setting parameters.  The best choices will vary for different input data and different analysis tasks.

The second challenge is that the tasks in which practitioners are interested are diverse and may differ from those considered in the literature. Hence, existing algorithms need to be adapted to new application settings, a non-trivial task. For instance, techniques used by modern privacy algorithms include optimizing error over multiple queries by identifying common sub-expressions, obtaining noisy counts from the data at different resolutions, and using complex inference techniques to reconstruct answers to target queries from noisy, inconsistent and incomplete measurement queries. But different algorithms use different specialized operators for these sub-tasks, and it can be challenging to adapt them to new situations. Thus, designing utility-optimal algorithms requires significant expertise in a complex and rapidly-evolving research literature.

A third equally important challenge is that correctly implementing differentially private algorithms can be difficult. There are known examples of algorithm pseudocode in research papers not satisfying differential privacy as claimed. For instance, Zhang et al \cite{zhang16privtree} showed that many variants of a primitive called the sparse vector technique do not in fact meet their claims of differential privacy. Differential privacy can also be broken through incorrect implementations of valid algorithms. For example, Mironov \cite{mironov12laplace} showed that standard implementations of basic algorithms like the Laplace Mechanism \cite{dwork2006calibrating} can violate differential privacy because of their use of floating point arithmetic. Privacy-oriented programming frameworks such as PINQ \cite{mcsherry2009pinq,Ebadi17Featherweight,Proserpio14Calibrating}, Fuzz \cite{gaboardi:popl13}, PrivInfer \cite{privinfer:ccs16} and LightDP \cite{zhang:popl17} help implement programs whose privacy can be verified with relatively little human intervention. While they help to ensure the privacy criterion is met, they may impose their own restrictions and offer little or no support for designing utility-optimal programs. In fact, in PINQ \cite{mcsherry2009pinq}, some state-of-the-art algorithms involving inference and domain reduction cannot be implemented.

To address the aforementioned challenges, we have developed \sys, a programming framework and system that aids programmers in developing differentially private programs with high utility. \sys programs can be used to solve a core class of statistical tasks that involve answering counting queries over a table of arbitrary dimension (described in Sec. \ref{sec:background}). Tasks supported by \sys include releasing contingency tables, multi-dim\-ensional histograms, answering OLAP and range queries, and implementing private machine learning algorithms.  \sys is an open-source system under active development\footnote{Available at \url{https://github.com/ektelo/ektelo}}.

This paper makes the following contributions.
%\begin{itemize}
% \squishlist
% \item

First, we recognize that, for the tasks we consider, virtually all algorithms in the research literature can be described as combinations of a small number of operators that perform basic functions. Our first contribution is to abstract and unify key subroutines into a small set of operator classes in \sys -- tranformations, query selection, \hl{partition selection}, measurement and inference. Different algorithms differ in (i) the sequence in which these operations are performed on the data, and (ii) the specific implementation of operations from these classes. In our system, differentially private programs are described as \emph{plans} over a high level library of \emph{operator} implementations supported by \sys. Plans described in \sys are expressive enough to reimplement all state-of-the-art algorithms from DPBench~\cite{Hay16Principled}.

% \item
Second, if operator implementations are vetted and shown to satisfy differential privacy, then plans implemented in \sys come with a proof of privacy. This proof requires a non-trivial extension of a formal analysis of a past framework~\cite{Ebadi17Featherweight}. This relieves the algorithm designer of the burden of proving their programs are private. By isolating privacy critical functions in operators, \sys reduces the amount of code that needs to be verified for privacy. In future work, we hope to implement operators in \sys using programming frameworks like LightDP to eliminate this burden too.
% \item

% \item

Third, we describe a number of novel implementation techniques which support the generality and efficiency of \sys.  We design sophisticated matrix support into \sys, which allows plan authors to represent and operate on matrix objects that would be infeasible to represent otherwise;  We describe a general-purpose, efficient and scalable inference engine that subsumes customized inference subroutines from the literature; and we describe a new dimensionality reduction operator that is applicable to plans that answer a workload of linear counting queries, and can reduce error by at most $3\times$ and runtime at most $5\times$.  These systems innovations are central to the goals of \sys because they give plan authors more freedom to describe plans without being limited by efficiency concerns or bound by the need to design custom inference techniques.

The operator-based approach to implementing differentially private programs has the following benefits:
% \begin{itemize}
\squishlist
	\item \emph{Modularity:} \sys enables code reuse since the same operator can be used in multiple algorithms.  This helps safety, as there is less code to verify the correctness of an implementation, and it amplifies innovation, as any improvement to an operator is inherited by all plans containing it.
	\item \emph{Transparency:} By expressing algorithms as plans with operators from operator classes, differences/similarities of competing algorithms can be discerned. Moreover, algorithm modifications easier to explore. Further, it is possible to identify general rules for restructuring plans (like  heuristics in query optimizers).
	 % It also makes it easier to explore algorithm modifications.
	\item \emph{Flexibility:} Practitioners can now use existing operators from different algorithms and recombine them in arbitrary ways -- allowing them to invent new algorithms that borrow ideas from the state-of-art -- without the need for a custom privacy analysis.
	% best ideas
% \end{itemize}
\squishend

We demonstrate the benefits of \sys by re-implementing a wide range of algorithms from the literature, using \sys to design original algorithms with improved error and efficiency, and by using \sys to address two case studies. For a use-case of releasing Census data tabulations, we define a new algorithm that offers a $10\times$ improvement over the best competitor from the literature. For building a private classifier, we used \sys to design algorithms that beat all available baselines.

% ADD
%a new algorithm that, when expressed as a plan in \sys, looks similar to the MWEM algorithm~\cite{hardt2012a-simple} but with a few key operators replaced, which empirically lowers error up to $8\times$.

\paragraph{Organization}
We provide an overview of \sys and highlight its design in the next section. After providing background in \cref{sec:background}, we describe the execution framework and methods for privacy enforcement \cref{sec:framework}.  The operator classes are described in \cref{sec:operators} and we show the expressiveness of \sys plans in  \cref{sec:plans} by re-implementing existing algorithms.  Efficient matrix support is described in \cref{sec:implementation}, including improvements to inference.  \cref{sec:wbdr} describes our method for selecting and optimal partition given a workload. We then put \sys into action by designing new algorithms for cases studies~\cref{sec:case_studies}, followed by a thorough experimental evaluation in \cref{sec:experiments}. We discuss related work and conclude in \cref{sec:related,sec:conc}. 

%The appendix includes proofs of theorems, additional algorithm background, and detailed plan descriptions.

%%%%%%%%%%%%%%%%%%%%%%%
\eat{
%Historically, there may be no institution more committed to ensuring the private release of data collected from individuals than the U.S. Census Bureau, whose mandate is to regularly survey the U.S. population and release accurate summary statistics to guide policy, electoral districting, and social science. Census employees are bound by law to protect the privacy of individuals.
OnTheMap~\cite{onthemap}, a U.S. Census Bureau data product, is an early example of a deployment of differential privacy \cite{machanavajjhala08onthemap}
%, and recent work shows how differentially private algorithms can match the privacy required by the current interpretation of the law \cite{haney17:census}.
and the Census is currently attempting a significant redesign of their disclosure limitation methods to offer formal privacy methods including differential privacy
% for some of their upcoming data releases~
 \cite{haney17:census,Vilhuber17Proceedings}.
In addition, Google~\cite{Erlingsson14Rappor:} and Apple~\cite{Greenberg16Apples} have recently deployed versions of differential privacy to safely collect and analyze data from their customers.

Yet these deployments of differentially private algorithms have required teams of privacy experts to ensure that implementations meet the privacy standard and that they deliver acceptable utility.  The process is complicated and error prone, requiring significant expertise in a  complex and rapidly-evolving research literature.

 A first challenge is that correctly implementing differentially private algorithms can be difficult.  For example, Mironov \cite{mironov12laplace} showed that standard implementations of basic algorithms like the Laplace Mechanism \cite{dwork2006calibrating} can violate differential privacy because of their use of floating point arithmetic.
%recent work \cite{chen15svt,zhang16privtree} showed that variants of a well known differentially private subroutine had errors in their privacy proofs, leading to privacy violations in the algorithms in which they were used~\cite{chen15svtfail,lee14itemsets}. [GM: metioning this led to over-selling by VLDB reviewers.]
%
Compounding this is the lack of effective tools to aid algorithm designers.  There are few  software libraries or vetted implementations of privacy algorithms. Research prototypes meant to assist algorithm design have shortcomings and have not been adopted by practitioners (or researchers) \cite{mcsherry2009pinq}.  Hence, most algorithm designers start from scratch in a general-purpose language.

But the greatest challenge to deployment is probably the difficulty of designing utility-optimal algorithms: i.e., algorithms that can extract the maximal accuracy given a fixed ``privacy budget.'' %which bounds the acceptable privacy loss of individuals.
While there are a number of general-purpose differentially private algorithms, such as the Laplace Mechanism~\cite{dwork2006calibrating} (which adds noise to queries posed on the private data) they typically offer suboptimal accuracy if applied directly.  A carefully designed algorithm can improve on general-purpose methods by {\em an order of magnitude or more}.  We emphasize that such improvements in accuracy are a pure ``win:'' better accuracy does not come at the expense of privacy, but instead from careful engineering and sophisticated algorithm design.  Privacy-oriented programming frameworks such as PINQ \cite{mcsherry2009pinq,Ebadi17Featherweight,Proserpio14Calibrating} help to ensure the privacy criterion is met, but offer little or no support for designing utility-optimal programs.

\mhref{Techniques employed by modern privacy algorithms include optimizing error over multiple queries by identifying common sub-expressions, obtaining noisy counts from the data at different resolutions, and using complex inference techniques to reconstruct answers to target queries from noisy, inconsistent and incomplete measurements.}{all true, but this may detract from our contribution which is recognizing that these are the key ingredients in state-of-art algorithms.} The use of these techniques has resulted in significant, steady improvements in accuracy, but also a diverse collection of increasingly complex algorithms.

One might hope for a single dominant algorithm, but a recent empirical study~\cite{Hay16Principled} showed that the accuracy of existing algorithms is complex: no single algorithm delivers the best accuracy across the range of settings in which such algorithms may be deployed.  The choice of the best algorithm may depend on the particular task, the available privacy budget, and properties of the input data including how much data is available or distributional properties of the data.  Therefore, to achieve state-of-the-art accuracy, a practitioner currently has to make a host of complex algorithm choices, which may include choosing a low-level representation for the input data, translating their queries into that representation, choosing among available algorithms, and setting parameters.  And the best choices will be different for different input data and different analysis tasks.

\renewcommand{\outlinei}{itemize}
\renewcommand{\outlineii}{enumerate}
\renewcommand{\outlineiii}{itemize}

% GM: REMOVED for flow.   Objections???
%Due the above challenges, the current state of privacy algorithms is reminiscent of that of database systems in the early 1970's when databases were, according to E. F. Codd,  ``without exception ad hoc, cumbersome, and difficult to use -- they could really only be used by people having highly specialized technical skills'' \cite{codd}.

To address the above challenges, we propose a programming framework and system, \sys [name anonymized], \mh{need one liner...}
% \claims{whose goal is to {\em simplify} and {\em accelerate} the development of private, efficient and accurate differentially private algorithms}.
\sys allows one to develop private algorithms for a  core class of statistical tasks (e.g. multi-dim\-ensional histograms, data cubes, classification) supported by linear counting queries over a table of arbitrary dimension (\cref{sec:background}).% which include multi-dim\-ensional histograms, data cubes, marginals, range queries, but also support  statistics sufficient for common statistical modeling.

In \sys, a privacy algorithm is defined as a {\em plan} consisting of a sequence of {\em operators}.  \mhref{With \sys it becomes possible to make an important distinction between two roles in the practice of algorithm design.  {\em Plan authors}, who are not necessarily privacy experts, can define accurate algorithms by constructing workflows of operators and the plans they define are guaranteed to be differentially private.  The responsibility for building accurate and verifiably private operators, as well as novel plan structures and optimizations, falls to {\em privacy engineers} who possess expertise in differential privacy.}{I wonder how critical it is to introduce these terms in intro.}

\begin{outline}
  \1 explain how \sys responds to the three challenges described above: privacy, utility, and the need to give practitioner's flexibility.
  \1 responses
    \2 utility:
      \3 we need to say that identifying operator classes is a key contribution -- i.e., recognizing that utility optimal monolithic algorithms have similar design patterns (data reduction, query selection, etc.) and they mainly differ in which particular operation is applied.
      ``In \sys, these key subroutines have been abstracted and unified, defined and implemented as operators, and placed into equivalence classes.''
      \3 all state-of-the-art algorithms from DPBench~\cite{Hay16Principled} can expressed as plans in \sys.
    \2 privacy:
      \3 plan author get privacy for free
      \3 system comes with a proof.  this is an improvement on PINQ, which lacks a proof and has been shown to exhibit vulnerabilities\cite{ebadi2016sampling}
      % \3 privacy engineer only needs to vet certain ops; point out that a surprising number of ops are public.
    \2 flexibility
      \3 we need to somehow respond to third challenge.  how does \sys help practitioner ``make a host of complex algorithm choices?''  something about plan space?  perhaps something about how practitioner can take existing ops and recombine in arbitrary ways -- allowing them to invent new algorithms that borrow best ideas from state-of-art but don't require new privacy analysis?
  \1 pleasant consequences of \sys:
    \2 \textbf{modular architecture}. this leads to code reuse as same operator can be used in multiple algorithms.  safety: this means less code to verify correctness of implementation.  amplifies innovation: any improvement to an op is inherited by all plans containing it.
    \2 \textbf{algorithms as plans and op equivalence classes}.  by expressing algorithms as plans with operators from op classes: the differences/similarities of competing algorithms more transparent.  it also makes it easier to explore algorithm modifications.  further, it's possible to identify general rules for restructuring plans (analogous to heuristics in query optimizers).
  \1 as testament to benefits of \sys, we introduce three improvements to state-of-art, improvements that we believe would have been difficult to identify without the \sys framework of plans and operators.  these improvements may be of independent interest.
    \2 general-purpose, efficient and scalable inference engine that subsumes customized inference subroutines in published algorithms.
    \2 improved variant of MWEM.  we took MWEM, expressed as a plan, then replaced a few key ops and achieve lower error (8x)
    \2 dimensionality reduction that can be applied to any workload adaptive algorithm (such as XX, YY).  experiments show it leads to lower error (3x) and faster runtime (5x).
  \1 case studies: not sure how to frame these...
\end{outline}

\eat{
Our contributions include the following:
%such a system will be a key driver for government agencies and commercial entities to adopt differential privacy for analyzing and releasing their datasets.
\begin{enumerate}
\item Our primary contribution is the design of our programming framework and its implementation in the form of \sys. The carefully selected set of operators in \sys provides modularity and expressiveness, encouraging code reuse, exposing high-level primitives for algorithmic transparency, and accelerating algorithm improvement.  In addition,
% we formally prove that any combination of our privacy-critical operators results in a private plan.
we formally prove that any plan expressed using the \sys framework satisfies differential privacy.
(This proof requires a non-trivial extension of a formal analysis of a past framework~\cite{Ebadi17Featherweight}.)

\item Acting in the role of privacy engineer, we build a number of novel algorithm innovations into \sys.  First, we design a more general and significantly more efficient inference engine that scales to far larger domains than previously possible.  Second, we define an operator that can strategically reduce the input domain based on the set of queries of interest. When used in conjunction with common algorithms this operator can reduce error by as much as $3\times$ and runtime by as much as $5\times$.
% Third, we devise a new algorithm similar in structure to the well-known MWEM algorithm, improves accuracy by as much as $8\times$.
Third, we devise a new algorithm that, when expressed as a plan in the \sys framework, has a structure similar to the well-known MWEM algorithm~\cite{hardt2012a-simple} but with a few key operators replaced, which we show empirically leads to error reductions as large as $8\times$.

%\item Inference is a critical but costly operator in previous algorithms.  We propose implementation techniques and optimizations that allow inference to scale   We apply this in \sys by generalizing 8 distinct inference algorithms found in the DPBench codebase, replacing them with an equivalent but much more efficient operator.  \item This sped up XX algorithms by an order of magnitude faster for datasets with large domains. \gm{this one needs work.}

\item We demonstrate the utility of \sys to plan authors through two case studies in which we compose plans tailored to specific applications.  For example, for releasing Census data tabulations, we define a new algorithm that can offer a $10\times$ improvement over the best competitor. For building a private classifier we design algorithms that beat all baselines and approach the classification accuracy of a non-private algorithm.

\end{enumerate}
}

\todo{update this}
The paper is organized as follows. We provide an overview of \sys and highlight its benefits in the next section. After providing background in \cref{sec:background}, we describe the system fully in  \cref{XXX}.  We show the expressiveness of \sys plans in  \cref{sec:plans} by re-implementing existing algorithms.  Algorithmic innovations are described in \cref{sec:innovations} and cases studies examined in  \cref{sec:case_studies}.  The experimental evaluation of \sys is provided in \cref{sec:experiments} and we discuss related work and conclude in \cref{sec:related,sec:conc}. The appendix includes proofs of all theorems, detailed plan descriptions, and supplemental experiments.

}

%!TEX root = ms.tex

\section{Overview and Design Principles} \label{sec:advantages}
In this section we provide an overview of \sys by presenting an example algorithm written in the framework.  Then we discuss the principles guiding the design of \sys.

\subsection{An example plan: CDF estimation}

In \sys, differentially private algorithms are described using \emph{plans} composed over a rich library of \emph{operators}. Most of the plans described in this paper are linear sequences of operators, but \sys also supports plans with iteration, recursion, and branching. Operators supported by \sys perform a well defined task and typically capture a key algorithm design idea from the state-of-the-art.  Each operator belongs to one of five \emph{operator classes} based on its input-output specification. These are: (a) transformation, (b)  query, (c) inference, (d) query selection, and (e) \hl{partition selection}.  Operators are fully described in Sec. \ref{sec:operators} and listed in Fig. \ref{fig:operator_index}.
%The operator classes and the operator instances supported in \sys will be described in detail in the rest of the section.
%
% Command to align comments in algoboxes
\newcommand{\COMMENT}[2][.5\linewidth]{%
  \leavevmode\hfill\makebox[#1][l]{$\triangleright$~#2}}

\renewcommand{\COMMENT}[2][.5\linewidth]{\Comment #2}

\setlength{\textfloatsep}{4pt}

\begin{algorithm}[t]
	\caption{\sys CDF Estimator}

	\label{alg:cdf}
	\begin{algorithmic}[1]
		\State $D \gets$ \algoname{Protected}(source\_uri) \COMMENT{Init}
		\State $D \gets$ \algoname{Where}($D$, sex == `M' AND age $\in [30, 39]$) \COMMENT{Transform}\label{line:where}
		\State $D \gets$ \algoname{select}(salary)	\COMMENT{Transform}\label{line:select}
		\State $\mathbf{x} \gets$ \algoname{T-Vectorize}($D$)		\COMMENT{Transform}\label{line:vec}
		\State $\mathbf{P} \gets$ \ahpOp($\mathbf{x}$, $\epsilon/2$)	\COMMENT{\hl{Partition} Select}\label{line:ahp}
		\State $\bar{\mathbf{x}} \gets$ \algoname{\reduceOp}($\mathbf{x}$, $\mathbf{P}$)				\COMMENT{Transform}\label{line:reduce}
		\State $\mathbf{M} \gets$ \algoname{Identity}($|\bar{\mathbf{x}}|$) 		\COMMENT{Query Select}\label{line:identity}
		\State $\mathbf{y} \gets$ \algoname{VecLaplace}($\bar{\mathbf{x}}$, $\mathbf{M}$, $\epsilon/2$) 	\COMMENT{Query}\label{line:laplace}
		\State $\hat{\mathbf{x}}$ $\gets$ \algoname{NNLS}($\mathbf{P}$, $\mathbf{y}$)			\COMMENT{Inference}\label{line:inference}
		\State $\mathbf{W_{pre}} \gets$ \algoname{Prefix}($|\mathbf{x}|$)	 \COMMENT{Query Select}\label{line:prefix}
		\State
		\Return $\mathbf{W_{pre}} \cdot \hat{\mathbf{x}}$	\COMMENT{Output}\label{line:output}
	\end{algorithmic}
\end{algorithm}

%
% Given an example of a plan
%
First, we describe an example \sys plan and use it to introduce the different operator classes.  \cref{alg:cdf} shows the pseudocode for a plan authored in \sys, which takes as input a table $D$ with schema [Age, Gender, Salary] and returns the differentially private estimate of the empirical cumulative distribution function (CDF) of the Salary attribute, for males in their 30's.  The plan is fairly sophisticated and works in multiple steps.  First the plan uses \emph{transformation} operators on the input table $D$ to filter out records that do not correspond to males in their 30's (\cref{line:where}), selecting only the salary attribute (\cref{line:select}). Then it uses another transformation operator to construct a vector of counts $\mathbf{x}$ that contains one entry for each value of salary. $x_i$ represents the number of rows in the input (in this case males in their 30's) with salary equal to $i$.

%Rather than adding noise to this histogram (to ensure differential privacy), the plan uses the \ahpOp operator \cref{line:ahp} to construct a data dependent partitioning $R$ of the vector. \ahpOp is called a reduction selection operator: it does not partition the vector yet, but chooses a good partitioning based on the data distribution. Since it is data dependent it uses a part of the privacy budget (in this case $\epsilon/2$).

%%%% Domain reduction
Before adding noise to this histogram, the plan uses a \emph{\hl{partition} selection} operator, \ahpOp (\cref{line:ahp}). \hl{Operators in this class choose a partition of the data vector which is later used in a transformation}. \ahpOp uses the sensitive data to identify a \hl{partition $\mathbf{P}$ of the counts in $\mathbf{x}$ such that counts within a partition group are close}. Since \ahpOp uses the input data, it expends part of the privacy budget (in this case $\epsilon/2$). \ahpOp is a key subroutine in AHP \cite{zhangtowards}, which was shown to have state-of-the-art performance for histogram estimation \cite{Hay16Principled}.

Next the plan uses \reduceOp (Line~\ref{line:reduce}), another transformation operator on  $\mathbf{x}$, to apply the partition $\mathbf{P}$ computed by \ahpOp. This results in a new reduced vector $\bar{\mathbf{x}}$ that contains one entry for each partition group in $\mathbf{P}$ and the entry is computed by adding up counts within each group.

% Measurement selection + measurements
The plan now specifies a set of measurement queries $\mathbf{M}$ on $\bar{\mathbf{x}}$ using the \algoname{Identity} \emph{query selection} operator (\cref{line:identity}). The identity matrix corresponds to querying all the entries in $\bar{\mathbf{x}}$ (since $\mathbf{M} \bar{\mathbf{x}} = \bar{\mathbf{x}}$). Query selection operators do not answer any query, but rather specify which queries should be estimated.  (This is analogous to how \hl{partition} selection operators only select a partition but do not apply it.)  Next, \algoname{Vector Laplace} returns differentially private answers to all the queries in $\mathbf{M}$.  It does so by automatically calculating the sensitivity of the vectorized queries -- which depends on all upstream data transformations -- and then using the standard Laplace mechanism (\cref{line:laplace}) to add noise.  This operator consumes the remainder of the privacy budget (again $\epsilon/2$).
% GM: addressed above
%\mh{this isn't simply the LM but rather the approach from the MM paper for calculating sensitivity.}

%and measures under differential privacy  the counts of each element of $\bar{\mathbf{x}}$ in $D$. This happens by invoking \algoname{Identity} a query selection operator (line~\ref{line:identity}) and  \algoname{VecLaplace} (line~\ref{line:laplace}) a query operator. Much like the reduction selection operators, query selection operators do not perform any query, but rather specify what query should be taken. More specifically \algoname{Identity}($k$) creates the $k \times k$ Identity matrix, and in our case $k$ is equal to the number of elements in $\bar{\mathbf{x}}$. The measurement of a query happens with the query operator $\algoname{VecLaplace}$ which consumes a privacy budget (here $\epsilon/2$) and returns differentially private answers to the queries specified in $M$. More specifically, \algoname{VecLaplace}($\mathbf{x}, \mathbf{W}, \epsilon$) returns $\mathbf{y} = \mathbf{W} \cdot \mathbf{x} + \mathbf{z}$ where $\mathbf{z}$ is a vector of $|\mathbf{x}|$ i.i.d. random variables drawn from $\algoname{Lap}(\sigma/\epsilon)$ where $\sigma = \|\mathbf{W}\|_1$.

% Inference + workload computation/output
So far the plan has computed an estimated histogram of partition group counts $\mathbf{y}$, while our goal is to return the empirical CDF on the original salary domain. Hence, the plan uses the noisy counts on the reduced domain $\mathbf{y}$ to infer non-negative counts in the original vector space of $\mathbf{x}$ by invoking an \emph{inference} operator \algoname{NNLS} (short for non-negative least squares) (\cref{line:inference}). \algoname{NNLS}$(\mathbf{P}, \mathbf{y})$ finds a solution, $\hat{\mathbf{x}}$, to the problem $\mathbf{P}\hat{\mathbf{x}} = \mathbf{y}$, such that all  entries of  $\hat{\mathbf{x}}$ are non-negative. Lastly, the plan constructs the set of queries, $\mathbf{W_{pre}}$, needed to compute the empirical CDF (a lower triangular $n \times n$ matrix representing prefix sums) by calling the query selection operator \algoname{Prefix}($n$) (\cref{line:prefix}), and returns the output  (\cref{line:output}).
%$\mathbf{W_{pre}} \cdot \hat{\mathbf{x}}$

\subsection{\sys design principles}  \label{sec:benefits}

The design of \sys is guided by the following principles.  With each principle, we include references to future sections of the paper where the consequent benefits are demonstrated.

%When they write plans in
%\sys, they can express complex DP algorithms (\emph{expressiveness}), the resulting plan structure makes important aspects of the computation apparent (\emph{transparency}), they can re-use existing code for common functionality (\emph{software reuse}), and any plan they write is automatically guaranteed to satisfy differential privacy (\emph{privacy for free}).

%\mhref{Plan authors want to deploy differential privacy technology for a specific task and dataset.  These are not necessarily researchers/experts in differential privacy; they want to access state-of-the-art DP but not necessarily advance it.  Privacy engineers are responsible for designing and implementing core differential privacy technology.  They may be (and often are) researchers in differential privacy.  Work performed by privacy engineers can advance the state of the art.}{not sure this belongs here}

\begin{description} \itemsep 1ex

\item[Expressiveness] \sys is designed to be expressive, meaning that a wide variety of state-of-the-art algorithms can be written succinctly as \sys plans.  To ensure expressiveness, we carefully designed a foundational set of operator classes that cover features commonly used by leading differentially private algorithms. We demonstrate the expressiveness of our operators by showing in \cref{sec:plans} that the algorithms from the recent DPBench benchmark \cite{Hay16Principled} can be readily re-implemented in \sys.

\item[Privacy ``for free'']
\sys is designed so that any plan written in \sys automatically satisfies differential privacy. The formal statement of this privacy property is in \cref{sec:sub:privacy}.  This means that plan authors are not burdened with writing privacy proofs for each algorithm they write.
Furthermore, when invoking privacy-critical operators that take noisy measurements of the data, the magnitude of the noise is automatically calibrated.  As described in \cref{sec:framework}, this requires tracking all data transformations and measurements and using this information to handle each new measurement request.

\item[Reduced privacy verification effort]
% Modular software design and code reuse are basic principles of good software design and are particularly important for the implementation of privacy algorithms, where bugs can result in privacy violations.
%
Ensuring that an algorithm implementation satisfies differential privacy requires verifying that it matches the algorithm specification.  The design of \sys reduces the amount of code that must be vetted each time an algorithm is crafted.  First, since an algorithm is expressed as a plan and all plans automatically satisfy differential privacy, the code to be vetted is solely the individual operators.  Second, operators need to be vetted only once but may be reused across multiple algorithms. Finally, it is not necessary to vet every operator, but only the privacy-critical ones (as discussed in \cref{sec:framework}, \sys mandates a clear distinction between privacy-critical and non-private operators).  This means that verifying the privacy of an algorithm requires checking fewer lines of code.  In \cref{sec:plans}, we compare the verification effort to vet the DPBench codebase\footnote{Available at: \url{https://github.com/dpcomp-org/dpcomp_core}} against the effort required to vet these algorithms when expressed as plans in \sys.

% reuse and \claims{safer} implementations}

% \mh{not necessarily safer; but lower risk because less to verify}
% Reuse is a basic principle of good software design and is particularly important for the implementation of privacy algorithms, where bugs can result in privacy violations.  In \sys, multiple algorithms can reuse the same vetted implementation of critical operators, making testing and verification easier. For instance, we re-implemented state-of-the-art  algorithms from DPBench\footnote{Available at: \url{github.com/dpcomp-org/dpcomp_core}} (described in \cref{sec:plans}), and \mhref{unified 12 implementations of the Laplace mechanism with a single operator}{I think we're missing a distinction here: replacing np.laplace with VectorLaplace means we also eliminated potentially buggy sensitivity calculations}.

%we found 18 distinct implementations of the Laplace mechanism. Re-implementing these algorithms in \sys allowed us to replace these separate implementations with a single operator. In addition, there are only five operators in \sys that consume the privacy budget, for which differential privacy must be verified.  The correctness of other operators (those in the private kernel \gm{this is not yet described}) is also critical, but together these are only a fraction of the operators \sys supports.
\item[Transparency]
In \sys, all algorithms are expressed in the same form: each is a plan, consisting a sequence of operators where each operator is selected from a class of operators based on common functionality.  This facilitates algorithm comparison and makes differences between algorithms more apparent.
In \cref{sec:plans}, we summarize the plan signatures of a number of state-of-the-art algorithms (pictured in \cref{fig:plan_signatures}).  These plan signatures reveal similarities and common idioms in existing algorithms.  These are difficult to discover from the research literature or through code inspection.

\item[Efficiency and Scalability] Many \sys plans compute on data vectors formed from projections of an input table.  The current implementation of \sys relies on storing these vectors in memory on a single machine.  Even under this restriction, it is challenging to get all \sys operators to run efficiently.  Our specialized matrix representation techniques, presented in \cref{sec:implementation}, allow many of the key operators to scale to large data vectors without imposing undue restrictions on plan authors.

\end{description}

% When algorithms are implemented in \sys, their structure and properties \claims{become more transparent} than general-purpose code.  In \cref{sec:plans} we will fully explain the plan signatures, pictured in \cref{fig:plan_signatures}, which reveal similarities and common idioms in existing algorithms.  These are difficult to discover from the research literature or through code inspection.  %For example, two state-of-the-art algorithms (DAWA and AHP), when viewed as \sys plans, differ in precisely two operators: a clustering method and a query selection method. \gm{add plans to appendix?}

%For example, the plan described in \cref{alg:cdf} is equivalent to the AHP algorithm \cite{zhangtowards}.  When described in this form, it is easy to recognize that its structure is the same as the plan describing the DAWA algorithm \cite{?}, with the main difference being the clustering operator (line 5) and the measurement selection operator (line 7).  Changing these two lines in \cref{alg:cdf} results in a plan equivalent to DAWA.  The close relationship between these two algorithms is exceedingly difficult to see when looking at standard code or even algorithm descriptions in the papers describing these two techniques. \gm{add DAWA plan to appendix and refer to it?}

\eat{
\paragraph{\bf \em Accelerated algorithm innovation}
\mh{soften claim: we believe modularity will make it easier for future engineers to innovate.  we outline what those innovations might look like; we also do some innovation in this paper.}
The modular structure of \sys plans \claims{creates opportunities for easier and more systematic algorithm innovation} than is possible with independent monolithic implementations.  We categorize below some of these opportunities; algorithmic innovations that are part of our contributions are described in \cref{sec:innovations}.

%First, when an improved operator implementation is proposed, its benefits are immediately inherited by all plans. Second, by allowing operators within a class to be compared and substituted, new plan variants can easily be constructed. Third, \sys allows global plan improvements to be encoded, either as operators or as design principles to be applied to plans. \\

%  \begin{itemize}
%
%  \setlength{\itemsep}{1pt}
%  \setlength{\parskip}{0pt}
%  \setlength{\parsep}{0pt}
\squishlist
    \item {\em Operator inception} occurs when a new operator is proposed for an operator class.  This includes both a more efficient implementation of an operator that is functionally equivalent to another, or a functionally improved operator in the same class.  Innovation in one operator impacts \emph{all} algorithms that use this operator.

    % As an example of this kind of innovation, we observed that least squares inference plays a role in many state-of-the-art algorithms (\cref{fig:plan_signatures}).  We designed an improved inference operator, described in \cref{sec:improved_inference}.

    \item {\em Recombination} occurs when the overall structure of an algorithm's plan stays the same, but some operator instances are substituted for alternatives within their respective classes.  Recombination has the potential to be automated by searching the plan space implied by a plan structure and a  set of operators. %, or incorporated into meta-algorithms such as Pythia~\cite{kotsogiannis2017pythia}. \gm{this cite will give us away}

    % As an example of recombination, we consider the MWEM algorithm~\cite{hardt2012a-simple} and show it can be improved through recombination (using some operators from \cref{fig:operator_index} as well as a new measurement selection operator).  This is described in~\cref{sec:mwem_variants}.

    \item {\em Plan restructuring} occurs when a plan is systematically restructured by applying a general design principle or heuristic rule. Such heuristics could be derived from unpublished ``rules of thumb'' that are known by experts but rarely noted explicitly.

    % As an example of this kind of innovation, we propose rule that can be applied to any plan that involves answering a workload of queries over a data vector.  The rule can be used to reduce the size of the workload and vector, lowering runtime.  We also prove that apply this restructuring can never increase error.  This is described in detail in \cref{sec:sub:wkld}.
%\end{itemize}
\squishend

Plan authors can participate in algorithm innovation through recombination and plan restructuring, enjoying a guarantee of privacy for newly generated plans. Operator inception would likely be done by privacy engineers.  %Automated plan restructuring would be a very useful tool for plan authors (but is left as a future extension to \sys).
}
%\mh{move innovation stuff, eaten above, to start of section 6}

We believe that \sys, by supporting the design principles described above, provides an improved platform for designing and deploying differentially private algorithms.
%We believe the benefits of this platform extend to both {\em plan authors} and {\em privacy engineers}.

%Recall that plan authors are not experts in differential privacy. Nevertheless, they may have some technical sophistication and are likely to know well their intended data analysis task.  \sys provides an {\em expressive} platform that allows them to write complex algorithms with {\em guaranteed privacy}, while benefiting from the {\em reuse of code} for common functionality.  As a result, we believe \sys makes it easier for plan authors to design new differentially private algorithms for their specific task and dataset and to achieve (or approach) state-of-the-art accuracy and efficiency.  We present several case studies in \cref{sec:case_studies} illustrating how \sys might be used by plan authors.

%Privacy engineers (including privacy researchers) may also benefit from the design of \sys.  They are interested in advancing the state-of-the-art and developing reliable, vetted implementations of algorithms. The modular structure of \sys plans and the extensive reuse of operators across algorithms leads lowers the \emph{privacy verification burden}. Furthermore, the modular structure is extensible and may create opportunities for systematic and accelerated {\em algorithm innovation}.  We present a number of algorithmic contributions in \cref{sec:innovations} which extend the frontier of accuracy and performance for \sys plans.

%!TEX root = ms.tex
\section{Preliminaries} \label{sec:background}

%\subsection{Data model and queries}

%The goal of \sys\ is to answer a collection of statistical counting queries over a single relational table.  Formally, we model these queries as {\em linear counting queries} but they include many familiar query-sets that are used to analyze, model, or summarize data.  For example, each of the following are sets of linear counting queries: histograms, contingency tables, marginals, data cubes, group-by queries with counting aggregation.  \sys\ is designed to answer an arbitrary {\em workload} of linear counting queries, which may be formed from combinations of the above.  Workloads may be expressed in SQL (e.g., for group-by or data cubes), in statistical analysis languages (like R, SASS), or derived from the sufficient statistics of a model of interest.  The interface for specifying the workload is not our focus here and we simply assume \sys\ receives an encoding of the workload queries conforming to the schema of the input relation.

The input to \sys is a database instance of a single-relation schema
% $T(\aa)$, with attributes $\aa=\{A_1, A_2, \ldots, A_{\ell}\}$.
$T(A_1, A_2, \ldots, A_{\ell})$.
Each attribute $A_i$ is assumed to be discrete (or suitably discretized).  A {\em condition formula}, $\phi$, is a Boolean condition that can be evaluated on any tuple of $T$. We use $\phi(T)$ to denote the number of tuples in $T$ for which $\phi$ is true.  A number of operators in \sys answer linear queries over the table. A linear query is the linear combination of any finite set of condition counts:
\begin{definition}[Linear counting query (declarative)]  A linear query $q$ on $T$ is defined by conditions $\phi_1 \dots \phi_k$ and coefficients $c_1 \dots c_k \in \mathbb{R}$ and returns $q(T)=c_1 \phi_1(T) + \dots + c_k \phi_k(T)$.
\end{definition}
It is common to consider a vector representation of the database, denoted $\mathbf{x} = [x_1 \dots x_n]$, where $x_i$ is equal to the number of tuples of type $i$ for each possible tuple type in the relational domain of $T$.  The size of this vector, $n$, is the product of the attribute domains.
Then it follows that any linear counting query has an equivalent representation as a vector of $n$ coefficients, and can be evaluated by taking a dot product with $\mathbf{x}$.  Abusing notation slightly, let $\phi(i) = 1$ if $\phi$ evaluates to true for the tuple type $i$ and 0 otherwise.
\begin{definition}[Linear counting query (vector)]
For a linear query $q$ defined by $\phi_1 \dots \phi_k$ and $c_1 \dots c_k$, its equivalent vector form is $\q = [q_1 \dots q_n]$ where $q_i = c_1 \phi_1(i) + \dots + c_k \phi_k(i)$. The evaluation of the linear query is $\q \cdot \mathbf{x}$, where $\mathbf{x}$ is vector representation of $T$.
\end{definition}
In the sequel, we will use vectorized representations of the data frequently. We refer to the {\em domain} as the size of $\mathbf{x}$, the vectorized table.  This vector is sometimes large and a number of methods for avoiding its materialization are discussed later.

%\subsection{Differential privacy}
Let $\db$ and $\db'$ denote two tables of the same schema, and let $\db \oplus \db' = (\db - \db') \cup (\db' - \db)$ denote the symmetric difference between them. We say that $\db$ and $\db'$ are neighbors if $|\db \oplus \db'| = 1$.

\begin{definition}[Differential Privacy~\cite{dwork2006calibrating}] \label{def:diffp}
A randomized algorithm $\algG$ is $\epsilon$-differentially private if for any two instances $\db$, $\db'$ such that $|\db \oplus \db'| = 1$, and any subset of outputs $S \subseteq Range(\algG)$,% the following holds:
\[
Pr[ \algG(\db) \in S] \leq \exp(\epsilon) \times Pr[ \algG(\db') \in S]
\]
\end{definition}
% That is, for neighboring databases, the log-odds of the output being in any $S \subseteq Range(\algG)$ is bounded by $\epsilon$.

%% SHORTER VERSION
Differentially private algorithms can be composed with each other and other algorithms using composition rules, such as sequential and parallel composition~\cite{mcsherry2009pinq} and post-processing~\cite{Dwork14Algorithmic}.
Let $f$ be a function on tables that outputs real numbers. The {\em sensitivity} of the function is defined as: $max_{|\db \oplus \db'|=1} |f(\db) - f(\db')|$.

%% LONGER VERSION

% Differentially private algorithms satisfy the following three composition rules:
% \begin{itemize}
%   \setlength{\itemsep}{1pt}
%   \setlength{\parskip}{0pt}
%   \setlength{\parsep}{0pt}
% \item {\em Sequential Composition}: If $\algG$ and $\algG'$ are $\epsilon_1$ and $\epsilon_2$ differentially private algorithms that use independent randomness, then composing $\algG$ and $\algG'$ ensures $\epsilon_1+\epsilon_2$ differential privacy.
% \item {\em Parallel Composition}: Executing an $\epsilon$ differentially private algorithm simultaneously on disjoint subsets of a dataset results in $\epsilon$ differential privacy.
% \item {\em Postprocessing}: Running any algorithm on only the output of a differentially private algorithm does not degrade  privacy.
% \end{itemize}

\begin{definition}[Stability] \label{def:stabiliy}
Let $g$ be a transformation function that takes a data source (table or vector) as input and returns a new data source (of the same type) as output.
% The data source can be either a table or a vector.
For any pair of sources $S$ and $S'$ let $|S \oplus S'|$ denote the distance between sources.  If the sources are both tables, then this distance is the size of the symmetric difference; if the sources are both vectors, then this distance is the $L_1$ norm; if the sources are of mixed type, it's undefined.
% If the source is a table, let $\db \oplus \db'$ denote the symmetric difference between the two tables.
Then the stability of $g$ is: $\max_{S, S': |S \oplus S'| = 1} |g(S) \oplus g(S')|$. When the stability of $g$ is at most $c$ for some constant $c$, we say that $g$ is $c$-stable.

% If  the source is a vector, let the distance $\oplus$ denote the $L_1$ norm, $\x \oplus \x' = ||\x - \x'||_1$, and the stability of $g$ is
% \begin{equation}
% \max_{\x, \x': \x \oplus \x' = 1} g(\x) \oplus g(\x')
% \end{equation}
\end{definition}

\section{Execution Framework And Privacy Enforcement} \label{sec:framework}

This section describes the execution environment and then formalizes the claim that any program executed in \sys satisfies differential privacy.
%
%
% This section describes the components of \sys:  %It starts by describing
% the execution environment~(\cref{sec:kernel}), consisting of an untrusted client space and a \pk, where the private data is located; %.  It then describes
% the operators (\cref{sec:sub:operators}), which are grouped into classes based on their functionality and assigned one of three types based on their interactions with the \pk; and finally, the formal privacy guarantee (\cref{sec:sub:privacy}).

\subsection{Protected Kernel and Client Space} \label{sec:kernel}

% Recall that a private computation is defined to be a {\em plan} consisting of a sequence of {\em operators}.

The execution framework consists of an untrusted client space and a \pk that encloses the private data.  An \sys program, which we call a plan, runs in the unprotected client space.
When the plan needs to interact with the private data, it does so through privileged operators that can issue requests to the \pk.  Such operators may, for example, request that \pk apply a data transformation or perhaps return a noisy measurement.
The \pk
%encapsulates protected data sources and
services requests from privileged operators, only executing them if their cost is within the available privacy budget.
The distinction between the client space and the \pk is a fundamental one in \sys.  It allows authors to write plans that consist of operator calls embedded in otherwise arbitrary code (which may freely include conditionals, loops, recursion, etc.).

The \pk is initialized by specifying a single protected data object---an input table $\db$---and a global privacy budget, which we denote as $\budget$.  Note that requests for data transformations may cause the \pk to derive additional data sources.  Thus, the \pk maintains a {\em data source environment}, which consists of a mapping between data source variables, which are exposed to the client, and the protected data objects, which are kept private.  In addition, the data source environment tracks the transformation lineage of each data source.  It also maintains the stability of each transformation (defined in~\cref{sec:background}).
%
% The existence of data source variables, and appropriate metadata, is exposed to the client space, but the protected data itself is never exposed.  Operators at the client can referencing the data sources.
Note that in describing operators (\cref{sec:operators}), we speak informally of operators having data sources as inputs and outputs rather than data source {\em variables}.  A layer of indirection is always maintained in the implementation but sometimes elided in our descriptions to simplify the presentation.

\subsection{Operator types} \label{sub:op_types}

Operators have one of three types, based on their interaction with the \pk.  The first type is a \textbf{\privOp} operator, which requests that the \pk perform some action on the private data (e.g., a transformation) but receives only an acknowledgement that the operation has been performed.  The second type is a \textbf{\privPubOp} operator, which receives information about the private data (e.g., a measurement) and thus consumes privacy budget.  The last type is a \textbf{\pubOp} operator, which does not interact with the \pk at all and can be executed entirely in client space.  An example of a public operator would operators that perform inference on the noisy measurements received from the \pk.
When describing operators in~\cref{sec:operators}, we color code them based on their type.

\subsection{Privacy Guarantee} \label{sec:sub:privacy}

\def\tran{r_k}
\def\tranvar{R_k}
\def\alltran{\mathcal{R}_k}

In this section, we state the privacy guarantee offered by \sys.  Informally, \sys  ensures that if the \pk is initialized with a source database $\db$ and a privacy budget $\budget$, then any plan (chosen by the client) will satisfy $\budget$-differential privacy with respect to $\db$.  Note that if the client exhausts the privacy budget, subsequent calls to {\privPubOp} operators will return an exception, indicating that they are not permitted.  Importantly, an exception itself does not leak sensitive information -- i.e., the decision to return an exception does not depend on the private state.
%the system is designed so that
% Since a plan can contain arbitrary client code, our claim of privacy focuses exclusively on the interactions with the \pk via its API.

% A {\em transcript} is a sequence of operator calls and their responses: $\tran = \langle op_1, a_1, \dotsc, op_k, a_k \rangle$ denotes a length $k$ sequence where $op_i$ is an operator call and $a_i$ its response.
A {\em transcript} is a sequence of operator calls and their responses. Formally, let $\tran = \langle op_1, a_1, \dotsc, op_k, a_k \rangle$ denote a length $k$ sequence where $op_i$ is an operator call and $a_i$ the response.
% When the function call is a transform operation, then the response is a special symbol $\circ$ to indicate that the transformation has taken place.
%
We assume that the value of $op_i$ is a deterministic function of $a_1, \dots, a_{i-1}$.
%
%; however, the claim can be extended to support randomized client code.
We use $\tranvar = \tran$ to denote the event that the first $k$ operations result in transcript $\tran$.
Let $\alltran$ be the set of all possible transcripts of length $k$.  We assume that all {\privPubOp} operators output values from an arbitrary, but finite set.  Thus, the set of possible transcripts is finite.
%
% Let $\init(\db, \epsilon)$ denote the action of initializing the \pk with private $\db$ and budget $\epsilon$.
%
Let $P(\tranvar = \tran \;|\; \init(\db, \budget))$ be the conditional probability of event $\tranvar = \tran$ given that the system was initialized with input $\db$ and a privacy budget of $\budget$.
\begin{theorem}[Privacy of \sys plans] \label{thm:privacy_of_plan_transcripts}
Let $\db, \db'$ be any two instances such that $|\db \oplus \db'| = 1$.  For all $k \in \mathbb{N}^+$ and $\tran \in \alltran$,
% if $P(\tranvar = \tran \;|\; \init(\db)) = p$, then $P(\tranvar = \tran \;|\; \init(\db')) \leq \exp(\epsilon) \times p$.
{\small\[
P(\tranvar = \tran \;|\; \init(\db, \budget)) \leq \exp(\budget) \times P(\tranvar = \tran \;|\; \init(\db', \budget)).
\]}
\label{thm:plan_privacy}
\end{theorem}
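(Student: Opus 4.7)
The plan is to prove the theorem by induction on the transcript length $k$, leveraging the stability-based privacy accounting maintained by the \pk. The base case $k=0$ is immediate since both conditional probabilities are $1$. For the inductive step, I would factor the transcript probability using the chain rule; since each $op_i$ is, by assumption, a deterministic function of the prior responses $a_1,\dots,a_{i-1}$, it suffices to bound
\[
\prod_{i=1}^{k} \frac{P(A_i = a_i \mid \tranvar_{i-1}=\tran_{i-1},\, \init(\db,\budget))}{P(A_i = a_i \mid \tranvar_{i-1}=\tran_{i-1},\, \init(\db',\budget))} \le \exp(\budget).
\]

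Next I would do a case analysis by operator type (\cref{sub:op_types}). For a \pubOp operator the response distribution is independent of the protected state, so the corresponding factor equals $1$. For a \privOp operator the response is only an acknowledgment (or a budget-exhaustion exception); since the kernel's accounting depends only on the client-visible transcript and not on $\db$, this factor is also $1$. Only \privPubOp calls contribute a nontrivial ratio.

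For each \privPubOp call, the essential step is to translate the $\epsilon_i$-differential-privacy guarantee of the underlying mechanism (as evaluated on the derived data source) into a guarantee with respect to the original $\db$ versus $\db'$. The \pk records the transformation lineage of every data source, so if a source $S$ was produced by composing transformations of stabilities $c^{(1)},\dots,c^{(j)}$ from $\db$, then by \cref{def:stabiliy} (applied compositionally) its neighbors satisfy $|S(\db)\oplus S(\db')|\le c_i := \prod_\ell c^{(\ell)}$. A standard group-privacy argument then upgrades the mechanism's $\epsilon_i$-DP into a $c_i\epsilon_i$ bound on the $i$-th ratio. The kernel is designed to reject any call whose scaled cost $c_i\epsilon_i$ would push $\sum_{j\le i} c_j\epsilon_j$ over $\budget$, so telescoping the product yields $\exp(\sum_i c_i \epsilon_i) \le \exp(\budget)$, as required.

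The main obstacle is formalizing stability propagation across \sys's mixed data model, where transformations may map tables to tables, tables to vectors, or vectors to vectors under different distance metrics (symmetric difference versus $L_1$), and showing that the kernel's lineage bookkeeping composes these stabilities soundly. A secondary subtlety is justifying the determinism assumption on $op_i$: conditionals, loops, and recursion in client space are permitted, but the branching must be a function of the transcript alone, which holds because the client has no side channel into $\db$. Finally, I must verify that the budget-exhaustion exception itself is non-leaky, which reduces to observing that the remaining budget and the accumulated stability factors are deterministic functions of $\tran_{i-1}$ rather than of the protected state — this is precisely what the client/\pk separation in \cref{sec:kernel} is engineered to guarantee, and it is the main point where the argument extends the Featherweight PINQ analysis of \cite{Ebadi17Featherweight} to accommodate vector-valued transformations and their stabilities.
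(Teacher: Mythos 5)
There is a genuine gap. Your accounting is purely sequential: you scale each \privPubOp call's $\epsilon_i$ by the composed stability $c_i$ of its lineage and then argue that the kernel keeps the running sum $\sum_i c_i\epsilon_i$ below $\budget$. That is essentially the Featherweight PINQ argument, and it is not how \sys's \pk actually manages the budget, nor does it cover the case the paper's proof was written to handle: the \partitionOp operator. The kernel's budget-request procedure (\cref{alg:requests}) charges a partition node only the \emph{excess} of the requesting child's consumption over the current maximum across siblings, i.e.\ it implements parallel composition, so for plans that measure different partition blocks the naive sum $\sum_i c_i\epsilon_i$ can legitimately exceed $\budget$ while the true privacy loss does not. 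Your telescoping step $\exp(\sum_i c_i\epsilon_i)\le\exp(\budget)$ therefore rests on a premise about the kernel (``reject any call whose scaled cost would push the sum over $\budget$'') that is false for the system being analyzed; as written, your proof either fails on partition-using plans or proves the theorem only for a strictly more restrictive budget manager.

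The missing idea is a structured invariant over the transformation graph rather than a flat per-call decomposition. The paper proves, by induction in reverse topological order (\cref{lemma:tree}), that for every source variable $sv$ the product of query-answer probabilities in its subtree is bounded by $\exp\bigl(B(sv)\times|E(sv)\oplus E'(sv)|\bigr)$ times the neighboring product, where $B(sv)$ is the kernel's tracked consumption. The ordinary-transformation case of that induction is your stability-scaling argument; the new work is the partition-variable case, which uses the fact that the $L_1$/symmetric-difference distance is \emph{additive} across the disjoint children of a partition, so charging the maximum child budget $m=\max_c B(c)$ suffices: $\sum_c B(c)\,|E(c)\oplus E'(c)|\le m\sum_c|E(c)\oplus E'(c)|=m\,|E(sv)\oplus E'(sv)|$. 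This lemma, combined with a configuration-similarity invariant maintained by induction on the number of operations, is what lets the kernel's refined accounting be declared sound. Your remaining points (determinism of $op_i$, non-leaky exceptions, per-source group-privacy scaling of an $\epsilon$-DP query by the source distance) match the paper, but without the graph-structured lemma and the partition case the proof does not establish \cref{thm:privacy_of_plan_transcripts} for \sys as specified.
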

% \begin{theorem}[Privacy of \sys plans] \label{thm:privacy_of_plan_transcripts}
% Let $\db, \db'$ be any two instances.  Let $k$ be any whole number.  For all $\tran \in \alltran$,
% % if $P(\tranvar = \tran \;|\; \init(\db)) = p$, then $P(\tranvar = \tran \;|\; \init(\db')) \leq \exp(\epsilon) \times p$.
% \[
% P(\tranvar = \tran \;|\; \init(\db, \epsilon)) \leq \exp(\epsilon \times |\db \oplus \db'|) \times P(\tranvar = \tran \;|\; \init(\db', \epsilon)).
% \]
% \end{theorem}
The proof of \cref{thm:plan_privacy}, which appears in the sequel, extends the proof in~\cite{Ebadi17Featherweight} to support the \partitionOp operator.

While \sys ensures differential privacy, private information could be leaked via side-channel attacks (e.g., timing attacks).  Privacy engineers who design operators are responsible for protecting against such attacks; an analysis of this issue is beyond the scope of this paper.
% \am{If the operators themselves have no side-channels, will our system have side-channels? If not, then we can say that, and also say the burden of protecting against side channels is placed on the operators.}

%!TEX root = paper.tex

\subsection{Privacy Proof} \label{sec:privacy_proof}

This section presents a proof of~\cref{thm:privacy_of_plan_transcripts}.  We start by introducing some supporting concepts and notation.  (Some notation is adapted from~\cite{Ebadi17Featherweight}.)

% The \pk state $\api$ consists of the following components:
%\begin{itemize}
\paragraph{Information tracked by the \pk} The \pk maintains the following state, which we denote as $\api$:
\squishlist
	\item A set of source variables $SV$.
	\item A data source environment $E$ maps each source variable $sv \in SV$ to an actual data source $S$, as in $E(sv) = S$.  (Recall that sources can be tables or vectors.)
	\item A transformation graph: the nodes are $SV$ and there is an edge from $sv$ to $sv'$ if $sv'$ was derived via transformation from $sv$.  (Note: a partition transformation introduces a special dummy data source variable whose parent is the source variable being partitioned and whose children are the variables associated with each partition.)
	\item A stability tracker $St$ maps each source variable $sv \in SV$ to a non-negative number: $St(sv)$ represents the stability (\cref{def:stabiliy}) of the transformation that derived data source $sv$ from the initial source, or 1 if $sv$ is the initial source.
	\item A budget consumption tracker $B$ that maps each source variable $sv \in SV$ to a non-negative number: $B(sv)$ represents the total budget consumption made by queries to $sv$ or \emph{to any source derived from $sv$}.
	\item A query history $\QH$ that maps each source variable to information about the state of queries asked about $sv$ or any of its descendants.  Specifically, for $sv \ in SV$, $\QH(sv)$ returns of a set of tuples $(q, s, \sigma, v)$ where the meaning of the tuple is that query $q$ was executed on data source $s$ (which is $sv$ or one of its descendants) with $\sigma$ noise, the result was $v$.  In the context of the proof a query is any \privPubOp operator.  Such an operator is assumed to satisfy $\epsilon$-differential privacy with respect to the data source on which it is applied.
	\item The global privacy budget, denoted $\budget$.
\squishend
%\end{itemize}

When the \pk is initialized, as in $\init(T, \budget)$, it sets global budget to $\budget$, creates new source variable $sv_{root}$, sets $ \allowbreak E(sv_{root}) = T$, sets $St(sv_{root}) = 1$, and $B(sv_{root}) = 0$, and adds $sv_{root}$ to the transformation graph.

\paragraph{Budget Management}
When a query request is issued to the \pk, the \pk uses \cref{alg:requests} to check whether the query can be answered given the available privacy budget.

\newcommand{\peq}{\mathrel{+}=}
\begin{algorithm}
\caption{An algorithm for budget requests}\label{alg:requests}
\begin{algorithmic}[1]
\Procedure{Request}{$sv$, $\sigma$}
\If {$sv$ is the root}
\State If $B(sv) + \sigma > \budget$, return \textsc{False}.  Otherwise $B(sv) \peq \sigma$ and return \textsc{True}.
\ElsIf {$sv$ is a partition variable}
\State Let $sv_{child}$ be the child from which the request came..
% \State Let $m = \max_{c \in \text{children}(sv)} B(sv_{c})$
\State Let $r = \max \set{ B(sv_{child}) + \sigma - B(sv), 0}$
\State Let $ans =$ \textsc{Request}(parent($sv$), $r$).
\State If $ans = $ \textsc{False}, return \textsc{False}.  Otherwise, $B(sv) \peq r$ and return \textsc{True}.
\Else
% \For {each parent $sv_{i}$ of $sv$}
\State $ans =$ \textsc{Request}($parent(sv)$, $s \cdot \sigma$)
\Comment{$s$ is stability factor of $sv$ wrt its parent}
\State if $ans = $ \textsc{False}, return \textsc{False}.
% \EndFor
\State $B(sv) \peq \sigma$.  Return \textsc{True}.
\EndIf
\EndProcedure
\end{algorithmic}
\end{algorithm}

\paragraph{Configurations}
A configuration, denoted $\C = \braket{\client, \api}$, captures the state of the client, denoted $\client$, and the state of the \pk, denoted $\api$.  The client state can be arbitrary, but state updates are assumed to be deterministic.

We can define the similarity of two configurations $\C$ and $\C'$ as follows.  (Notation: we use $X'$ to refer to component $X$ of configuration $\C'$.) We say that $\C \sim \C'$ iff $\client = \client'$ and $\api' \sim \api'$ where $\api \sim \api'$ iff $SV = SV'$ and the transformation graphs are identical and for each $sv \in SV$ the following conditions hold:
%\begin{itemize}
\squishlist
	\item $St(sv) = St'(sv)$, $B(sv) = B'(sv)$, $\QH(sv) = \QH'(sv)$, and $\budget = \budget'$.
	\item $|E(sv) \oplus E'(sv)| \leq St(sv) = St'(sv)$ where $|x \oplus y|$ is measured as symmetric difference when the sources $x$ and $y$ are tables and $L_1$ distance for vectors; see \cref{def:stabiliy}.)
	% \item Let $E(sv) = S$ and $E'(sv) = S'$ and $St(sv) = St'(sv) = s$.  Then it must be that $|S \oplus S'| \leq s$ where $|S \oplus S'|$ is measured as symmetric difference when the sources are tables and $L_1$ distance for vectors; see \cref{def:stabiliy}.)
\squishend
%\end{itemize}

We introduce a lemma that bounds the difference probability between query answers.  Let $P(q(E(s), \sigma) = v)$ denote the probability that query operator $q$ when applied to data source $E(s)$ with noise $\sigma$ returns answer $v$.
% Let $| E(s) \oplus E'(s)|$ denote the distance between the respective data sources in $\api$ and $\api'$ referred to by variable $s$.

\begin{lemma} \label{lemma:tree}
Let $\C \sim \C'$.  For any $sv \in SV$ with non-empty $\mathcal{Q}(sv)$, the following holds:
{ \footnotesize
\begin{align} \label{eqn:invar}
&\prod_{(q, s, \sigma, v) \in \QH(sv)}
P( q(E(s), \sigma) = v)
\\
&\leq
\exp( B(sv) \times | E(sv) \oplus E'(sv)|) \times
\prod_{(q, s, \sigma, v) \in \QH'(sv)}
P( q(E'(s), \sigma) = v) \nonumber
\end{align}
}
\end{lemma}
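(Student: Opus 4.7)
My plan is to prove the lemma by structural induction on the transformation subtree rooted at $sv$, from leaves upward, with the invariant \eqref{eqn:invar} itself as the inductive hypothesis. At each node I decompose $\QH(sv)$ into the direct queries issued at $sv$ and the subhistories $\QH(sv_c)$ of each child $sv_c$, apply group privacy to the direct queries and the inductive hypothesis to each child, and then show that the resulting exponent is at most $B(sv)\cdot|E(sv)\oplus E'(sv)|$ using (a) the stability of each transformation edge, (b) the condition $|E(sv)\oplus E'(sv)|\leq St(sv)$ inherited from $\C\sim\C'$, and (c) the bookkeeping semantics of the \textsc{Request} algorithm, which is constructed precisely to accumulate the right quantity into $B(sv)$. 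Because $\C\sim\C'$ guarantees $\QH=\QH'$ and the same transformation graph in both configurations, the two products in \eqref{eqn:invar} range over identical sets of $(q,s,\sigma,v)$, so only the data-dependent factors $P(q(E(s),\sigma)=v)$ versus $P(q(E'(s),\sigma)=v)$ need be compared.

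\emph{Base case.} If $sv$ is a leaf then every tuple in $\QH(sv)$ has the form $(q_j,sv,\sigma_j,v_j)$ with each $q_j$ being $\sigma_j$-differentially private on its input. Group privacy gives $P(q_j(E(sv),\sigma_j)=v_j)\leq\exp(\sigma_j\cdot|E(sv)\oplus E'(sv)|)\,P(q_j(E'(sv),\sigma_j)=v_j)$; taking the product over $j$ and using $\sum_j\sigma_j=B(sv)$, which follows from the root-branch or the else-branch of \textsc{Request} depending on whether $sv$ is root, closes the case.

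\emph{Inductive step, non-partition children.} For each non-partition child $sv_c$ of stability $s_c$ (the same in both configurations because $\C\sim\C'$), the definition of stability gives $|E(sv_c)\oplus E'(sv_c)|\leq s_c\cdot|E(sv)\oplus E'(sv)|$. The inductive hypothesis applied at $sv_c$ then contributes $\exp(B(sv_c)\cdot|E(sv_c)\oplus E'(sv_c)|)\leq \exp(s_c B(sv_c)\cdot|E(sv)\oplus E'(sv)|)$. Summing these contributions together with the direct-query contributions yields an exponent whose coefficient on $|E(sv)\oplus E'(sv)|$ is exactly the quantity that the else-branch of \textsc{Request} has accumulated into $B(sv)$, namely $\sum_{\text{direct}}\sigma_j+\sum_{c\text{ non-part}} s_c B(sv_c)$.

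\emph{Main obstacle: the partition case.} Here \textsc{Request} updates a partition variable's budget by $r=\max\set{B(sv_{child})+\sigma-B(sv),0}$ rather than by summation, reflecting parallel rather than sequential composition. I plan to handle it using two auxiliary facts: (i) a partition transformation satisfies $\sum_i|E(sv_i)\oplus E'(sv_i)|\leq|E(sv_{dummy})\oplus E'(sv_{dummy})|$, because each record in the symmetric difference falls into exactly one group; and (ii) after any finite sequence of \textsc{Request} calls, $B(sv_{dummy})=\max_i B(sv_i)$, proved by a side induction on the number of requests processed. Since $E(sv_{dummy})=E(sv_{parent})$, combining the inductive hypothesis at each $sv_i$ with (i) and (ii) gives $\sum_i B(sv_i)\cdot|E(sv_i)\oplus E'(sv_i)|\leq \max_i B(sv_i)\cdot\sum_i|E(sv_i)\oplus E'(sv_i)|\leq B(sv_{dummy})\cdot|E(sv_{dummy})\oplus E'(sv_{dummy})|$, which is precisely what the invariant requires at the dummy node; and because the dummy's parent sees $B(sv_{dummy})$ worth of budget propagated up via its (stability-$1$) edge, the induction continues smoothly. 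The most delicate bookkeeping is verifying invariant (ii) in the presence of interleaved adaptive requests from different partition children, which is where the $\max$ update rule must be shown to remain tight against any execution order.
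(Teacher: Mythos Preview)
Your proposal is correct and follows essentially the same argument as the paper: induction in reverse topological order on the transformation graph, group privacy at the leaves, stability of transformation edges for the non-partition inductive step, and for the partition dummy the combination of $\sum_i |E(sv_i)\oplus E'(sv_i)| = |E(sv_{dummy})\oplus E'(sv_{dummy})|$ with $B(sv_{dummy})=\max_i B(sv_i)$. The paper glosses over your fact~(ii), simply asserting that it ``follows from how $B(sv)$ is updated according to \cref{alg:requests}''; your explicit side induction on the request sequence is a welcome clarification rather than a departure.
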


\begin{proof}
Proof by induction on a reverse topological order of the transformation graph.

\emph{Base case}: Consider a single $sv$ at the end of the topological order (therefore it has no children).  If $\QH(sv)$ is empty, it holds trivially.  Assume non-empty.  Consider any $(q, s, \sigma, v) \in \QH(sv)$.  Since $sv$ has no children, then $s = sv$.  Furthermore, because the only budget requests that apply to $sv$ are from direct queries, we have (according to \cref{alg:requests}),
$ B(sv) = \sum_{(q, s, \sigma, v) \in \QH(sv)} \sigma$.
Since we assume that any query operator satisfies $\epsilon$-differential privacy with respect to its source input, we have $P(q(E(s), \sigma) = v) \leq P(q(E'(s), \sigma) = v) \times \exp( \sigma \times | E(s) \oplus E'(s)|)$.  Substituting $sv$ for $s$ and taking the product over all terms in $\QH(sv)$, we get \cref{eqn:invar}.

\emph{Inductive case}: Assume \cref{eqn:invar} holds for all nodes later in the topological order.  Therefore it holds for any child $c$ of $sv$.  We can combine the inequalities for each child into the following inequality over all children,
{
\footnotesize
\begin{align*} %\label{eqn:children}
&\prod_{c \in \text{children}(sv)}
\prod_{(q, s, \sigma, v) \in \mathcal{Q}(c)}
P( q(E(s), \sigma) = v) \\
&\leq
\prod_{c \in \text{children}(sv)}
\exp( B(c) \times | E(c) \oplus E'(c)|) \times
\prod_{(q, s, \sigma, v) \in \mathcal{Q}(c)}
P( q(E'(s), \sigma) = v) \\
&=\exp\left( \sum_{c \in \text{children}(sv)} B(c) \times | E(c) \oplus E'(c)| \right)\\
 & \qquad \times
\prod_{c \in \text{children}(sv)}
\prod_{(q, s, \sigma, v) \in \mathcal{Q}(c)}
P( q(E'(s), \sigma) = v)
\end{align*}
}
There are two cases, depending what type of table variable $sv$ is.

First, consider the case when $sv$ is {\em not} a special partition variable.
We know by transformation stability that $| E(c) \oplus E'(c)| \leq s \times | E(sv) \oplus E'(sv)|$ where $s$ is the stability factor for the transformation.
% \mh{this is where the unary transformation assumption comes in}.
In addition, $\sum_{c} B(c) \times s \leq B(sv)$ because, according to \cref{alg:requests}, every time a request of $\sigma$ is made to child $c$, a request of $s \times \sigma$ is made to $sv$.
Therefore,
{
\footnotesize
\begin{align*}
\sum_{c \in \text{children}(sv)} B(c) \times | E(c) \oplus E'(c)|
&\leq  \sum_{c \in \text{children}(sv)} B(c) \times s \times | E(sv) \oplus E'(sv)|  \\
&\leq B(sv) \times | E(sv) \oplus E'(sv)|
\end{align*}
}
Furthermore, observe that each term in $(q, s, \sigma, v) \in \QH(c)$ also appears in $\QH(sv)$.  In addition, $\QH(sv)$ includes any queries on $sv$ directly (and we know from an argument similar to the base case that \cref{eqn:invar} holds for these queries).
% \mh{this needs to be made more precise: $B(sv)$ is equal to the sum of local queries on $sv$ and the sum of children's budgets, appropriately scale}.
Therefore \cref{eqn:invar} holds on $sv$.

Now, consider the case where $sv$ is the special partition variable.  Let $m = \max_{c} B(c)$.  We get the following
\begin{eqnarray*}
\lefteqn{\sum_{c \in \text{children}(sv)} B(c) \times | E(c) \oplus E'(c)|
\leq  \sum_{c \in \text{children}(sv)} m \times | E(c) \oplus E'(c)|}  \\
&&= m \times \sum_{c \in \text{children}(sv)} | E(c) \oplus E'(c)|
= m \times | E(sv) \oplus E'(sv)| \\
&&= B(sv) \times | E(sv) \oplus E'(sv)|
\end{eqnarray*}
The second to last line follows from the fact that $sv$ is partition transformation.  The last line follows from how $B(sv)$ is updated according \cref{alg:requests}.
\end{proof}

\paragraph{Main Proof}
We use $\Czero(T, \budget, P_0)$ to denote the initial configuration in which the \pk has been initialized with $\init(T, \budget)$ and the client state is initialized to $P_0$.
We use the notation $\Czero(T, \budget, P_0) \stackrel{t}{\Rightarrow}_p \C$ to mean that starting in $\Czero$ after $t$ operations, the probability of being in configuration $\C$ is $p$.

\begin{theorem} \label{thm:detailed_privacy}
If $T \sim_1 T'$ and $\Czero(T, \budget, P_0) \stackrel{t}{\Rightarrow}_p \C$ such that $B(sv_{root}) = \epsilon$ in $\C$, then $\epsilon \leq \budget$ and there exists $\C'$ such that $\Czero(T', \budget, P_0)  \stackrel{t}{\Rightarrow}_q \C'$ where $\C \sim \C'$ and $p \leq q \cdot \exp(\epsilon)$.
\end{theorem}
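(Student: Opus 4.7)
The plan is to prove the statement by induction on the number of operations $t$, maintaining as an invariant the similarity $\C \sim \C'$ together with a multiplicative bound on path probabilities in terms of budget usage at the root. Crucially, since the client state $\client$ evolves deterministically as a function of the transcript (and the operator to be invoked at each step is a deterministic function of client state), the configuration $\C'$ is built by a natural coupling: we run the identical sequence of operator calls on the initial table $T'$, coupling the randomness of each \privPubOp{} response with the value $v$ realized in the $\C$-side execution.

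For the base case $t=0$, take $\C' = \Czero(T', \budget, P_0)$. The only source variable is $sv_{root}$ with $St(sv_{root}) = 1$, $B(sv_{root}) = 0$, and $|T \oplus T'| = 1 \leq St(sv_{root})$, so $\C \sim \C'$, $p = q = 1$, and $\epsilon = 0 \leq \budget$. For the inductive step, assume the theorem holds after $t-1$ operations, yielding $\C_{t-1} \sim \C'_{t-1}$ with probability ratio bounded by $\exp(\epsilon_{t-1})$ and $\epsilon_{t-1} \leq \budget$. Since $\client = \client'$ in similar configurations, the $t$-th operator call is the same on both sides, and we case split on its type. For a \pubOp{} operator, only $\client$ is updated, identically on both sides, so similarity and all probabilities carry over unchanged. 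For a \privOp{} operator, the kernel updates $SV$, the transformation graph, the stability tracker, and the environment according to the same transformation applied to $T$ on one side and $T'$ on the other; the newly introduced child source variable $sv_{new}$ inherits the same $St$ and $B$ in both configurations, while the stability bound from Definition~\ref{def:stabiliy} gives $|E(sv_{new}) \oplus E'(sv_{new})| \leq s \cdot |E(\mathrm{parent}) \oplus E'(\mathrm{parent})| \leq s = St(sv_{new})$, preserving $\C \sim \C'$. The partition case is handled analogously, using that a partition is $1$-stable on the aggregated symmetric difference across children.

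The heart of the argument is the \privPubOp{} case, which is also where I expect the main obstacle. Here the kernel first invokes \textsc{Request}, a purely deterministic procedure on the identical budget trackers $B$ and $B'$; it returns the same verdict in $\C$ and $\C'$, so if the budget is insufficient an exception is returned on both sides with probability $1$, preserving the invariants. If the request succeeds, $B$ is updated identically on both sides (including the recursive updates traced up the transformation graph, which are symmetric because $St$ and $B$ agree), and the operator returns a random answer $v$ that is appended to both $\QH(s)$ and all ancestors' query histories. Under the coupling, the probability mass assigned to this step equals $P(q(E(s), \sigma) = v)$ on the $\C$-side and $P(q(E'(s), \sigma) = v)$ on the $\C'$-side. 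This means the accumulated ratio $p/q$ after $t$ steps is exactly $\prod_{(q,s,\sigma,v) \in \QH(sv_{root})} P(q(E(s),\sigma) = v) \big/ \prod_{(q,s,\sigma,v) \in \QH'(sv_{root})} P(q(E'(s),\sigma) = v)$, since all non-random transitions contribute factor $1$ and the updated query history at the root records every query issued during the run.

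To bound this ratio, I invoke Lemma~\ref{lemma:tree} at $sv_{root}$: the product of response probabilities under $\C$ is at most $\exp\!\bigl(B(sv_{root}) \cdot |E(sv_{root}) \oplus E'(sv_{root})|\bigr)$ times the corresponding product under $\C'$. Since $E(sv_{root}) = T$, $E'(sv_{root}) = T'$ with $|T \oplus T'| = 1$, and $B(sv_{root}) = \epsilon$ by hypothesis, we obtain $p \leq q \cdot \exp(\epsilon)$ as required. Finally, the bound $\epsilon \leq \budget$ follows inductively from the guard in \textsc{Request}: any call that would push $B(sv_{root})$ above $\budget$ is rejected and returns without updating $B$, so $B(sv_{root})$ is maintained as an invariant not exceeding $\budget$ across all $t$ steps. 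The main technical subtlety is ensuring that the coupling is well-defined even for failed budget requests and for the partition variable, whose budget accounting in \cref{alg:requests} differs from ordinary transformations; the inductive similarity guarantees equal budget state on both sides, so the same branches of \textsc{Request} are executed, and the partition case of Lemma~\ref{lemma:tree} then delivers the same bound.
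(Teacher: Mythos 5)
Your proposal follows essentially the same route as the paper's proof: induction on $t$, using the similarity invariant to force identical client behavior and identical \textsc{Request} outcomes (both in the budget-exhausted and successful cases), and then invoking Lemma~\ref{lemma:tree} at $sv_{root}$ with $|E(sv_{root}) \oplus E'(sv_{root})| = 1$ together with the \textsc{Request} guard to conclude $p \leq q \cdot \exp(\epsilon)$ and $\epsilon \leq \budget$. One minor imprecision in the transformation case: the invariant requires $|E(sv_{new}) \oplus E'(sv_{new})| \leq s \cdot St(\mathrm{parent}) = St(sv_{new})$, since $St$ is the cumulative stability from the initial source, rather than the bound $\leq s$ you wrote; this is exactly the routine bookkeeping the paper elides, and your argument goes through with that correction.
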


\cref{thm:privacy_of_plan_transcripts} follows as a corollary from \cref{thm:detailed_privacy}.

\begin{proof}
Proof by induction on $t$.

\emph{Base case}: $t=0$.  This implies that $p = q = 1$, $\epsilon = 0$, and $\C = \Czero(T, \budget, P_0)$ and $\C' = \Czero(T', \budget, P_0)$.  It follows that $\C \sim \C'$ because we are given that $T \sim_1 T'$ and the rest of the claim follows.

\emph{Inductive case}: Assume the claim holds for $t$, we will show it holds for $t + 1$.
Let $\C_1$ be any configuration such that $\Czero(T, \budget, P_0) \stackrel{t}{\Rightarrow}_{p_1} \C_1$ where in $\C_1$, we have $B(sv_{root}) = \epsilon_1$.

The inductive hypothesis tells us that $\epsilon_1 \leq \budget$ and that there exists a $\C'_1$ such that
$\Czero(T', \budget, P_0) \stackrel{t}{\Rightarrow}_{q_1} \C'_1$ and $\C_1 \sim \C'_1$ and $p_1 \leq q_1 \times \exp(\epsilon_1)$.

Because $\C_1 \sim \C'_1$, it follows that the client is in the same state and so the next operation request from the client will be the same in $\C_1$ and $\C'_1$.  The proof requires a case analysis based on the nature of the operator.  We omit analysis of transformation operators or operators that are purely on the client side as those cases are straightforward: essentially we must show that the appropriate bookkeeping is performed by the \pk.  We focus on the case where the operator is a query operator.

For a query operator, there are two cases: (a) running out of budget, and (b) executing a query.  For the first case, by the inductive hypothesis $\C_1 \sim \C'_1$ and therefore if executing \cref{alg:requests} yields False on the \pk state in $\C_1$, it will also do so on the \pk state in $\C'_1$.  For the second case, suppose query $q$ is executed on source $sv$ with noise $\sigma$ and answer $v$ is obtained. The \pk adds the correpsonding entry to the query history $\QH$.  Let $\C$ denote the resulting state.  Let $\C'$ correspond to extending $\C'_1$ in a similar way.  Thus $\C \sim \C'$.

It remains to show two things.  First, letting $B(sv_{root}) = \epsilon$, we must show that $\epsilon \leq \budget$.  This follows from \cref{alg:requests} which does not permit $B(sv_{root})$ to exceed $\budget$.
Second, we must bound the probabilities.  Suppose that the probability of this query answer in $\C$ is $p_2$ and the probability of this answer on $\C'$ is $q_2$.  It remains to show that $p_1 \cdot p_2 \leq \exp(\epsilon) \cdot q_1 \cdot q_2$.  For this we rely on \cref{lemma:tree} applied to $sv_{root}$ with the observations that the product of probabilities bounded in \cref{lemma:tree} corresponds to the probabilities in $p_1 \cdot p_2$ that do not trivially equal 1 and that $| E(sv_{root}) \oplus E'(sv_{root})| = 1$.
%
%
%  and that if $\Czero(T, \budget, P_0) \stackrel{t+1}{\Rightarrow}_{p} \C$, then
%
% and $\C_1 \sim \C'_1$ and $p_1 \leq q_1 \times \exp(\epsilon_1)$.
%
%
% Since the query is not rejected, then the budget $B(sv_{root})$ has been updated according to \cref{alg:requests}.  Let the new $B(sv_{root})$ be $\epsilon$.  Since \cref{alg:requests} enforces $B(sv_{root}) \leq \budget$, we know that $\epsilon \leq \budget$.
%
% $\Czero(T', \budget, P_0) \stackrel{t}{\Rightarrow}_{q_1} \C'_1$ and $\C_1 \sim \C'_1$ and $p_1 \leq q_1 \times \exp(\epsilon_1)$.
%
%
%
% Suppose this operation transitions $\C_1$ to $\C$ with probability $p_2$ where $B(sv_{root}) = \epsilon_1 + \epsilon_2$.  It suffices to show that $\epsilon_2 \leq \budget - \epsilon_1$ and that this same operation will transition $\C'_1$ to some $\C'$ with some probability $q_2$ such that
% $\C \sim \C'$ and $p_2 \leq q_2 \times \exp(\epsilon_2)$.
%
% The proof requires a case analysis based on the nature of the operator.  We omit analysis of transformation operators or operators that are purely on the client side as those cases are straightforward.  We focus on the case where the operator is a query operator.
%
%
%
\end{proof}

%!TEX root = ms.tex
\section{Operators and Operator Classes} \label{sec:operators}

We now describe in detail the operators and operator classes in \sys.  A full list of operators is shown in \cref{fig:operator_index} where they are arranged into classes and color-coded by type ({\privOp}, {\privPubOp}, or {\pubOp}).  Along with descriptions of the operator classes we explain their role in plans and prove supporting properties.

%, which illustrates that operator type and class are orthogonal dimensions. 

%their organization into classes based on their functionality.

%%%%%%%%%%%%%%%%%%%%%%%%%%%%%%%%%%%%%%%%%%%%%%%%%%%%%%%%%%
% ORIGINAL 2-COLUMN FIGURE

%\begin{figure*}[t]
%	\centering
%	\begin{subfigure}[htp]{0.45\textwidth}
%		\centering
%		\includegraphics[scale=0.55]{figs/operator_index.pdf}
%		\caption{\label{fig:operator_index} Operator index}
%	\end{subfigure}
%	\hfill
%	\begin{subfigure}[htp]{0.45\textwidth}
%		\includegraphics[scale=0.6]{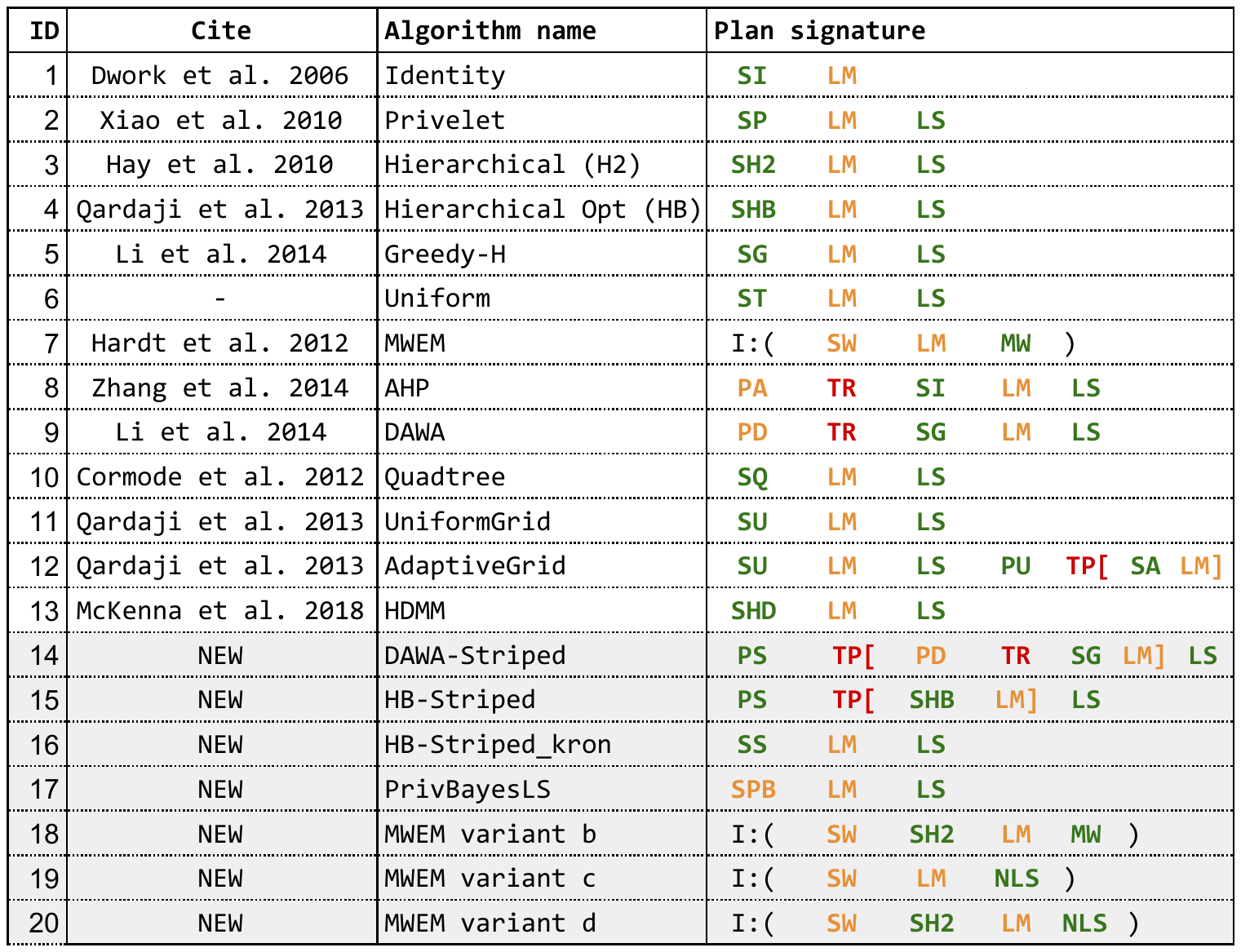}
%		\caption{\label{fig:plan_signatures} Plan signatures}
%	\end{subfigure}
%	\caption{ A listing of operators (\cref{fig:operator_index}) and plan signatures (\cref{fig:plan_signatures}) which are currently implemented in \sys. {\privOp} operators are {\color{BrickRed} red}, {\privPubOp} operators, which consume the privacy budget, are {\color{orange} orange}, and the remaining {\pubOp} operators are {\color{OliveGreen} green}.  The plan signatures, which provide the high level structure of a plan, are referenced throughout the paper by their ID.  All plans begin with a vectorize transformation, omitted for readability.  We also omit parameters of operators, including $\epsilon$ budget shares.  {\tt I}({\em subplan}) refers to iteration of a subplan and {\tt TP}[{\em subplan}] means that {\em subplan} is executed on each partition produced by \tt{TP}.}
%\end{figure*}

%%%%%%%%%%%%%%%%%%%%%%%%%%%%%%%%%%%%%%%%%%%%%%%%%%%%%%%%%%%%%%%%%%%%%%%
% 1-COLUMN VERSIONS

\begin{figure}[t!]
	\centering
		\includegraphics[scale=.9]{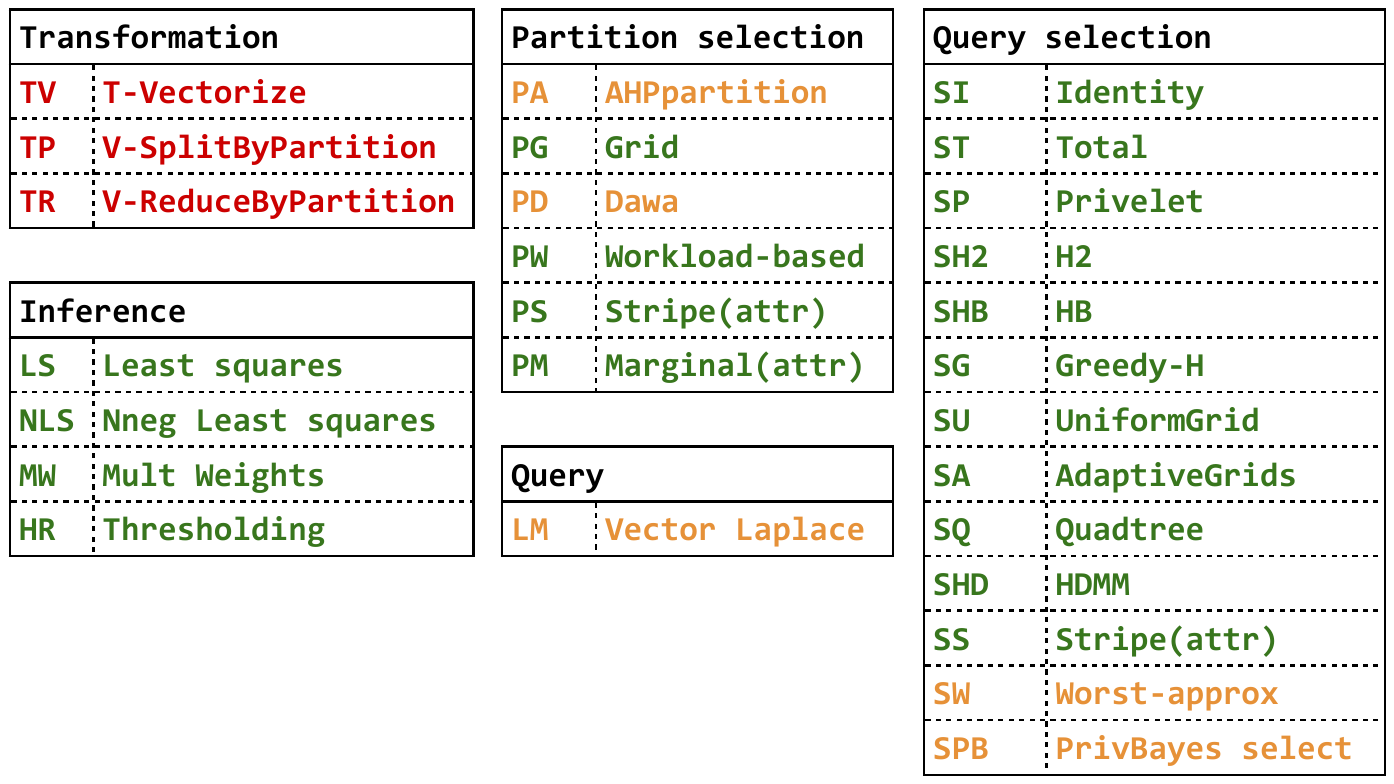}
	\caption{\label{fig:operator_index} The operators currently implemented in \sys. {\privOp} operators are {\color{BrickRed} red}, {\privPubOp} operators
	are {\color{orange} orange}, and {\pubOp} operators are {\color{OliveGreen} green}.}
\end{figure}

\subsection{Transformation Operators}

%\am{ define inputs and outputs; say this changes the sensitive database, but does not output anything; define the notion of stability; say it is used for privacy proof}

Transformation operators take as input a data source variable (either a table or a vector) and output a transformed data source (again, either a table or vector). Transformation operators modify the data held in the kernel, without returning answers. So while they do not expend privacy budget, they can affect the privacy analysis through their \emph{stability} (\cref{sec:background}).
\noindent Every transformation in \sys has a well-established stability.

\subsubsection*{Table Transformations}
\sys supports table transformations \op{Select}, \op{Where}, \tablePartitionOp{}, and \op{GroupBy}, with stabilities of 1, 1, 1, and 2 respectively. The definitions of the operators are nearly identical to those described in PINQ \cite{mcsherry2009pinq} and are not repeated here. As \sys currently handles  programs that use linear queries on single tables, the \op{Join} operator is not yet supported.

\subsubsection*{Vectorization}
All of the plans in \sys start with table transformations and typically transform the resulting table into a vector using \op{T-Vectorize} (and all later operations happen on vectors). The \op{T-Vectorize} operator is a transformation operator that takes as input a table $T$ and outputs a vector $\mathbf{x}$ that has as many cells as the number of elements in the table's domain (recall the discussion of domain \cref{sec:background}). Each cell in $\mathbf{x}$ represents the number of records in the table that correspond to the domain element encoded by the cell. \op{T-Vectorize} is a 1-stable transformation.

%\amref{When vectorization is applied to a table, \sys vectorizes any workload queries for that table automatically.  A linear counting query can be represented as a vector (\cref{sec:background}) and a workload of such queries can be represented as a matrix. We currently require \op{T-Vectorize} to be run such that all the input workload queries can be expressed on the vectorized domain. So \op{T-Vectorize} also defines the scope of inference.}{Fix this paragraph}

The vectorize operation can significantly impact the performance of the code, especially in high-dimensional cases, as we represent one cell per element in the domain. For this reason we allow  table transformations to reduce the domain size before running \op{T-Vectorize}. One of the primary reasons for working with the vector representation is to allow for inference operators downstream. Once in vector form, data can be further transformed as described next.

\subsubsection*{Vector Transformations}
\sys supports transformations on vector data sources.  Each vector transformation takes as input a vector $\mathbf{x}$ and a matrix $\mathbf{M}$ and produces a vector $\mathbf{x' = Mx}$.  The linearity of vector transformations is an important feature that is leveraged by downstream inference operators.  The stability of vector transformations is equal to the largest $L_1$ column norm of  $\mathbf{M}$.

% \paragraph*{Vector Reduce}
%\mh{update this so it references the partitions that come out of partition selection}
The \reduceOp operator is a 1-stable vector transformation operator that reduces the dimensionality of the data vector $\mathbf{x}$ by eliminating cells from $\mathbf{x}$ or grouping together cells in $\mathbf{x}$. Such transformations are useful to (a) filter out parts of the domain that are uninteresting for the analyst,  (b) reduce the size of the $\mathbf{x}$ vector so that algorithm performance can be improved, and (c) reduce the number of cells in $\mathbf{x}$ so that the amount of noise added by measurement operators is reduced.
\eat{The input to this operator is a {\em reduce mapping} that maps each cell in $\mathbf{x}$ to at most one value in the reduced domain. \am{filtering implies this is no longer a function}.
	There are two ways to think about this.
	Reduce merges groups of cells of $\mathbf{x}$ by summing their counts, and these groups are defined by the reduce mapping. Alternately, a}

\reduceOp \hl{takes as input a partition defining a grouping of the cells in the $\mathbf{x}$.  It can be carried out by representing the partition as a $(\psize \times n)$ matrix $\P$ where $n$ is the number of cells in $\mathbf{x}$, $\psize$ is the number of groups in the partition, and $P_{ij} = 1$ if cell $j$ in $\mathbf{x}$ is mapped to group $i$, and 0 otherwise.}

% \paragraph*{Vector Partition}
%\mh{update this so it references the partitions that come out of partition selection}
The \partitionOp operator is the vector analogue of the tabular \tablePartitionOp operator. \hl{It  takes as input a partition and splits the data vector $\mathbf{x}$ into $k$ vectors, $\mathbf{x}^{(1)}, \dots, \mathbf{x}^{(k)}$, each representing a disjoint subset of the original domain.}  This operator allows us to create different subplans for disjoint parts of the domain.  This is a 1-stable vector transform.  (Note:  \partitionOp  can be expressed as $k$ linear transforms with matrices that select the appropriate elements of the domain for each partition.)
% one for each group in the partition,

\subsection{Query Operators}
%\am{add input-output spec; add NoisyCount??}
Query operators are responsible for computing noisy answers to queries on a  data source.  Since answers are returned, query operators necessarily expend privacy budget.  Query operators take a data source variable and $\epsilon$ as input.

For tables, the \op{NoisyCount} operator takes as input a table $D$ and $\epsilon$ and returns $|D| + \eta$, where $\eta$ is drawn from the Laplace distribution with scale $1/\epsilon$.
% , which ensures differential privacy with respect to input $D$~\cite{mcsherry2009pinq}.
For vectors, the \op{Vector Laplace} operator takes as input a vector $\mathbf{x}$, epsilon, and a set of linear counting queries $\Q$ represented in matrix form.  Let $\Q$ be a matrix of size $(\qsize \times n)$.  \op{Vector Laplace} returns $\Q \mathbf{x} + \frac{\sigma(\Q)}{\epsilon}\vect{b}$ where $\vect{b}$ is a vector of $\qsize$ independently drawn Laplace random variables with scale $1$ and $\sigma(\Q)$ is the maximum $L_1$ norm of the columns of $\Q$.

For both query operators, it is easy to show they satisfy  $\epsilon$-differential privacy with respect to their data source input~\cite{mcsherry2009pinq,li2015matrix}.  Note, however, in the case  the source is derived from other data sources through transformation operators, the total privacy loss could be higher.  The cumulative privacy loss depends on the stability of the transformations and is tracked by the \pk.
% It is easy to show that this operator satisfies $\epsilon$-differential privacy with respect to private input $x$~\cite{li2015matrix}.

\subsection{Query Selection Operators}

% As described above, the \op{Vector Laplace} operator requires as input a matrix $\Q$ of linear counting queries.  Query selection operators are responsible for selecting such a matrix.

Since each query operation consumes privacy budget, the plan author must be judicious about what queries are being asked.  Recent privacy work has shown that if the plan author's goal is to answer a \emph{workload} of queries, simply asking these queries directly can lead to sub-optimal accuracy (e.g., when workload queries ask about overlapping regions of the domain).  Instead, higher accuracy can be achieved by designing a {\em query strategy}, a collection of queries whose answers can be used to reconstruct answers to the workload.  This approach was formalized by the matrix mechanism \cite{li2015matrix}, and has been a key idea in many algorithms~\cite{chaopvldb12,Li14Data-,Qardaji13Understanding,hay2010boosting,xiao2010differential,Cormode11Differentially,mckenna2018optimizing}.  Among these the recent HDMM algorithm is notable because it uses an optimization-based approach to find the query strategy that most-effectively answers the workload.  HDMM effectively scales to multi-dimensional domains, and offers state-of-the-art utility on many workloads \cite{mckenna2018optimizing}. 

% Recent privacy work has shown the importance of distinguishing between queries in the input workload, estimates of which are the ultimate goal, and the {\em strategy queries}, which are the queries whose answers are actually answered privately.  Answers to workload queries are reconstructed from these strategy queries. This approach was formalized by the matrix mechanism \cite{li2015matrix}, and has been key idea in many state-of-the-art algorithms~\cite{chaopvldb12,Li14Data-,Qardaji13Understanding,hay2010boosting,xiao2010differential,Cormode11Differentially}.

A query selection operator is distinguished by its output type: a set of linear counting queries $\Q$ represented in matrix form (i.e., the matrix input to the \op{Vector Laplace} operator described above). % all such operators output
% output set a linear queries represented as a $(\qsize \times n)$ matrix $\Q$, where $\qsize$ is the number of queries, and $n$ is the number of cells in the data vector $x$ to which the operator is applied.
As~\cref{fig:operator_index} indicates, \sys supports a large number of query selection operators, most of which are extracted from algorithms proposed in the literature. While these operators agree in terms of their output, they vary in terms of their input: some employ fixed strategies that depend only on the size of $\mathbf{x}$ (e.g., \op{Identity} and \op{Prefix} in \cref{alg:cdf}), some adapt to the workload (e.g., \op{Greedy-H}), some depend on prior measurements (e.g., \op{AdaptiveGrids}), etc.

Most query selection operators only rely on non-private information (domain size, workload) and therefore are of \pubOp type.  But there are a few that consult the private data, and thus have the \privPubOp type.  For example, \op{Worst-approx} is an operator that picks the query from a workload that is the worst approximated by a current estimate of the data. Such an operator is used by iterative algorithms like MWEM~\cite{hardt2012a-simple}.  Another is \op{PrivBayes select}, an operator that privately constructs a Bayes net over the attributes of the data source, and then returns a matrix corresponding to the
% set of lower dimensional marginals that are
sufficient statistics for fitting the parameters of the Bayes net. This was used as a subroutine in PrivBayes~\cite{Zhang2014}.

% \am{Is PrivBayes a query selection strategy, or is it a reduction selection strategy? If QS, then we can represent the original privbayes algorithm, but not our variants that use a different algorithm on each low-D marginal.}

\subsection{\hl{Partition Selection Operators}}

Partition selection operators compute a matrix $\P$ which can serve as the input to the \reduceOp and \partitionOp operators described earlier. Of course the matrix $\P$ must be appropriately structured to be a valid partition of $\x$.

This is an important operator class since much of recent innovation into state-of-the-art algorithms for answering histograms and range queries has used partitions to either  reduce the domain size of the data vector by grouping together cells with similar counts, or  split the data vector into smaller vectors and leverage the parallel composition of differential privacy to process each subset of the domain independently.
%
% been in developing methods to reduce the dimensionality of the data vector by grouping together cells with similar counts.
\sys includes partition selection operators \ahpOp and \op{Dawa} which are subroutines from the AHP~\cite{zhangtowards} and DAWA~\cite{Li14Data-} algorithms, respectively.  Both of these operators are data adaptive, and hence are \privPubOp. %: \ahpOp takes as input previously obtained noisy counts so it can be implemented in the client space, while \op{Dawa} and \op{Grid} are also measurement operations.
We also introduce new partition selection operators, \op{Workload-based} and \op{Stripe}, described in \cref{sec:sub:wkld,sec:census_plans} respectively.

\subsection{Inference Operators}\label{sec:inference}
An inference operator derives new estimates to queries based on the history of transformations and query answers. Inference operators never use the input data directly and hence are \pubOp. Plans typically terminate with a call to an inference operator to estimate a final set of query answers reflecting all available information computed during execution of the plan.  Some plans may also perform inference as the plan executes.

Ideally, an inference method should: (i) properly account for measurements with unequal noise; (ii) support inference over incomplete measurements (in which derived answers are not completely determined by available measurements); (iii) should incorporate all available information (including a prior or constraint on the input dataset); and lastly, (iv) inference should efficiently scale to large domains.  Many versions of inference have been considered in the literature \cite{hay2010boosting,Li:2010Optimizing-Linear,Lee15Maximum,hardt2012a-simple,Acs2012compression,zhangtowards,Qardaji13Understanding,proserpio2012workflow,Williams2010} but none meet all of the objectives above.   \sys currently supports multiple inference methods, in part to support algorithms from past work and in part to offer necessary tradeoffs among the properties above.

All the inference operators supported in \sys take as input a set of  queries, represented as a matrix $\Q$, and noisy answers to these queries, denoted $\y$.  The output of inference is a data vector $\xhat$ that best fits the noisy answers---i.e., an $\xhat$ such that $\Q \xhat\approx \y$.
The estimated $\xhat$ can then be used to derive an estimate of any linear query $\q$ by computing $\q\cdot\xhat$.
%
% % is a solution to $M \mathbf{x} = \mathbf{y}$.
The inference operator may optionally take as input a set of weights, one per query (row) in $\Q$ to account for queries with different noise scales.

\sys supports two variants of least squares inference,
the most widely used form of inference in the current literature~\cite{hay2010boosting,Li:2010Optimizing-Linear,Qardaji13Understanding}.  \sys extends these methods and formulates them as general operators, allowing us to replicate past algorithms, and consider new forms of inference that support constraints.
The first variant solves a classical least squares problem:
% , we can derive an estimate $\hat{x}$ for the true (protected) data vector $x$:
%
\begin{definition}[Ordinary least squares (LS)]
	\begin{align} \label{ls}
	\xhat = \argmin_{x \in \mathbb{R}^n} \norm{\Q \x - \y}_2
	\end{align}
\end{definition}
%
% Variants of least squares can support constraints known to hold on the input data.
\noindent Our second variant imposes a non-negativity constraint on $\xhat$:
% is non-negative least squares, which computes the most consistent non-negative $\xhat$ vector given noisy observed queries:
%
\begin{definition}[Non-negative least squares (NNLS)]
	Given scaled query matrix $\Q$ and answer vector $\y$, the non-negative least squares estimate of $\x$ is:
	\begin{align} \label{nnls}
	\xhat &= \argmin_{x \succeq 0} \norm{\Q \x - \y}_2
	\end{align}
\end{definition}

These inference methods can also support some forms of prior information, particularly if it can be represented as a linear query. For example, if the total number of records in the input table is publicly known, or other special queries have publicly available answers, they can be added as ``noisy'' answers with negligible noise scale and they will naturally incorporated into the inference process and the derivation of new query estimates.

We also support an inference method based on a multiplicative weights update rule, which is used in the MWEM \cite{hardt2012a-simple} algorithm. This inference algorithm is closely related to the principle of maximum entropy, and is especially effective when one has measured an incomplete set of queries.

\subsubsection*{Defining inference under vector transformations}
Recall that in the discussion above we describe inference as operating on a single vector $\x$ with a corresponding query matrix $\Q$.  However, plans can include an arbitrary combination of vector transformations, followed by query operators, resulting in a collection of query answers defined over various vector representations of the data. \sys handles this by taking advantage of the structure of vector transformations and query operators, both of which perform linear transformations, therefore making it possible to map measured queries back on to the original domain (i.e., a vector produced by the \op{Vectorize} operation) and perform inference there.
This allows for the most complete form of inference but other alternatives are conceivable, for example by performing inference locally on transformed vectors and combining inferred queries.  This might have efficiency advantages, but would likely sacrifice accuracy, and is left for future investigation.
%\subsubsection{Multiplicative Weights}

\subsubsection*{Inference: impact on accuracy}
Because inference is an operator in \sys\, algorithm authors are encouraged to use inference consistently, using all available measurements, even if they are measured in different parts of a plan.  In contrast, some existing algorithms use inference in an ad-hoc manner, performing inference on one set of measurements separately from another set of measurements.  As we show below, for unbiased plans, this is always sub-optimal and \sys\ helps to relieves the algorithm designer of the complexity of integrating measured information properly.  The following theorem follows the intuition that any unbiased noisy measurement provides information about the true data that can lower error, in expectation: 
\begin{theorem}
Given any (full rank) matrix $\Q$ of linear measurements and any linear query $\q$, the expected error of $\q$ is never higher if we include additional linear measurements using least squares inference.
\end{theorem}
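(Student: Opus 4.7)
The plan is to work with the variance of the generalized least squares (GLS) estimator, since under Laplace (or any zero-mean) noise the estimate $\hat{\x}$ is unbiased and expected squared error equals variance. The whole argument reduces to a monotonicity property of information matrices under adding measurements.

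First I would set up notation. Given full-rank $\Q$ of size $(\qsize \times n)$ and noisy answers $\y = \Q\x + \vect{b}$ where the coordinates of $\vect{b}$ are independent zero-mean noise with covariance $\Sigma$ (diagonal, determined by the per-row noise scales used by \op{Vector Laplace}), the weighted least-squares estimator is $\hat{\x} = (\Q^T \Sigma^{-1} \Q)^{-1} \Q^T \Sigma^{-1} \y$. I would verify it is unbiased and compute its covariance as $\mathrm{Cov}(\hat{\x}) = (\Q^T \Sigma^{-1} \Q)^{-1}$. For a linear query $\q$, the estimator $\q \cdot \hat{\x}$ is unbiased with expected squared error
\[
\mathbb{E}[(\q \cdot \hat{\x} - \q \cdot \x)^2] = \q^T (\Q^T \Sigma^{-1} \Q)^{-1} \q.
\]

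Next I would consider the augmented measurement matrix $\Q' = \begin{pmatrix} \Q \\ \Q_{\text{new}} \end{pmatrix}$ with block-diagonal covariance $\Sigma' = \mathrm{diag}(\Sigma, \Sigma_{\text{new}})$. A direct block computation gives the key identity
\[
(\Q')^T (\Sigma')^{-1} \Q' \;=\; \Q^T \Sigma^{-1} \Q \;+\; \Q_{\text{new}}^T \Sigma_{\text{new}}^{-1} \Q_{\text{new}}.
\]
The second term is positive semidefinite, so the information matrix only grows in the Loewner order: $\Q^T \Sigma^{-1} \Q \;\preceq\; (\Q')^T (\Sigma')^{-1} \Q'$. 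Invoking the standard fact that matrix inversion is Loewner-antitone on the cone of positive-definite matrices, I conclude $((\Q')^T (\Sigma')^{-1} \Q')^{-1} \preceq (\Q^T \Sigma^{-1} \Q)^{-1}$. Sandwiching with $\q$ on both sides preserves the inequality, giving
\[
\q^T ((\Q')^T (\Sigma')^{-1} \Q')^{-1} \q \;\leq\; \q^T (\Q^T \Sigma^{-1} \Q)^{-1} \q,
\]
which is exactly the claim that the expected error of $\q$ does not increase.

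The only delicate point is what to do when $\Q_{\text{new}}$ is added to an already-full-rank $\Q$ but $\Sigma_{\text{new}}$ is singular (e.g., a ``noiseless'' public fact used as a constraint): one has to interpret $\Sigma_{\text{new}}^{-1}$ as a pseudoinverse or take a limit of vanishing noise. I would address this with a short continuity argument, noting that the bound holds strictly as long as $\Q_{\text{new}}^T \Sigma_{\text{new}}^{-1} \Q_{\text{new}}$ is well-defined and positive semidefinite, and then passing to the limit. Otherwise the argument is essentially a one-line PSD monotonicity fact; the main obstacle is simply being explicit about the noise model and the weighting inside LS so that the algebraic identity relating the two information matrices is unambiguous.
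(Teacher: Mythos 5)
Your proof is correct, but it takes a genuinely different route from the paper. The paper normalizes all measurements to unit variance, augments $\Q$ by a \emph{single} new row $\vect{b}$, and applies the Sherman--Morrison rank-one update to $(\Q^T\Q)^{-1}$, which yields the explicit identity $\mathrm{Error}_{\Q'}(\q) = \mathrm{Error}_{\Q}(\q) - c\,v^2$ with $v = \q(\Q^T\Q)^{-1}\vect{b}^T$ and $c>0$; monotonicity then falls out because $cv^2 \ge 0$ (and extending to several new rows is an implicit induction). You instead work with the information matrix of the (generalized) least-squares estimator: adding a block $\Q_{\mathrm{new}}$ adds the positive semidefinite term $\Q_{\mathrm{new}}^T\Sigma_{\mathrm{new}}^{-1}\Q_{\mathrm{new}}$, so the information matrix increases in the Loewner order, and antitonicity of inversion on positive-definite matrices gives $\q^T\bigl((\Q')^T(\Sigma')^{-1}\Q'\bigr)^{-1}\q \le \q^T\bigl(\Q^T\Sigma^{-1}\Q\bigr)^{-1}\q$. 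Your argument buys generality and brevity: it handles an arbitrary block of new measurements in one step, accommodates heteroscedastic noise directly through the weights rather than via the paper's ``rescale to variance 1'' remark, and reduces the whole claim to a standard PSD monotonicity fact; your limiting treatment of singular $\Sigma_{\mathrm{new}}$ even covers noiseless side constraints, which the paper does not address. What the paper's route buys in exchange is a quantitative statement: the Sherman--Morrison computation exhibits exactly how much the error drops (the $c v^2$ term), identifying when the new measurement is useless ($v=0$) versus strictly helpful. Both establish the theorem; just note that your covariance formula $\mathrm{Cov}(\hat{\x}) = (\Q^T\Sigma^{-1}\Q)^{-1}$ presumes the inference operator actually uses the noise-scale weights (the paper's LS operator takes them as optional input), which is also the implicit assumption behind the paper's unit-variance normalization, so the two error notions coincide.
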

\begin{proof}
Assuming all measurments have variance $1$, the expected error on a query $\q$ is $ \text{Error}_{\Q}(\q) = \q (\Q^T \Q)^{-1} \q^T $ \cite{Li:2010Optimizing-Linear}.  In general, the variance of the measurements depends on the privacy budget and the sensitivity of $\Q$, but they can always be scaled to have variance $1$.  If we augment $\Q$ with a new linear query $\vect{b}$, it becomes $ \Q' = \begin{bmatrix} \Q \\ \vect{b} \end{bmatrix} $.  We can write $ \Q'^T \Q' = \Q^T \Q + \vect{b}^T \vect{b} $ where $ \vect{b}^T \vect{b} $ is the outer product.  Using the Sherman-Morrison formula~\cite{sherman1950}, we see that 
$$ (\Q'^T \Q')^{-1} = (\Q^T \Q)^{-1} - \frac{1}{1 + \vect{b} (\Q^T \Q)^{-1} \vect{b}^T} (\Q^T \Q)^{-1} \vect{b}^T \vect{b} (\Q^T \Q)^{-1} $$
Since $ (\Q^T \Q)^{-1} $ is positive-definite, $ \vect{b} (\Q^T \Q)^{-1} \vect{b}^T \geq 0 $ and the fraction is just some positive constant $c$.  With some algebraic manipulation, we arrive at the following expression:
$$ \text{Error}_{\Q'}(\q) = \text{Error}_{\Q}(\q) - c \q (\Q^T \Q)^{-1} \vect{b}^T \vect{b} (\Q^T \Q)^{-1} \q^T $$
Letting $ v = \q (\Q^T \Q)^{-1} \vect{b}^T $, we can write that as $ \text{Error}_{\Q'}(\q) = \text{Error}_{\Q}(\q) - c v^2 $ since $ (\Q^T \Q)^{-1} $ is symmetric.  Clearly $ c v^2 $ is non-negative, so $\text{Error}_{\Q'}(\q) \leq \text{Error}_{\Q}(\q)$.  This completes the proof.   
\end{proof}

%\begin{remark}
%The operators described above can capture a large class of algorithms from the literature which were designed for answering sets of linear queries over modest domain sizes. Yet many features of \sys are general and, in conjunction with new operators, \sys could support a wider array of tasks, including non-linear queries or larger domain sizes.  In \cref{sec:conc}, we briefly suggest future directions for expanding the tasks implemented in \sys.
%\end{remark}

%!TEX root = ms.tex
\section{Expressing known algorithms} \label{sec:plans}
As stated in \cref{sec:advantages}, \sys represents differentially private algorithms as plans composed over a rich library of operators, and supports not only simple linear sequences but also plans with iteration, recursion and branching. To highlight the expressiveness of \sys, we re-implemented state-of-the-art algorithms as \sys plans. Once the necessary operators are implemented, the plan definition for an existing algorithm is typically a few lines of code for combining operators and managing parameters.  We performed extensive testing to confirm that reimplementations in \sys of existing algorithms provide statistically equivalent outputs.

\begin{figure}[ht]
    \centering
        \includegraphics[scale=0.9]{figs/plan_signatures.pdf}
    \caption{\label{fig:plan_signatures} The high-level signatures of plans implemented in \sys (referenced by ID).  All plans begin with a vectorize transformation, omitted for readability.  We also omit parameters of operators, including $\epsilon$ budget shares.  {\tt I}({\em subplan}) refers to iteration of a subplan and {\tt TP}[{\em subplan}] means that {\em subplan} is executed on each partition produced by \tt{TP}.}
\end{figure}

We examined 12 differentially private algorithms for answering low dimensional counting queries that were deemed competitive\footnote{This is the subset of algorithms that offered the best accuracy for at least one of the input settings of the benchmark.} in a recent benchmark study \cite{Hay16Principled}, and one new algorithm published after the benchmark study \cite{mckenna2018optimizing}. \planref {1-13} in \cref{fig:plan_signatures} abstract their \sys implementations as {\em plan signatures} where operators are represented using colored abbreviations.

\subsection{Re-implementing existing algorithms}
The algorithms are listed roughly in the order in which they were proposed in the literature and reflect the evolution of increasingly complex algorithmic techniques. The simplest Algorithm, \op{Identity} \cite{dwork2006calibrating}, is a natural application of the Laplace mechanism. It simply measures each component of the data vector. Algorithms 2 through 5 reflect the evolution of more sophisticated measurements selection, targeted toward specific workloads.  Many of these techniques were originally designed to support range queries (a small subclass of linear queries) over one- or two-dimensional data.  \op{Privelet}\cite{xiao2010differential} uses a Haar wavelet as its measurements, which allows for sensitivity that grows logarithmically with the domain size, yet allows accurate reconstruction of any range query.  The \op{Hierarchical} (H2) technique uses measurements that form a binary tree over the domain, achieving effects similar to the wavelet measurements.  \op{Quadtree} \cite{Cormode11Differentially} is the 2-dimensional realization of the hierarchical structures. All the algorithms above follow similar design idioms, allowing us to implement them using operator sequences of the same pattern:
{\em Query selection, Query, and Inference}.

%\eat{ % include if we had space
%The plans above provide measurement selection tuned to range query workloads, but measurement selection is always {\em static}.  The next innovation in the literature consists of {\em adaptive} measurement selection.  \op{Hierarchical Opt} (HB) provides an intelligent heuristic for tuning the branching factor of hierarchical measurements to the domain size, often lowering error.   \op{Greedy-H} (which was proposed as a subroutine of the DAWA algorithm but can be used by itself) finds an optimal weighting of hierarchical measurements that minimizes error on an input workload.}

All of the algorithms above are {\em data-independent}, with constant error rates for any input dataset. More recent algorithms are {\em data-dependent}, displaying different error rates on different inputs, often because the algorithmic techniques are adapting to features of the data to lower error.  The simplest data-dependent algorithm is \op{Uniform} which simply estimates the total number of records in the input and assumes uniformity across the data vector. This simple algorithm also follows the simple pattern.

A more complex example is the Multiplicative-Weights Exponential Mechanism (\op{MWEM}) \cite{hardt2012a-simple} which takes a workload of linear queries as input and runs several rounds of estimation, measuring one workload query in each round, and using the multiplicative update rule to revise its estimate of the data vector.  In each round, the Exponential Mechanism is used to select the workload query that is most poorly approximated using the current data vector estimate. 

\begin{algorithm}[h]
\caption{MWEM}\label{alg:MWEM}

\begin{algorithmic}[1]
\State $D \gets$  \algoname{Protected}(source\_uri)                                     \Comment{Init}
\State $\hat{x} \gets$ \algoname{T-Vectorize}($D$)                                        \Comment{Transform}
\For {$i = 1:T$}                                                                        
    \State $\mathbf{M} \gets$ \algoname{WorstApprox}($\hat{x}$, $\epsilon / 2T$)         \Comment{Query Selection}
    \State $\mathbf{y} \gets$ \algoname{VectorLaplace}($\mathbf{M}$, $\epsilon/ 2T$)     \Comment{Query}
    \State $\hat{\mathbf{x}}$ $\gets$ \algoname{MultWeights}($\mathbf{M}$, $\mathbf{y}$)\Comment{Inference}
\EndFor                                                                                    
\State                                                                                     
\Return  $\hat{\mathbf{x}}$                                                                \Comment{Output}
\end{algorithmic}
\end{algorithm}

In \cref{alg:MWEM}, we rewrite the algorithm with abstracted subroutines and find out that this algorithm can be represented as several iterations of the 
{\em Query selection, Query, and Inference} sequence. In \cref{fig:plan_signatures}, the iteration inherent to  \planref \planmwem \ (\op{MWEM}) is shown with $I:(..)$. 

Other data-dependent algorithms exploit partitioning, in which components of the data vector are merged and estimated only in their entirety, which uniformity assumption imposed within the regions.  The \op{DAWA} \cite{Li14Data-} and \op{AHP} \cite{zhangtowards} algorithms have custom partition selection methods which consume part of the privacy budget to identify approximately uniform partition blocks. The partition selection methods work by finding a grouping of the bins in a vector and are the key innovations of the algorithms. We encapsulate these subroutines as new operators in our framework (in the cases above, we added a partition selection operators \op{DawaPartition} (\hl{PD}) in \planref \plandawa \ and \op{AHPPartition} (\hl{PA}) in \planref \planahp).

\op{UniformGrid} and \op{AdaptiveGrid} \cite{qardaji2013differentially} focus on 2D data and both end up with partitioned sets of measurements forming a grid over a 2D domain.  \op{UniformGrid} imposes a static grid, while \op{AdaptiveGrid} uses an initial round of measurements to adjust the coarseness of the grid, avoiding estimation of small sparse regions.

\subsection{Re-implementation strategies}

The process of re-implementing in \sys this seemingly diverse set of algorithms consisted of breaking the algorithms down into key subroutines and translating them into operators. To summarize, the translation strategy typically falls into one of three categories.

The first translation strategy was to identify specific implementations of common differentially private operations and replace them with a single unified general-purpose operator in \sys. For instance, the Laplace mechanism (LM), which adds noise drawn from the Laplace distribution, appears in \emph{every one} of the 13 algorithms.  Noise addition can be implemented in a number of ways (e.g. calling a function in the numpy.random package, taking the difference of exponential random variables, etc.). In \sys, all these plans call the same \op{Vector Laplace} operator with a single unified sensitivity calculation.

Another less obvious example of this translation is for subroutines that infer an estimate of $\mathbf{x}$ using noisy query answers. With the exception of \algoname{Identity} and \algoname{MWEM}, each of the algorithms uses instances of least squares inference, often customized to the structure of the noisy query answers.  For instance, \algoname{Privelet} uses Haar wavelet reconstruction, hierarchical strategies like \algoname{HB}  and \algoname{DAWA} use a tree-based implementation of inference, and others like \algoname{Uniform} and \algoname{AHP} use uniform expansion. We replaced each of these custom inference methods with a single general-purpose least squares inference operator ({\color{OliveGreen}LS} operators in \cref{fig:plan_signatures}). 
It would still be possible to implement a specialized inference operator in \sys that exploited particular properties of a query set, but, given the efficient inference methods described in \cref{sec:improved_inference}, we did not find this to be beneficial.

Our second translation strategy was to identify higher-level patterns that reflect design idioms that exist across multiple algorithms.  In these cases, we replace one or more subroutines in the original code with a sequence of operators that capture this idiom.  As shown earlier, \planref 2, 3, 4, 5, 6, 10, and 11, 13 all consist of the operator sequence: Query selection, Query (LM), and Inference (LS), differing only in Query selection method.  For other algorithms, this idiom reappears as a subroutine, as in \planref \planahp \ (AHP) and \planref \plandawa \ (DAWA).

Finally, we were left with distinct subroutines of algorithms that represented key intellectual advances in the differential privacy literature.  %These were ideas that made an algorithm distinctive and typically led to its state-of-the-art performance.  
We encapsulate such subroutines as new operators (e.g. \hl{PA} in \planref \planahp\ (AHP) and \hl{PD} in \planref \plandawa \ (DAWA) ) in the framework. 

\subsection{Benefits}

We highlight the benefits of re-implementing known algorithms in \sys.

\paragraph{\bf \em Code reuse}
Once reformulated in \sys, nearly all algorithms use the \op{Vector Laplace} operator and least squares inference.  This means that any improvements to either of these operators will be inherited by all the plans. We show such an example in \cref{sec:improved_inference}.

\paragraph{\bf \em Reduced privacy verification effort} Code reuse also reduces the number of critical operators that must be carefully vetted.
% Note that a correct implementation of the \op{Vector Laplace} mechanism, which computes the sensitivity of the input query workload and then adds noise, is critical to ensure the overall algorithm is private. By using a single vetted operator, plan authors can now focus on other aspects of algorithm design like improving accuracy without worrying about the implementation of a privacy critical operator.
The operators that require careful vetting are ones that consume the privacy budget, which are the \privPubOp operators in~\cref{fig:operator_index}.  These are: \op{Vector Laplace}, the \hl{partition} selection operators for both DAWA~\cite{Li14Data-} and AHP~\cite{zhangtowards}, a query selection operator used by PrivBayes~\cite{Zhang2014}, and a query selection operator used by the MWEM~\cite{hardt2012a-simple} algorithm that privately derives the worst-currently-approximated workload query. In contrast, for the DPBench code base, the entire code has to be vetted to audit the use and management of the privacy budget.
The end result is that verifying the privacy of an algorithm requires checking fewer lines of code.  For example, to verify the QuadTree algorithm in the DPBench codebase requires checking 163 lines of code. However, with \sys, this only requires vetting the 30-line \op{Vector Laplace} operator.  (Furthermore, by vetting just this one operator, we have effectively vetted 10 of the 18 algorithms in \cref{fig:plan_signatures}, since the only privacy sensitive operator these algorithms use is \op{Vector Laplace}.). When we consider all of the DPBench algorithms in \cref{fig:plan_signatures}, algorithms 1-12, verifying the DPBench implementation requires checking a total of 1837 lines of code while vetting all the privacy-critical operators in \sys requires checking 517 lines of code.

\paragraph{\bf \em Transparency}
As noted above, \sys plans make explicit the typical patterns that result in accurate differentially private algorithms.
% \claims{A plan author would not think of using query selection or domain reduction in the absence of privacy}, but would learn that these are important to include based on the plan signatures in \cref{fig:plan_signatures}.
Moreover, \sys plans help clarify the distinctive ingredients of state-of-the-art algorithms. For instance, DAWA and AHP (\planref 9 and \planref 8 respectively in \cref{fig:plan_signatures}) have the same structure but differ only in two operators: \hl{partition} selection and query selection.  %In \cref{sec:sub:dawa-ahp} we use \sys to perform a comparative analysis of these algorithms and variations formed from recombination.

\section{Implementation: efficient matrix support} \label{sec:implementation}

Matrices and operations on matrices are central to the implementation of \sys operators but can become a performance bottleneck.  In this section we describe a set of specialized matrix representation techniques, based on the implicit definition of matrices, which allows for greater scalability as the size of the data vector grows.

We review next the types of matrix objects in \sys and then, in \cref{sec:sub:matrix_reps}, the different ways matrices can be represented including implicit matrices.  In \cref{sec:sub:matrix_methods}, we decompose the common matrix operations in \sys into a small set of primitive operations which every implicit matrix should support.  We then describe in \cref{sec:sub:matrix_construction} a general matrix type from which implicit matrices can be built, and use that matrix type, in \cref{sec:range_queries}, to implement common query selection operators in \sys.  We conclude with \cref{sec:improved_inference} describing the implementation of inference using implicit matrices.

\subsection{Matrix types and their operations}

Recall that matrices are used to represent three different objects in \sys: sets of workload queries, sets of measurement queries, and partitions of the domain.  In all cases, the matrices contain one column for each element of a corresponding data vector. In the case of both workload and measurement matrices, rows represent linear queries. A partition matrix describes a linear transformation that can be applied to a data vector or to a workload; one row describes a set of elements of the domain that will be combined after the transformation.

The key computations on each matrix type are shown in \cref{tbl:mat_operations} (in the left two columns).  Workload matrices and measurement matrices both represent sets of queries and so they share similar computations (such as query evaluation on a data vector and calculation of sensitivity) however we only do inference on measurement matrices.  Partition matrices are used to reduce and expand both workload matrices and data vectors.

In common plans, the number of rows in a workload or measurement measurement matrix can be as large or larger than $n$, the number of elements in the data vector.  Partitions have at most $n$ rows, but may still be large.  For plans operating on large data vectors, where $n$ approaches the size of memory, these matrices, in standard form, are infeasible to represent in memory and operate on. To address this, \sys provides flexible and efficient matrix capabilities that can be used for the efficient implementation of operators.

\subsection{Matrix representations: dense, sparse and implicit} \label{sec:sub:matrix_reps}
The matrix class in \sys supports matrices using a combination of the dense, sparse, and implicit matrices.  These representations differ in their space utilization, their generality, and the efficiency of the matrix operations they support.

A dense $m \times n$ matrix is the standard representation that stores $mn$ values.  Obviously, any matrix can be represented in this manner and all operations in \cref{tbl:mat_operations} are supported.  A sparse matrix stores only non-zero elements of a matrix.  Any matrix can be represented in sparse form,  but its efficiency depends on the number of nonzero entries.  Where $nnz(\A)$ denotes the number of nonzero elements in matrix $\A$, if $nnz(\A) \approx mn$ then sparse matrices do not offer any benefit, and may even be more expensive to represent than the dense representation.  However, if $nnz(\A) << mn$ there may be significant improvements to performance in using this representation.  %A sparse matrix supports all primitive methods, and the time complexity is usually related to $nnz(A)$.

An implicit matrix is a virtual representation of a matrix that may not explicitly store all (or even any) of the elements of the matrix.  Because it is a virtual object, it must define appropriate methods so that computations with the implicit matrix produce correct results.  While not all matrices allow for efficient implicit representations, we have found that many of the matrices used in \sys operators have a structure that can be exploited for efficient implicit representation.  Note that implicit matrix representations are lossless: they do not approximate some dense matrix but represent it exactly.  Therefore an implicit matrix can always be materialized in sparse or dense form, although the goal is to perform computations without materialization.

As an example of an implicit matrix, recall the Prefix workload, an encoding of an empirical CDF, which was used in the example plan (\cref{alg:cdf}) of \cref{sec:advantages}:

\begin{example}[The Prefix workload: dense, sparse, and implicit] \label{example:prefix}
In dense form, the prefix workload is defined as a lower-triangular matrix containing $1$'s.  If $n=5$ we have:
$$ \mathbf{W_{pre}} =
\begin{bmatrix}
1 & 0 & 0 & 0 & 0 \\
1 & 1 & 0 & 0 & 0 \\
1 & 1 & 1 & 0 & 0 \\
1 & 1 & 1 & 1 & 0 \\
1 & 1 & 1 & 1 & 1 \\
\end{bmatrix} $$
A sparse representation of $\mathbf{W_{pre}}$ would store (a list of) only the nonzero elements of this matrix, but the space complexity of both dense and sparse representations remains $O(n^2)$.  In addition, the time complexity of computing matrix-vector products using the dense or sparse representation is $O(n^2)$.
However, the Prefix matrix can be completely specified by a single parameter, $n$, which is the only state stored for the implicit version of $\mathbf{W_{pre}}$.
Further, we can evaluate the matrix-vector product $\y = \mathbf{W_{pre}} \x$ using a simple one-pass algorithm over $\x$: first compute $y_1 = x_1$. Then, for $k = 2 \dots n$, compute $y_k = y_{k-1} + x_k $. Therefore, by representing the Prefix workload implicitly we can achieve $O(1)$ space complexity and $O(n)$ time complexity for computing matrix-vector products.
\end{example}

% time complexity shown in \cref{table: core}.

%When representing queries, a dense matrix can encode an arbitrary collection of linear queries, and obviously all primitive methods are supported.

\begin{table}[tp]
\centering
\caption{\label{tbl:mat_operations} Types of matrix objects in \sys (workload, measurement, partition) and the key computations performed in plans, along with the primitive methods required to support each computation.}

\begin{tabular}{ll|l|l}
\multicolumn{3}{c}{\textbf{Key computations by matrix type}} & \textbf{Primitive methods} \\  \hline
\multicolumn{2}{l|}{\bf Workload matrix, W} & & \rule{0pt}{3ex} \\
 & Query evaluation & $\W \xhat$ & Matrix-vector product \\
 & $L_1$ Sensitivity & $\norm{\W}_1$ & Abs, Transpose, Matrix-vector product \\
 & $L_2$ Sensitivity & $\norm{\W}_2$ & Sqr, Transpose, Matrix-vector product \\
 & Gram Matrix & $ \W^T \W $ & Transpose, Matrix multiplication \\
 & Row indexing & $ \vect{w}_i $ & Transpose, Matrix-vector product \\
% & Materialize & & \\
\multicolumn{2}{l|}{\bf Measurement matrix, M} & & \rule{0pt}{2ex} \\
 & Query evaluation & $\Q \x$ & Matrix-vector product \\
 & $L_1$ Sensitivity & $\norm{\Q}_1$ & Abs, Transpose, Matrix-vector product  \\
 & $L_2$ Sensitivity & $\norm{\Q}_2$ & Sqr, Transpose, Matrix-vector product \\
 & Inference (LS) & $\argmin_\x \norm{ \Q \x - \y }_2 $  & Transpose, Matrix-vector product \\
 & Inference (NNLS) & $\argmin_{\x \geq 0} \norm{ \Q \x - \y }_2 $  & Transpose, Matrix-vector product \\
 & Inference (MW) & $ \hat{\x}^{(k+1)} \propto \hat{\x}^{(k)} \odot \exp{(\vect{g}/N)}$  & Transpose, Matrix-vector product \\
 & &  $\vect{g} = 0.5 \Q^T (\Q \hat{\x}^{(k)} - \y)$  & \\
% & Materialize & & \\
 \multicolumn{2}{l|}{\bf Partition matrix, P} & & \rule{0pt}{2ex} \\
 & Reduce workload & $\W' = \W \P^+ $  & Transpose, Matrix multiplication \\
 & Reduce data vector & $ \x' = \P \x $ & Matrix-vector product\\
 & Expand workload & $\W = \W' \P $ & Matrix multiplication \\
 & Expand data vector & $\x = \P^+ \x' $ & Transpose, Matrix-vector product \\  \hline
\end{tabular} \vspace{1ex}
\end{table}

\subsection{Computing with implicit matrices}
\label{sec:sub:matrix_methods}

Most implicit matrices require very little internal state to be stored.  The main challenge is therefore to insure that all necessary computations involving an implicit matrix can be carried out efficiently, hopefully without falling back to materialization of the  dense form of the matrix.  Before defining additional implicit matrix constructions, we review the key computations \sys matrix objects must support.

A careful examination of the operators currently implemented in \sys resulted in the list of key computations in \cref{tbl:mat_operations}, where the left two columns describe operations on matrices that commonly occur in plans.  Importantly, these plan-level matrix computations can be decomposed into just five fundamental matrix methods, which we call {\em primitive methods}: matrix-vector product, transpose, matrix multiplication, element-wise absolute value ($abs$), and element-wise square ($sqr$).  Matrix-vector product takes as input a vector and returns a vector.  The multiplication of two implicit matrices returns a new implicit matrix, as does transpose,  which returns another implicit matrix that represents the transpose linear transformation.  Element-wise $sqr$ and $abs$ both return new implicit matrices.

Below we review the key computations on \sys matrix objects in \cref{tbl:mat_operations}, how they can be decomposed into primitive methods, and implementation considerations. Our goal will then be to construct implicit matrices that efficiently support the primitive methods.

%[Transpose] Implicit matrices must support transpose,
%
%[Matrix Multiplication] The multiplication of two implicit matrices, yielding another implicit matrix, is needed for computing the gram matrix of a workload and for reducing a workload.  Plan authors may also use matrix multiplication for composing measurement or workload matrices.  %Matrix multiplication has a default implementation discussed in the next section [?].

%GM: Below should be a list of the plan-level matrix computations and how they decompose into primitive methods.

\begin{description} \itemsep 1ex

\item[Query evaluation and data reduction] In \sys plans, matrix-vector multiplication is used by workload and measurement matrices for query evaluation, and by partition matrices for reduction of the data vector.  %Matrix-vector product is one of the most important primitive operations.

\item[Sensitivity] The sensitivity of measurement and workload matrices can be computed using a combination of primitive methods (abs, sqr, transpose, and matrix-vector product).  For a matrix $\Q$, we compute the maximum column sum of $abs(\Q)$ (for $L_1$ sensitivity) or $sqr(\Q)$ (for $L_2$ sensitivity) which can be done by doing a transpose-matrix-vector product with $\vect{1}$ -- the vector of all $1$'s:
\begin{align*}
\norm{ \Q }_1 = max( abs(\Q)^T \vect{1} ) & & & \norm{ \Q }_2 = \sqrt{max( sqr(\Q)^T \vect{1} )}
\end{align*}

%\item[abs and sqr]
%Implicit measurement (and workload?) matrices should also support abs and sqr, which computes the element-wise absolute value or square of the matrix.  These operations are used for computing matrix sensitivities.

\item[Inference]
The most common form of inference in \sys plans is based on least squares.  While classical solutions to the least squares problem involve matrix decompositions and computation of the pseudo-inverse, we will show that iterative methods lead to much greater scalability in combination with our implicit matrix representations.  Least squares, non-negative least squares, and multiplicative-weights inference can all be implemented using just the matrix-vector product and transpose primitive methods. We will discuss iterative inference in \cref{sec:improved_inference}.

\item[Gram Matrix]
Some workload-adaptive mechanisms like GreedyH and HDMM require the (materialized) gram matrix of the workload.  For a workload $\W$, the gram matrix is $\W^T \W$.  This computation can be implemented in terms of transpose, matrix multiplication, and materialize.  For extremely large workloads with special structure, where $\W^T \W$ is much smaller than $\W$ (like the set of all range queries), a more efficient version can be implemented directly that avoids using the primitive methods.

\item[Partition reduction and expansion]

Partition matrices need to be able to reduce the data vector and workload.  They also have to be able to do the reverse expansion operations.  As we show in the proof of Prop. \ref{prop:wbdr}, because of the special structure of partition matrices, the pseudo-inverse of any matrix $\P$ can be computed as the product of $\P^T$ and a diagonal matrix $\mathbf{D}$.  Thus, partition matrices simply need the three primitive methods: matrix-vector product, transpose, and matrix-multiplication.

\item[Row Indexing]  The MWEM algorithm and its variants (described in \cref{sec:mwem_variants})  use the worst-approximated query selection operator, which requires row indexing, or materialization of the $i^{th}$ row of a matrix.  This can be implemented in terms of the primitive methods transpose and matrix-vector product as follows: $ \vect{w}_i = \W^T \vect{e}_i $, where $\vect{e}_i$ is the $i^{th}$ column of an identity matrix.

\item[Materialize] If a plan requires working with matrices in a manner not supported by the interface of our implicit matrices, the matrix can always be materialized, at which point standard implementations of matrix methods can be employed.  As in row indexing, materialization can be performed by a sequence of matrix vector products with the columns of an identity matrix, i.e., $ \vect{A} \vect{e}_i $ for $ i = 1, \dots n $.

\end{description}

\subsection{Generalized matrix construction}
\label{sec:sub:matrix_construction}

%Implicit matrices were first used in the context of a privacy mechanism by McKenna et al \cite{??}. \ry{I don't think we can attribute all of implicit matrices to HDMM, just kronecker products.  Hierarchical is a form of impliict matrix, wavelet and fourier transforms also, you could even argue that data cube uses them.  I think we can say something similar to what we did for least squares here: that many algorithms use implicit representations, even though they are not described as doing so}

\sys contains a matrix class, denoted $EMatrix$, that generalizes dense, sparse, and implicit matrices, supporting flexible matrix construction using a small set of specially designed {\em core matrices} which may then be combined with combining operations (union, product, and Kronecker product).  This provides a flexible and extensible mechanism for constructing a wide range of matrices.  The following grammar describes the construction of $EMatrix$ instances:

\begin{align*}
  CoreMatrix &= Identity \mid Ones \mid Prefix \mid Suffix \mid Wavelet \\
  EMatrix &=  DenseMatrix \mid SparseMatrix \mid CoreMatrix \\
  EMatrix &= Union(EMatrix, EMatrix) \\
  EMatrix &= Product(EMatrix, EMatrix) \\
    EMatrix &= Kronecker(EMatrix, EMatrix)
\end{align*}

Unless an $EMatrix$ is defined as a single $SparseMatrix$ or $DenseMatrix$, we consider it implicit, since it is not fully materialized.

\subsubsection*{Core matrices}

The $CoreMatrix$ type forms the basic building block for $EMatrix$ and each is defined implicitly.  The following are custom core matrices we designed to support \sys operators:

%\gm{I think each of these should be described informally, rather than stating how matvec is done.}
\begin{itemize}
\item \textbf{Identity}: Identity is the simplest building block.  It is defined as the matrix $ \mathbf{I} $ having the property that $ \mathbf{I} \v = \v $ for all vectors $\v$.  Thus, the implementation of matrix-vector product is trivial.  Similarly, transpose, abs, and sqr are simple no-ops.
\item \textbf{Ones}: Ones is the $m \times n$ matrix of all ones.  Matrix-vector products can be efficiently computed by summing up the entries of the input vector, and constructing a $m$-length vector with that value.  The transpose is a $ n \times m $ Ones matrix, and abs and sqr are simple no-ops.  \textbf{Total} is a special case of the Ones matrix where $m=1$.
\item \textbf{Prefix and Suffix}: The description of prefix and the algorithm for efficiently computing matrix-vector products is given in Example \ref{example:prefix}.  The transpose of Prefix is Suffix, and abs and sqr are simple no-ops.
%\item \textbf{Suffix}: Suffix is the transpose of Prefix, it is an upper triangular matrix of ones.  The algorithm for computing matrix-vector products is similar to the one in Example \ref{example:prefix}.  Abs and sqr are simple no-ops.
\item \textbf{Wavelet}: Wavelet is the Haar wavelet transform.  Efficient algorithms exist for evaluating matrix-vector products implicitly with the Haar wavelet \cite{xiao2010differential}.  The transpose has a similar form, but abs and sqr, if needed, must materialize the matrix.  However, for this matrix sensitivity may be computed directly, without going through abs and sqr.
\end{itemize}

The primitive methods described above have very simple and efficient implementations for the core matrices.  In \cref{table:core} we report the space utilization for each core matrix, along with the time complexity for one of the most important primitive methods (matrix-vector product).  We compare the complexity of the core implicit matrices with their standard dense and sparse representations, showing significant reductions in space usage -- up to a factor $n^2$.  We observe a similar reduction in time complexity of matrix-vector products by up to a factor of $n$.

  %Reducing the space usage of matrices is essential to scaling \sys to large domains, as memory is often the main bottleneck.

%We provide a simple description of our implementation of matrix-vector product for each core implicit matrix below.

%With the exception of Partition, each of these matrices is defined implicitly. The Partition matrix
%These building-block matrices include things that can typically be found in linear algebra libraries like
%
% Identity, Ones, and Wavelets.  Additionally, they include things like Prefix and Suffix, which are not typically found in these libraries but are commonly used by differentially private mechanisms.
%\ry{arbitrary dense/sparse matrices are also supported in ektelo, should we say more about this?}

\begin{table}
\caption{
Comparison of core implicit matrices to their corresponding sparse and dense representations, in terms of space usage and time complexity of a matrix-vector product. For sparse and dense matrices, the time complexity is the same as the space complexity.
% Space and time complexity of core implicit matrices, compared to their corresponding sparse and dense representations.  For sparse and dense matrices, the time complexity is the same as the space complexity.  \mh{changed to $\Theta$, okay?}
} \label{table:core}
\centering
\begin{tabular}{|c|c|c||c|c|}
\hline
 & \multicolumn{2}{c||}{Implicit} & Dense & Sparse \\
Core Matrix       & Space Usage & Time (mat-vec) & Space/Time & Space/Time \\ \hline
$Identity$     & $O(1)$        & $O(n)$          & $O(n^2)$ & $O(n)$ \\
$Ones$         & $O(1)$        & $O(m+n)$          & $O(m n)$ & $O(m n)$ \\
$Prefix$       & $O(1)$        & $O(n)$          & $O(n^2)$ & $O(n^2)$ \\
$Suffix$       & $O(1)$        & $O(n)$          & $O(n^2)$ & $O(n^2)$ \\
$Wavelet$      & $O(1)$        & $O(n \log{n})$  & $O(n^2)$ & $O(n \log{n})$ \\\hline
%Partition    & $O(n)$        & $O(n)$          & $O(n^2)$ & $O(n)$ \\ \hline \hline
%Union(A,B)      & Space(A) + Space(B)   & Time(A) + Time(B) & $O(nnz(A) + nnz(B))$ & $ O(m_A n + m_B n)$\\
%Product(A,B)    & Space(A) + Space(B)   & Time(A) + Time(B) & $ m_A n_B $ & $ m_A n_B $\\
%Kronecker(A,B)  & Space(A) + Space(B)   & $n_B$ Time(A) + $m_A$ Time(B) & $ O(nnz(A)+nnz(B)$ & $O(m_A m_B n_A n_B) $\\ \hline
\end{tabular}
\vspace{2ex}
\end{table}

\subsubsection*{Composing matrices}

Core matrices and arbitrary sparse or dense matrices can be combined using a Union, Product (including with a constant), and Kronecker Product to form new matrices that are implicit (or partially implicit).

If matrix $\Q_1$ and $\Q_2$ each represent queries, then Union$(\Q_1,\Q_2)$ is a matrix that represents the union of the queries of $\Q_1$ and $\Q_2$.  It is useful for building complex workloads and measurement matrices, and it also important in plans to bring together all measured queries for global inference.  Product is less frequently needed, but is used for multiplying partition matrices with workload and measurement matrices.

Kronecker product is especially useful for constructing workload and measurement matrices over multi-dimensional domains.  Suppose our input is a relation $R(A,B)$, we vectorize its projection of $\pi_A(R)$ to get data vector $\x_A$, and we define a set of queries of interest as matrix $\Q_A$.  If we similarly form a matrix of queries $\Q_B$ over the vectorization of $\pi_B(R)$ then Kronecker$(\Q_A, \Q_B)$ (denoted $\Q_A \otimes \Q_B$ in matrix equations) is a matrix that encodes a new set of queries over both attributes $A$ and $B$ and it contains $q_{a_i} \wedge q_{b_j}$ for each $q_{a_i}$ in $\Q_A$ and each $q_{b_j}$ in $\Q_B$, i.e. it contains the conjunctive combination of all pairs of queries drawn from $\Q_A$ and $\Q_B$.

The formal definition of the Kronecker product is:

\begin{definition} \label{def:kron}
The Kronecker product $\mathbf{A} \otimes \mathbf{B}$ between a $ m_A \times n_A $ matrix $\mathbf{A}$ and a $m_B \times n_B$ matrix $\mathbf{B}$ is a $m_A m_B \times n_A n_B$ matrix defined as:
$$ \mathbf{A} \otimes \mathbf{B} =
\begin{bmatrix}
a_{11} \B & \dots & a_{1 n_A} \B \\
\vdots  & \ddots & \vdots \\
a_{m_A 1} \B & \dots & a_{m_A n_A} \B \\
\end{bmatrix} $$
\end{definition}

Kronecker products were first used within the context of a privacy mechanism by McKenna et al. \cite{mckenna2018optimizing}.

We present these as binary operations on implicit matrices, but because they are associative, they can also be applied to a collection of $k$ sub-matrices.  For example for $k=3$ we may write $Union(\A,\B,\mathbf{C})$ as shorthand notation for $Union(\A, Union(\B,\mathbf{C}))$. 

\begin{example} %\gm{add example for generalized matrix construction:  How about something like: Kronecker(Prefix, Union(Total, Identity).  Give concrete attributes (age, income) or something.  Explain exactly what is stored. Would be even better if one sub-matrix was dense.  Using Table III, compute space and time.  }
\label{ex:implicit}
Suppose our input relation is $R(age, income, marital \text{-} status)$, where age and income are discretized into a 100 bins, and marital-status is a categorical attribute with 7 possible values, resulting in a data vector of size 70000.  We want to accurately answer range queries on age and income, broken down by various marital statuses.  Thus, we may construct the following workload using tools from above:
\begin{align*}
\W = Kronecker(
&Prefix, \\
&Prefix, \\
&Union(Total, Identity, Dense)
\end{align*}
where $Dense$ is a $ 2 \times 7 $ query matrix with two queries that aggregate over the marital status attribute into two groups: ``married'' and ``unmarried''.  Using \cref{table: composed}, we see that the only storage required to represent $\W$ is the $ 2 \times 7 $ $Dense$ matrix, and metadata for the $Prefix$, $Total$, and $Identity$ matrices.  In contrast, the sparse and dense representation of $\W$ would require about 8 GB and 56 GB respectively.
\end{example}

\subsubsection*{Supporting the primitive methods}
% \begin{table}[t]
% \centering
% \caption{Implementation of primitive methods for composed matrices in terms of the primitive methods for the sub-matrices.} \label{table:delegate}
% \begin{tabular}{c|lll}
%     & $Union(\A,\B)$ & $Product(\A,\B)$ & $Kronecker(\A,\B)$ \\\hline
% mat-vec & $\begin{bmatrix} \A \v \\ \B \v \end{bmatrix}$ & $\A(\B \v)$ & $\text{vec}(\A \text{mat}(\v) \B^T)$ \\
% transpose & $HStack(\A^T, \B^T)$ & $Product(\B^T, \A^T)$ & $Kronecker(\A^T, \B^T)$ \\
% abs & $Union(abs(\A), abs(\B))$ & & $Kronecker(abs(\A), abs(\B))$ \\
% sqr & $Union(sqr(\A), sqr(\B))$ & & $Kronecker(sqr(\A), sqr(\B))$ \\
% \end{tabular}
% \end{table}
%
Core, sparse, and dense matrices have native support for the primitive methods discussed previously.  When a primitive method is invoked for an $EMatrix$ that results from one or more of the combining operations, the work is delegated to the constituent sub-matrices, and thus the matrices formed by composition inherit the performance characteristics of the sub-matrices.  In particular, %Kronecker product matrices can be represented efficiently by storing the sub-matrices, and
the key primitive methods can be implemented efficiently on matrices formed from unions, products, and Kronecker products.
% , as shown in \cref{table:delegate}.

% We present these as binary operations on implicit matrices, but because they are associative, they can also be applied to a collection of $k$ sub-matrices.  For example for $k=3$ we may write $Union(\A,\B,\mathbf{C})$ as shorthand notation for $Union(\A, Union(\B,\mathbf{C}))$.  Matrix multiplication is not shown as a primitive operation in \cref{table:delegate}, because it is implemented as $Product$.  Note that $Product$ does not implement $abs$ or $sqr$ directly -- that is because the computation cannot be delegated to the sub-matrices in general.  Thus, if $abs$ or $sqr$ of a $Product$ instance is needed, the matrix will be materialized.  Also, $HStack$ is another combining operation that is not used directly by \sys plans, but is useful for representing the transpose of a $Union$.  Just like $Union$ encodes a vertical stacking of matrices, $HStack$ encodes a horizontal stacking of matrices.  The implementation of key primitive operations is similar for both $Union$ and $HStack$.

The key performance characteristics of composed matrices are summarized in \cref{table: composed}.
%There is the caveat that for Product(A, B), we cannot in general define abs or sqr (which are needed to compute sensitivity) in terms of the implicit representation.  In this case, these methods inherit the default implementation which materializes the matrix and does the calculation over that.  This may be innefficient, but fortunately this does not seem to be an issue for the plans we implemented in \sys.

%Other ways of combining matrices are also possible, such as horizontal stack (opposite of union) and Sum.  These are also implemented in \sys for convenience, but we won't discuss them here because they are not needed in any privacy plans we are aware of.

\begin{table}[]
\centering
\caption{Space and time complexity of composed matrices, in terms of the complexity of sub-matrices.} \label{table: composed}
\begin{tabular}{|c|c|c|c|}
\hline
Matrix       & Operation & Space Usage & Time (mat-vec) \\ \hline
%Weight(c, A)    & Space(A) & Time(A) \\
$Union(\A,\B)$      & $\begin{bmatrix} \A \\ \B \end{bmatrix}$ & Space($\A$) + Space($\B$)   & Time($\A$) + Time($\B$) \\
$Product(\A,\B)$    & $ \A \B $ & Space($\A$) + Space($\B$)   & Time($\A$) + Time($\B$)  \rule{0pt}{2.5ex}\\
$Kronecker(\A,\B)$  & $ \A \otimes \B $ & Space($\A$) + Space($\B$)   & $n_B$ Time($\A$) + $m_A$ Time($\B$) \rule{0pt}{2.5ex} \\   \hline
\end{tabular}
\vspace{1.5ex}
\end{table}

%\gm{say something about expressiveness limitations?}

%\begin{outline}[enumerate]
%
%\1 Core implicit matrices: \gm{for each, describe space utilization and implementation of primitive operations (how much detail here?)}
%
%	\2 Identity:
%	\2 Total:
%	\2 Prefix: The Prefix query matrix can also be completely specified by a shape parameter $n$.
%	\2 Suffix: The Suffix workload is something we haven't talked much about, but it is the transpose of the Prefix workload, and thus it is convenient to give it a name.
%	\2 Wavelet:
%	\2 Grouping Partition: \gm{does this fit here?}
%
%\1 Composing matrices:  Core implicit matrices, along with arbitrary dense or sparse matrices, can be combined using Union, Kronecker Product, and Product to form new matrices that are implicit or partially implicit. \gm{state as grammar (?)}
%
%	\2 Base matrices: EMatrix = CoreImplicit $|$ Sparse $|$ Dense
%	\2 Union: EMatrix = EMatrix UNION EMatrix
%	\2 Product: EMatrix = EMatrix * EMatrix \gm{used in range query construction}
%	\2 Kron: EMatrix = EMatrix $\otimes$ EMatrix
%
%	\2 \gm{Some details here on implementation of EkteloMatrix?}
%
%\1 Add examples of simple composed matrices
%
%\1 Primitive methods for composite matrices: describe primitive method implementations over Union, Kron, etc.
%
%	\2 (e.g.) Transpose: if $E$ is an EMatrix that is the result of Kron product, so $E = E_1 \otimes E_2$, then $E^T = E_1^T \otimes E_2^T$.  This shows that
%
% \1 Say something about expressiveness limitations.
%\end{outline}
%
\subsection{Matrix constructions for \sys operators} \label{sec:range_queries}

Using the generalized matrix construction described above, we can re-implement many of the existing query selection and partition operators currently in \sys. We will show in \cref{sec:experiments} that the use of implicit matrices leads to significant improvements in efficiency and scalability, as well as the reduction of space consumption.

\subsubsection*{Query selection operators based on range queries}
A notable class of query matrices that we can efficiently represent using the tools from above is an arbitrary collection of range queries. This type of workload has been extensively studied in the literature and many of the query selection operators in \sys are specially-designed sets of range queries including H2, Hb, QuadTree, UniformGrid, and AdaptiveGrid.

Recall that a single range query over a 1-dimensional domain can be specified by a pair of indices $ (i,j) $, and a workload of range queries can be represented as a list of these pairs.  This suggests we can store any range query workload using only $O(m)$ space where $m$ is the number of queries.  Furthermore, one way to evaluate matrix-vector products is by iterating through each query one-by-one and evaluating $ \sum_{k=i}^j x_k $.  The time complexity of this approach is $O(mn)$ in the worst case, which is equivalent to the sparse and dense representations.

Our general matrix construction allows us to do even better by exploiting the fact that any range query can be expressed as the difference of two prefix queries.  Thus the matrix can be represented as $Product(Sparse, Prefix)$ where $Sparse $ is a $m \times n$ sparse matrix with two non-zero entries per row.  An illustrating example is shown below:

\begin{example}[Range Queries] \label{ex:range}
A collection of four range queries over a domain of size five, represented implicitly as the product of a $Sparse$ matrix and the $Prefix$ matrix (displayed here in dense form for illustration purposes):
$$\begin{bmatrix}
0 & 1 & 1 & 1 & 0 \\
0 & 0 & 0 & 1 & 1 \\
1 & 1 & 1 & 1 & 0 \\
0 & 1 & 0 & 0 & 0
\end{bmatrix} =
\begin{bmatrix}
-1 & 0 & 0 & 1 & 0 \\
0 & 0 & -1 & 0 & 1 \\
0 & 0 & 0 & 1 & 0 \\
-1 & 1 & 0 & 0 & 0 \\
\end{bmatrix}
\begin{bmatrix}
1 & 0 & 0 & 0 & 0 \\
1 & 1 & 0 & 0 & 0 \\
1 & 1 & 1 & 0 & 0 \\
1 & 1 & 1 & 1 & 0 \\
1 & 1 & 1 & 1 & 1 \\
\end{bmatrix} $$
\end{example}
Using this construction, we can evaluate matrix-vector products in $O(n+m)$ time, which is a substantial improvement over the other representations.  %\gm{What about H defined using Union(Identity, S*P)?  I think we need to say that here because it is the only place we will be specific about what we test in experiments.}
The range query construction can be naturally extended to multi-dimensional domains by replacing $ Prefix $ with $ Kronecker(Prefix, \dots, Prefix) $ and replacing $Sparse$ with a sparse matrix with up to $ 2^d $ nonzero entries per row, where $d$ is the number of dimensions of the domain.

A special case of this range query construction is hierarchical and grid-based matrices used by H2, Hb, and QuadTree.  These matrices always have an Identity matrix, and while they can be represented using the above construction, it is more efficient to represent them in a slightly different way as $ Union(Identity, Product(Sparse, Prefix)) $, which is the representation used in our empirical evaluation.

Note that even though $Product$ does not natively support $abs$ and $sqr$, for the case of range queries, or more generally any matrix with binary values, $abs$ and $sqr$ are simple no-ops.

\subsubsection*{Representing marginals}

A common task for multi-dimensional data analysis is computing the marginals of a dataset.  Marginals may be used both as part of workloads and measurement matrices.  They can be efficiently represented using the tools from above, as demonstrated in Example \ref{ex:marginals}.

%GM: I think it is okay
%\ry{it seems a little weird to have some exampes here and some examples in section 7.4 - should they all move here?}

\begin{example}[Marginals] \label{ex:marginals}
Any marginal can be represented as a Kronecker product of $Identity$ and $Total$ building blocks.  For example, the two-way marginal that sums out the second attribute can be encoded as:
$$ \W_{13} = Kronecker(Identity, Total, Identity) $$
Further, an arbitrary collection of marginals can be encoded as $Union$ of these Kronecker products.  All 2-way marginals is:
\begin{align*}
\W_{2way} = Union(
&Kronecker(Identity, Identity, Total), \\
&Kronecker(Identity, Total, Identity),  \\
&Kronecker(Total, Identity, Identity))
\end{align*}
\end{example}

\subsubsection*{Partition operators}

The matrices used by partition operators are represented simply as $Sparse$ matrices.  While an implicit definition is possible, it would not offer any improvement in space or time over the sparse representation.

%The matrices used by partition operators are usually represented as $Sparse$ matrices, although in some cases with multi-dimensional data they may also use $Kronecker$ products.  This is discussed in more detail in \cref{sec:wbdr-hd}. \ry{add reference when lock is released}  \gm{but I think we removed it} \gm{maybe explain why there is nothing better than sparse?}
%\gm{can discussion be extended to include applications of EkteloMatrix to the partitioning operators as well as query selection operators?}

%% GM: deleted for now, doesn't add much

%\subsubsection*{Multi dimensional constructions}
%
%The Kronecker product is a very useful abstraction for representing matrices defined over multi-dimensional domains.  For example, we can encode all 3D range queries as $ Kron(R_1, R_2, R_3) $, or we can encode all 2-way marginals over a 3D domain as $ Union(M_{12}, M_{13}, M_{23}) $ where $M_{12} = Kron(I, I, T)$, $M_{13} = Kron(I, T, I)$, and $ M_{23} = Kron(T, I, I)$.

\subsection{Implementing inference} \label{sec:improved_inference}

Inference is a fundamental operator that can improve error with no cost to privacy and, accordingly, we saw that it appeared in virtually every algorithm re-implemented in \sys (as shown in \cref{fig:plan_signatures}).  But inference can be a costly operation.  Recall that the input to inference is a measurement matrix, denoted by $\Q$, containing $m$ queries defined over a data vector of size $n$, and the list of noisy answers $\y$.  The least squares solution (Eq.~(\ref{ls})) is given by the solution to the normal equations $ \Q^T \Q \hat{\x}=\Q^T \y $.  Assuming $ \Q^T \Q $ is invertible, then the solution is unique and can be expressed as $\hat{\x} = (\Q^T \Q)^{-1} \Q^T \y$.  Often explicit matrix inversion is avoided, in favor of suitable factorizations of $\Q$ (e.g., QR or SVD).  However, the time complexity of such ``direct'' methods is still generally cubic in the domain size when $ \qsize = O(n)$.  In practice we have found that the runtime of such direct methods is unacceptable when $n$ is greater than about $ 5000 $.

 Algorithms in prior work~\cite{xiao2010differential,hay2010boosting,Qardaji13Understanding,qardaji2013differentially} have used least squares inference on large domains by restricting the selection of queries, namely to those representing a set of hierarchical queries.  This allows for inference in time linear in the domain size, avoiding the explicit matrix representation of the queries.  We avoid this approach in \sys because it means that a custom inference method may be required for each query selection operation, and because it limits the measurement sets that can be used.  In addition, hierarchical methods only work for least squares but not least squares with non-negativity constraints.

% The approach above provides a much more general solution to the inference problem.  This is critical to the success of \sys: it allows query selection operators to be freely designed and combined without restrictions on the structure of queries.

%  $\Q$ is a $\qsize \times n$ matrix where $n$ is the domain size of the input; $\qsize$ may also be large, possibly larger than $n$, since it is not uncommon for $\Q$ to contain a query for each cell of $\x$, in addition to other queries that aggregate elements of $\x$.

An alternative approach to least squares inference is to use an iterative gradient-based method, which solves the normal equations by repeatedly computing matrix-vector products $ \Q \mathbf{v} $ and $ \Q^T \mathbf{v} $ until convergence.  The time complexity of these methods is $O(k n^2)$ for dense matrix representations where $k$ is the number of iterations.  In experiments we use a well-known iterative method, LSMR \cite{Fong11LSMR:}.  Empirically, we observe LMSR to converge in far fewer than $n$ iterations when $\Q$ is well-conditioned, which is the case as long as the queries are not taken with vastly different noise scales, and thus we expect $ k << n $.

In the original version of this work \cite{Zhang:2018:EFD:3183713.3196921}, we demonstrated a significant performance benefit from the combination of iterative solution methods with sparse matrix representations.  This benefit is amplified when the underlying matrix representation is implicit.  Letting $Time(\Q)$ denote the time complexity of evaluating a matrix-vector product with $\Q$, the time complexity of least squares inference is $ O(k \cdot Time(\Q)) $ where $k$ is the number of iterations, as before.  As shown in \cref{table:core} and \cref{table: composed}, $Time(\Q)$ is often $O(n)$, resulting in a very favorable $O(k n)$ time complexity for inference.

Iterative approaches, using implicit matrices, are well-suited to the other inference methods in \sys: least squares with non-negativity constraints (Eq.~(\ref{nnls})) and multiplicative weights.  For the former, we use the limited memory BFGS algorithm with bound constraints \cite{byrd1995limited}.  The time complexity of this algorithm is the same as LSMR, although the number of iterations needed for convergence may be different, and there is a constant factor overhead for storing the low-rank approximation to the inverse Hessian matrix.  The multiplicative weights inference algorithm is defined in an iterative manner and requires the same primitive methods as ordinary least squares and non-negative least squares.

%In \sys, query matrices tend to be sparse because the queries are typically being estimated using a noise addition mechanism (e.g. the Laplace mechanism).  Noise is calibrated to the sensitivity of $\Q$, which is measured as a column norm of the matrix.  For example, well-known query sets based on wavelets \cite{xiao2010differential} or binary trees \cite{hay2010boosting} have $\text{nnz}(\Q) = O(n \log_2{n}) $.  And an optimized hierarchical approach \cite{Qardaji13Understanding} found that higher branching factors lead to lower error, in which case $\text{nnz}(\Q) < O(n \log_{16}{n}) $.

%\gm{Is it possible to formalize this insight as a theorem?}

%As a consequence, sparse matrix representations are unusually effective for accelerating inference, and assuming $\text{nnz}(\Q) = O(n \log{n}) $, the overall time complexity is $ \sim O(k n \log{n}) $.

The combination of our general matrix construction techniques with iterative inference result in flexible inference capabilities for plan authors.  With relative freedom, they can construct measurement matrices, or combine measurements from parts of a plan, and apply a single generic inference operator, which will run efficiently.  In Sec.~\ref{sec:experiments} we show that using iterative least squares on implicitly represented matrices, we can scale inference to domains consisting of hundreds of millions of cells while staying within modest runtime bounds, well beyond what is possible with sparse or dense representations.

\section{Workload \!-\! based partition selection} \label{sec:wbdr}

In many cases, the goal of a differentially private algorithm (and its corresponding \sys plan) is to answer a given workload of queries, $\W$, defined in terms of a data vector $\x$.  We describe next a method for reducing the representation of the $\x$ vector to precisely the elements required to correctly answer the workload queries.  This is a new partition selection operator, called {\em workload-based partition selection}, which can be used as input to a \reduceOp transformation.  

	We define the partition below and prove that, under reasonable assumptions, using such domain reduction can never hurt accuracy.  We provide an algorithm for computing the partition, which can be executed using implicit workload representations.  Later, in \cref{sec:sub:wbdr_exp}, we will show empirically that using this partition in plans can offer significant improvement in both runtime and error.  %Thus adding this operator to {\em any} plan that answers a workload is a ``pure win''. 

%\ry{what about plans that use NNLS or MW instead of LS inference? (proof only applies to LS)}  

\subsection{The workload-based partition and its properties} \label{sec:sub:wkld}

For a workload $\W$ of linear queries described on data vector $\x$, it is often possible to define a reduction of $\x$, to a smaller $\x'$, and appropriately transform the workload to $\W'$, so that all workload query answers are preserved, i.e. $\W\x = \W'\x'$.  Intuitively, such a reduction is possible when a set of elements of $\x$ is not distinguished by the workload: each linear query in the workload either ignores it, or treats it in precisely the same way.  In that case, that portion of the domain need not be represented by multiple cells, but instead by a single cell in a reduced data vector.  It is in this sense that the reduction is lossless with respect to the workload. Following this intuition, the domain reduction can be computed from the matrix representation $\W$ of the workload by finding groups of identical columns: elements of these groups will be merged in $\W$ to get $\W'$ while the corresponding cells in $\x$ are summed.
% \vspace{-2ex}
\begin{example} \label{ex:reduction}
Consider a table with schema Census(age, sex, salary). If the workload consists of queries Q1$(salary \leq 100K, sex=M)$ and Q2$(salary > 100K, sex=F)$ the workload only requires a data vector consisting of 2 cells. If the workload consists of all 1-way marginals then no workload-based data reduction is possible.
\end{example}
% \vspace{-2ex}
Note that calculating this \hl{partition} only requires knowledge of the workload and is therefore done in the unprotected client space (and does not consume the privacy budget).  The \hl{partition} is then input to a \reduceOp transformation operator carried out by the protected kernel and its stability is one.
%Due to space constraints we refer the reader to the Appendix~\ref{appendix:wbdr} for a technical description of this operator (\cref{def:P}), a proof that this operator does not lose any information for answering queries (\cref{prop:wbdr}), a clever algorithm that uses randomized hashing for computing the transformation efficiently (\cref{alg:reduce}), and a proof that workload based domain reduction can be used in any workload-answering plan without a loss in accuracy (\cref{thm:wbdrthm}).

The new \hl{workload-based partition selection operator} can be formalized in terms of a linear matrix operator, as follows:

\begin{definition}[Workload-based \hl{partition selection}] \label{def:P}
	Let $\vect{w}_1, \dots, \vect{w}_n$ denote the columns of $\W$ and let $ \vect{u}_1, \dots, \vect{u}_{\psize} $ denote those that are unique.  For $h(\vect{u}) = \{ j \mid \vect{w}_j = \vect{u} \} $, define the transformation matrix $\P \in \mathbb{R}^{\psize \times n} $ to have $P_{ij}=1$ if $j \in h(\vect{u}_i)$ and $ P_{ij} = 0 $ otherwise. The reverse transformation is the pseudo-inverse $\P^{+} \in \mathbb{R}^{n \times \psize}$.
\end{definition}
The matrix $\P$ defines a \hl{partition} of the data, which can be passed to \reduceOp to transform the data vector, and $\P^+$ can be used to transform the workload accordingly.
%We can also use $P$ and $P^+$ to project the reduced workload and data back to the original domain.
When $\P$ is passed to \reduceOp, the operator produces a new data vector $\x' = \P\x$ where $x'_i$ is the sum of entries in $\x$ that belong to $i^{th}$ group of $\P$.  When viewed as an operation on the workload, $\P^+$ merges duplicate columns by taking the row-wise average for each group.  This is formalized as follows:

%When viewed as an operation on the reduced workload, $P$ is a linear transformation that copies each column back into several columns on the full domain.  Similarly, when viewed as an operation on the data vector, $ P^+ $ maps a data vector from the reduced domain back to the full domain by uniformly expanding the reduced data vector to the entries corresponding to the same group in the full vector.

%We can think of $P$ in two ways: (1) it is an operation that transforms the data vector to the reduced domain and (2) it is an operation that transforms the reduced workload back to the full domain.  Similarly, we can think of $P^+$ in two ways: (1) it is an operation that transforms the original workload to the reduced domain and (2) it is an operation that transforms the reduced data vector to the full domain.

\begin{proposition}[properties: workload-based reduction]\label{prop:wbdr}
	Given transform matrix $\P$ and its pseudo-inverse $\P^{+}$, the following hold:
	\begin{itemize} \itemsep 0in
		\item $ \x' = \P \x $ is the reduced data vector;
		\item $ \W' = \W \P^+ $ is the workload matrix, represented over $\x'$;
		\item The transformation is lossless: $\W \x = \W' \x' $
		%\item The stability constant of the reduction defined by $P$ is 1. \gm{state this here? if so, add to proof}
	\end{itemize}

\end{proposition}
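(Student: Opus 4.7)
The plan is to establish the three claims in sequence, with the middle one reducing to a concrete characterization of the pseudo-inverse $\P^+$ that exploits the special structure of $\P$.

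First I would unpack the definition of $\P$ to observe two structural facts: every column of $\P$ has exactly one nonzero entry (since each column $\vect{w}_j$ of $\W$ equals exactly one of the unique representatives $\vect{u}_i$), and the rows of $\P$ have pairwise disjoint supports (since the sets $h(\vect{u}_1), \dots, h(\vect{u}_\psize)$ partition $\{1,\dots,n\}$). The first claim, $\x' = \P\x$, then follows directly by writing out $(\P\x)_i = \sum_{j \in h(\vect{u}_i)} x_j$, confirming that $\P$ sums together exactly the cells assigned to the $i$-th group.

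For the second claim, I would first derive an explicit formula for $\P^+$. Because $\P$ has full row rank and orthogonal rows with disjoint supports, $\P\P^T$ is the diagonal matrix $\mathbf{D}$ with $D_{ii} = |h(\vect{u}_i)|$. Hence $\P^+ = \P^T(\P\P^T)^{-1} = \P^T \mathbf{D}^{-1}$, so $(\P^+)_{ji} = 1/|h(\vect{u}_i)|$ when $j \in h(\vect{u}_i)$ and zero otherwise. Computing the $i$-th column of $\W\P^+$ then gives
\[
(\W\P^+)_{:,i} \;=\; \sum_{j \in h(\vect{u}_i)} \frac{1}{|h(\vect{u}_i)|}\,\vect{w}_j \;=\; \vect{u}_i,
\]
since every $\vect{w}_j$ with $j \in h(\vect{u}_i)$ equals $\vect{u}_i$ by definition of $h$. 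Thus $\W' = \W\P^+$ is the $m \times \psize$ matrix whose columns are the unique columns of $\W$, which is exactly the workload rewritten over the reduced vector $\x'$.

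For the losslessness claim, I would expand $\W\x$ by columns and regroup according to the partition induced by $h$:
\[
\W\x \;=\; \sum_{j=1}^n x_j \vect{w}_j \;=\; \sum_{i=1}^{\psize} \vect{u}_i \!\!\sum_{j \in h(\vect{u}_i)}\!\! x_j \;=\; \sum_{i=1}^{\psize} x'_i\, \vect{u}_i \;=\; \W'\x',
\]
using the first claim for the inner sum and the second claim for the columns of $\W'$.

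I do not expect any serious obstacle: the only delicate step is the pseudo-inverse calculation, and that is fully pinned down by the disjoint-support structure of the rows of $\P$. The key is simply to introduce the diagonal matrix $\mathbf{D}$ of group sizes early and write $\P^+ = \P^T \mathbf{D}^{-1}$ explicitly, after which both the column-wise description of $\W'$ and the regrouping identity for $\W\x$ fall out by one line each.
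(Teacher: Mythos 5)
Your proposal is correct and follows essentially the same route as the paper's proof: it identifies $\P^+ = \P^T\mathbf{D}^{-1}$ via $\P\P^T = \mathbf{D}$ from the disjoint row supports, shows the columns of $\W\P^+$ are the unique columns $\vect{u}_i$, and proves losslessness by the same column-regrouping identity. No substantive differences.
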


\begin{proof}
First note that $ \P^+ = \P^T \D^{-1} $ where $\D$ is the $\psize \times \psize$ diagonal matrix with $ D_{ii} = | h(\vect{u}_i) | $ for $h$ defined in Def. \ref{def:P}.  Since $\P$ has linearly independent rows, $ \P^+ = \P^T (\P \P^T)^{-1} $ and $ \P \P^T = \D $ because $ h(\vect{u}_i) $ and $ h(\vect{u}_j) $ are disjoint for $ i \neq j $.  By definition of $\P$, we see that $ x'_i = \sum_{j \in h(\vect{u}_i)} x_j $ for $ 1 \leq i \leq \psize $.  Similarly, the $i^{th}$ column of $\W'$ is given by $\vect{w}'_i = \frac{1}{|h(\vect{u}_i)|} \sum_{j \in h(\vect{u}_i)} \vect{w}_j $.  Since $ \vect{w}_j = \vect{u}_i $ when $ j \in h(\vect{u}_i) $, we have $ \vect{w}'_i = \vect{u}_i $, which shows that $\W'$ is just $\W$ with the duplicate columns removed.  Using these definitions, we show that the transformation is lossless:
$$ \W \x = \sum_{i=1}^n \vect{w}_i x_i = \sum_{i=1}^\psize \vect{u}_i \sum_{j \in h(\vect{u}_i)} x_j = \sum_{i=1}^\psize \vect{w}'_i x'_i = \W' \x' $$
\end{proof}

As noted in Example \ref{ex:reduction}, not all workloads allow for reduction (in some cases, the $\P$ matrix computed above is the identity).  But others may allow a significant reduction, which improves the efficiency of subsequent operators.  Less obvious is that workload-based data reduction would impact accuracy.  In fact, many query selection methods from existing work depend implicitly on the representation of the data in vector form, and these approaches may be improved by domain reduction.  In Sec.~\ref{thm:wbdrthm} we measure the impact of this transform on accuracy and efficiency.

% This is helpful explanation but cut for space:
%Even one of the most basic approaches, which applies the Laplace mechanism to estimate each cell of the data vector, may be substantially improved by data reduction.  Intuitively, this method will estimate each cell in a group by an independent noisy count, while the workload asks only for the sum of the value of the grouped cells.  With independent noise, using the sum as the estimate is much noisier than using a single noisy estimate of the group.

%For example, hierarchical techniques build tree structured measurements who height and/or branching factor depends on the size of the data vector.

We show next that this reduction does not hurt accuracy: for any selected set of measurement queries, their reduction will provide lower error after transformation.

\begin{restatable}{theorem}{wbdrthm}
	\label{thm:wbdrthm}
	Given a workload $\W$ and data vector $\x$, let $\Q$ be any query matrix that answers $\W$.  Then if $ \vect{q'} = \vect{q} \P^+ $ is a reduced query and $ \Q' = \Q \P^+ $ is the query matrix on the reduced domain, $Error_{\vect{q'}}(\Q') \leq Error_{\vect{q}}(\Q) $ for all $ \vect{q} \in \W$.
\end{restatable}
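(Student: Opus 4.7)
The plan is to exploit the characterization of error from the earlier theorem, $Error_{\Q}(\q) = \q (\Q^T \Q)^{-1} \q^T$ (using the pseudo-inverse when $\Q^T \Q$ is rank-deficient, which it typically is once $\W$ admits a nontrivial reduction), together with the Gauss--Markov optimality of least squares among linear unbiased estimators. I will assume, as seems implicit in ``$\Q$ answers $\W$'', that every row of $\W$ lies in the row span of $\Q$, which is exactly what guarantees that $\q \x$ is estimable from $\y = \Q \x + \vect{b}$.

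First, I would establish two structural identities. Because $\Q$ answers $\W$ and $\W$ has identical columns wherever $\P$ merges, $\Q$ itself cannot distinguish the merged coordinates, so $\Q \P^+ \P = \Q$; similarly $\q \P^+ \P = \q$ for every $\q \in \W$. Combined with Proposition~\ref{prop:wbdr}, this yields the measurement-level identity $\Q \x = (\Q \P^+)(\P \x) = \Q' \x'$ and the estimand identity $\q \x = \q' \x'$. Second, I would check that noise magnitudes are comparable: each column of $\Q'$ is the average of the columns of $\Q$ lying in one group $h(\vect{u}_i)$, so $\|\Q'\|_1 \leq \|\Q\|_1$ by the triangle inequality. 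Hence the Laplace noise scale needed for $\Q'$ under \op{Vector Laplace} is no larger than for $\Q$; it suffices to prove the inequality in the regime where the per-measurement variance is the same on both domains.

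The core step is a Gauss--Markov transfer argument. Let $\hat{\theta} = \vect{r}^T \y$ be any linear unbiased estimator of $\q \x$ on the original domain, so $\vect{r}^T \Q = \q$ and $\mathrm{Var}(\hat{\theta}) = \|\vect{r}\|_2^2$. Applying the same coefficients to the reduced measurements gives $\vect{r}^T \y'$, and $\vect{r}^T \Q' = \vect{r}^T \Q \P^+ = \q \P^+ = \q'$, so this is an unbiased estimator of $\q' \x' = \q \x$ on the reduced domain with identical variance. Thus every unbiased estimator on the original domain descends to an unbiased estimator on the reduced domain of the same variance. By Gauss--Markov, the BLUE on the reduced domain, whose variance is precisely $Error_{\q'}(\Q')$, is at most as large as this variance. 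Optimizing over all choices of $\vect{r}$ yields $Error_{\q'}(\Q') \leq Error_{\q}(\Q)$, and folding back the (weakly smaller) sensitivity of $\Q'$ only strengthens the inequality.

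The main obstacle I expect is technical rather than conceptual: making the pseudo-inverse bookkeeping precise when $\Q^T \Q$ is singular, and articulating cleanly why $\q$ being in the row span of $\Q$ (i.e. the ``answers $\W$'' hypothesis) is exactly what is needed for the BLUE to exist and for the coefficient vector $\vect{r}$ above to be well-defined. Once that subtlety is handled, the two identities $\Q = \Q \P^+ \P$ and $\q = \q \P^+ \P$ do essentially all of the work, and the rest is a one-line appeal to Gauss--Markov.
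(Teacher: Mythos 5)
Your core argument is sound and is, in essence, the paper's own proof in statistical clothing. The paper bounds the sensitivity term by noting that each column of the reduced matrix is an average of a group of columns of $\Q$, so $\norm{\Q\P^+}_1 \leq \norm{\Q}_1$ by the triangle inequality (your noise-scale check), and it bounds the reconstruction term by observing that any $\z$ with $\z\Q=\q$ also satisfies $\z(\Q\P^+)=\q\P^+$, so the minimum-norm solution of the reduced system has norm at most $\norm{\q\Q^+}_2$. Your Gauss--Markov transfer of the coefficient vector $\vect{r}$ is exactly this minimum-norm argument, because the BLUE variance at unit noise equals the squared norm of the minimum-norm certificate $\q\Q^+$; the two proofs coincide up to vocabulary.

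One claim in your writeup is false, though fortunately not load-bearing: ``$\Q$ answers $\W$'' does \emph{not} force $\Q\P^+\P=\Q$, nor the measurement-level identity $\Q\x=\Q'\x'$. Take $\W$ to be the single total query (all columns of $\W$ identical, so $\P$ is the $1\times n$ all-ones matrix) and $\Q$ the identity strategy, which certainly answers $\W$; then $\Q\P^+\P=\frac{1}{n}\mathbf{1}\mathbf{1}^T\neq\Q$, and $\Q'\x'$ is the uniform expansion of the noiseless total, not $\Q\x$. A strategy may well distinguish coordinates that the workload does not. What is true---and all your argument actually uses---is $\q\P^+\P=\q$ for rows $\q$ of $\W$, which follows directly from the construction of $\P$ (each workload row is constant on each column group), not from any property of $\Q$. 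Since your transfer step only needs $\vect{r}^T\Q=\q$ and $\Q'=\Q\P^+$ to conclude $\vect{r}^T\Q'=\q'$, and unbiasedness on the reduced domain is defined with respect to its own measurement model, the false identity can simply be excised; with that repair your proof goes through and matches the paper's.
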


\begin{proof}
	We use the definition of squared error from~\cite{li2015matrix} which shows that $ \forall \q \in \W $, $ Error_{\q}(\Q) \propto \norm{ \Q }_1^2 \norm{ \q \Q^+ }_2^2 $ as long as $\Q$ supports $\W$.  
Let $\hat{\m}_j$ and $\m_j$ denote the $j^{th}$ column of $\hat{\Q}$ and $ \Q $ respectively.
First we show that $ \norm{ \hat{\Q} }_1 \leq \norm{ \Q }_1 $:

\begin{align*}
\norm{ \hat{\Q} }_1 &= \max_{1 \leq i \leq p} \norm{ \hat{\m}_j } \\
&= \max_{1 \leq i \leq p} \norm{ \frac{1}{|h(\u_j)|} \sum_{j \in h(\u_i)} \m_j } \\
&\leq  \max_{1 \leq i \leq p} \frac{1}{|h(\u_i)|} \sum_{j \in h(\u_i)} \norm{ \m_j }_1 \\
&\leq \max_{1 \leq i \leq p} \max_{j \in h(\u_i)} \norm{ \m_j }_1  = \\
&= \max_{1 \leq i \leq n} \norm{ \m_i }_1  \\
&= \norm{ \Q }_1
\end{align*}

	where $ h(\u) $ and $ \u_i $ are defined in definition \ref{def:P}.  Now we show that $ \norm{ \q \Q^+ }_2 \leq \norm{ \hat{\q} \hat{\Q}^+ }_2 $.  Observe that it is possible to to write $\q$ as a linear combination of the rows of $\Q$ since $\Q$ supports $\W$.  That is, there exists a $\z$ satisfying $ \z \Q = \q $.  In general, there may be infinitely many solutions to this linear system, but $ \z = \q \Q^+ $ is the \emph{minimum-norm} solution~\cite{penrose_1956}.  On the reduced domain, we also know there exists a $ \hat{\z} $ satisfying $ \hat{\z} \hat{\Q} = \hat{\q} $, or equivalently $ \hat{\z} \Q \P^+ = \q \P^+ $.   By making the substitution $ \z \Q = \q $, it's easy to see that $ \hat{\z} = \z $ is one solution to this linear system.  The minimum norm solution to this linear system is $ \hat{\z} = \hat{\q} \hat{\Q}^+ $, which implies $\norm{ \hat{\z} }_2 \leq \norm{ \z }_2 $.  This shows that $ \norm{ \hat{\q} \hat{\Q}^+ }_2 \leq \norm{ \q \Q^+ }_2 $, and it immediately follows that $ Error_{\hat{\q}}(\hat{\Q}) \leq Error_{\q}(\Q) $ as desired.
\end{proof}

\subsection{Computing the partition}\label{appendix:wbdr}

 The computation of the \hl{partition} $\P$ in Def. \ref{def:P} is conceptually straightforward: it simply requires finding the unique columns of $\W$ and grouping them.  Finding the unique columns of $\W$ exactly by inspecting the entries of $\W$ requires an explicit matrix in dense form, or materializing an implicit matrix.  Algorithm \ref{alg:reduce} provides an efficient method for finding the column groupings that does not require a explicit matrix representation, relying instead only on the primitive methods of transpose and matrix-vector product.  This approach is highly scalable and is compatible with the implicit matrix representations of the workload discussed in section~\ref{sec:implementation}.

\begin{algorithm}
	\caption{\label{alg:reduce} An algorithm for workload-based data reduction}
	
		\begin{algorithmic}[1]
			\Procedure{Compute reduction matrix}{$W$}
			\State \textbf{Input:} $ \wsize \times n$ matrix $\W$
			\State \textbf{Output:} $ \psize \times n $ matrix $\P$ where $ \psize \leq n $
			\State set $ \vect{v} $ = vector of $\wsize$ samples from Uniform$(0,1) $ \Comment $ 1 \times \wsize $
			\State compute $\vect{h}$ = $ \W^T \vect{v} $ \Comment $ 1 \times n \;$
			\State let $G = g_1, \dots, g_{\psize}$ be groups of common values in $\vect{h}$
			\State initialize matrix $\P$ with zeros \Comment $ \psize \times n \,$
			\For {$g_i$ in $G$}
			\State set row $i$ of $\P$ to $1$ in each position of $g_i$
			\EndFor
			\State return $\P$
			\EndProcedure
	\end{algorithmic}
\end{algorithm}

%The idea is conceptually related to hashing.  First, generate a vector $v$ by sampling elements from some continuous distribution such as $\text{Uniform}(0,1)$.  Then, simply compute vector $h = v W$.

By grouping the elements of $\vect{h}$ (line 6) we recover the column groupings of $\W$, because if $ \vect{w}_i = \vect{w}_j $ then $ h_i = h_j $ and if $ \vect{w}_i \neq \vect{w}_j $ then $ P(h_i = h_j) = 0 $ since $h_i$ and $h_j$ are continuous random variables.  While algorithm \cref{alg:reduce} is a randomized algorithm, it returns the correct result almost surely.  The probability of incorrectly grouping two different columns is approximately $10^{-16}$ with a 64-bit floating point representation, but if needed we can repeat the procedure $k$ times until the probability of failure ($\sim 10^{-16k}$) is vanishingly small.

\section{Case studies: \sys in action} \label{sec:case_studies}

In this section we put \sys into action by developing new algorithms.  First, we show that it is easy to re-design and improve existing algorithms by combining operators in new ways. In particular, we develop a variant of the MWEM algorithm with significantly improved accuracy.  Then we use \sys to tackle two practical use-cases, constructing new plans which offer state-of-the-art accuracy.  We evaluate all of the proposed plan in \cref{sec:experiments}. %While existing algorithms can be applied to these cases, particular applications often benefit from custom algorithm design.  We show below how a plan author can construct novel plans that outperform existing solutions, solely using operators implemented in \sys.

\subsection{Recombination of operators to improve MWEM} \label{sec:mwem_variants}

Using \sys, we design new variants of the well-known {\em Multiplicative Weights Exponential Mechanism} (MWEM) \cite{hardt2012a-simple} algorithm.  MWEM repeatedly derives the worst-approximated workload query with respect to its current estimate of the data, then measures the selected query, and uses the multiplicative weights update rule to refine its estimate, often along with any past measurements taken.  This repeats a number of times, determined by an input parameter.

    When viewed as a plan in \sys, a deficiency of MWEM becomes apparent.  Its query selection operator selects a \emph{single} query to measure whereas most query selection operators select a set of queries such that the queries in the set measure disjoint partitions of the data.  By the parallel composition property of differential privacy, measuring the entire set has the same privacy cost as asking any single query from the set.  This means that MWEM could be measuring more than a single query per round (with no additional consumption of the privacy budget).
% In other words it is possible to also measure (with no additional consumption of the privacy budget) any set of queries which can be composed in parallel with the selected query.
To exploit this opportunity, we designed an augmented query selection operator that adds to the worst-approximated query by attempting to build a binary hierarchical set of queries over the rounds of the algorithm.  In round one, it adds any unit length queries that do not intersect with the selected query.  In round two, it adds length two queries, and so on.

Adding more measurements to MWEM has an undesirable side effect on runtime, however.  Because it measures a much larger number of queries across rounds of the algorithm and the runtime of multiplicative weights inference scales with the number of measured queries, inference can be considerably slower.  Thus, we also use replace it with a version of least-squares with a non-negativity constraint (NNLS) and incorporate a high-confidence estimate of the total which is assumed by MWEM.

%\mhref{Inference incorporating constraints and prior information is a feature mentioned in Sec.~\ref{sec:inference}.}{still relevant?}

In total, we consider three MWEM variants: an alternative query selection operator (\planref{\planmwemb}), an alternative inference operator (\planref{\planmwemc}), and the addition of both alternative operators (\planref{\planmwemd}). These are shown in \cref{fig:plan_signatures} and evaluated in \cref{sec:experiments}.

\subsection{Census case-study} \label{sec:census_plans}

The U.S. Census Bureau collects data about U.S. citizens and releases a wide variety of tabulations describing the demographic properties of individuals.
%(e.g., like the SF-1 publication tables\footnote{\url{https://www.census.gov/prod/cen2010/doc/sf1.pdf}}).
%They are currently exploring the use of formal privacy methods, including differential privacy.
We consider a subset of the (publicly released) March 2000 Current Population Survey.  The data report on 49,436 heads-of-household describing their income, age (in years), race, marital status, and gender.  We divide Income into $5000$ uniform ranges from $(0,750000)$, age in 5 uniform ranges from $(0,100)$, and there are 7, 4 and 2 possible values for status, race and gender.

We author differentially private plans for answering a workload of queries  similar to Census tabulations.  This is challenging because the data domain is large and involves multiple dimensions.  The workloads we consider are: (a) the Identity workload (or counts on the full domain of 1.4M cells), (b) a workload of all 2-way marginals (age $\times$ gender, race $\times$ status, and so on),  and (c) a workload suggested by U.S. Census Bureau staff: Prefix(Income) which consists of all counting queries of the form (income $\in (0, i_{high}$), age=$a$, marital=$m$, race=$r$, gender=$g$) where $(0, i_{high})$ is an income range, and $a, m, r, g$ may be values from their resp. domains, or $<\!any\!>$.

 % it approximates the set of all range queries on the largest ordered domain.

%answering a workloads of queries which are similar to tabulations the Census Bureau releases.  Workloads are defined over the income field, divided into 5000 uniform ranges from $(0,750000)$, age in 5 uniform ranges from $(0,100)$, and all possible values for marital status (7), race (4), and gender (2).  This results in a total vectorized domain size of $5000\times5\times7\times 4\times 2=$ 1,400,000.  We consider the Identity workload, the workload of all 2-way marginals, and a workload suggested during consultation with U.S. Census staff: it approximates the set of all range queries on the largest ordered domain.  We refer to this workload as Prefix(Income) and it consists of all counting queries of the form (income $\in (0, i_{high}$), age=$a$, marital=$m$, race=$r$, gender=$g$) where $(0, i_{high})$ is an income range, and $a, m, r, g$ may be values from their respective domains, or $<\!any\!>$.

%\mh{these new plans are not motivated.  why did we design them in this way?}

%%\vskip 6pt \noindent {\subsecfnt Novel Plans.}
There are few existing algorithms suitable for this task.  We were unable to run the DAWA \cite{Li14Data-} algorithm directly on such a large domain. In addition, it was designed for 1d- and 2d- inputs.  One of the few algorithms designed to scale to high dimensions is PrivBayes \cite{Zhang2014}.  While not a workload-adaptive algorithm, PrivBayes generates synthetic data which can support the census workloads above. We use PrivBayes as a baseline and we use \sys to construct three new plans composed of operators in our library. The proposed plans are: \algoname{Hb-Striped} (\planref{\planhbstripe}), \algoname{Dawa-Striped} (\planref{\plandawastripe}), and \algoname{PrivBayesLS} (\planref{\planprivbayesls}).  The first two ``striped'' plans showcase the ability to adapt lower dimensional techniques to a higher dimensional problem avoiding  scalability issues. The third plan considers improving on  PrivBayes  by changing its inference step.

Both \algoname{Hb-Striped} and \algoname{Dawa-Striped} use the same plan structure: first they partition the full domain, then they execute subplans to select measurements for each partition, and lastly, given the measurement answers, they perform inference on the full domain and answer the workload queries.  The partitioning of the initial step is done as follows: given a high dimensional dataset with $N$ attributes and an attribute $A$ of that domain, our partitions are parallel ``stripes'' of that domain for each fixed value of the rest of the $N-1$ attributes, so that the measurements are essentially the one-dimensional histograms resulting from each stripe.  In the case of \algoname{Hb-Striped} (fully described in \cref{alg:Hb-striped}), the subplan executed on each partition is the \algoname{Hb} algorithm \cite{Qardaji13Understanding}, which builds an optimized hierarchical set of queries, while in the case of the \algoname{Dawa-Striped} the subplan is \algoname{DAWA} algorithm \cite{Li14Data-}.  Note that while the data-independent nature of the \algoname{Hb} subplan means that all the measurements from each stripe are the same, that is not the case with \algoname{Dawa}, which potentially selects different measurement queries for each stripe, depending on the local vector it sees.  For our experiments, the attribute chosen was Income, and for \algoname{Dawa-Striped} we set the \algoname{DAWA} parameter $\rho$ to $0.25$. %i.e., the part of the budget to be spent from the first step of \algoname{DAWA}.
\begin{algorithm}[h]
\caption{\algoname{Hb-Striped}}\label{alg:Hb-striped}
{\small
\begin{algorithmic}[1]
\State $D \gets$  \algoname{Protected}(source\_uri) \Comment{Init}
\State $\mathbf{x} \gets$ \algoname{T-Vectorize}($D$)		\Comment{Transform}
\State $\mathbf{R} \gets$ StripePartition(Attr)	\Comment{Partition Selection}
\State $\mathbf{x}_R \gets$ \algoname{\partitionOp}($\mathbf{x}$, $\mathbf{R}$)
\State $\mathbf{M} \gets \emptyset$
\State $\mathbf{y} \gets \emptyset$
\For {$\mathbf{x'} \in \mathbf{x}_R$}
	\State $\mathbf{M} \gets$ $\mathbf{M} \cup$ \algoname{Hb}($\mathbf{x'}$) 		\Comment{Query Selection}
	\State $\mathbf{y} \gets$ $\mathbf{y} \cup$ \algoname{VecLaplace}($\mathbf{x'}$, $\mathbf{M}$, $\epsilon$) 	\Comment{Query}
\EndFor
\State $\hat{\mathbf{x}}  \gets$ LS($\mathbf{M}, \mathbf{y}$)
\State \Return  $\hat{\mathbf{x}}$	\Comment{Output}
\end{algorithmic}
}
\end{algorithm}

Selection in \op{HB-Striped}'s subplans are data-independent, unlike in \op{DAWA-Striped}, so the exact same set of measurements will be selected on each partition. As introduced in \cref{sec:sub:matrix_construction}, this set of measurements can be represented compactly as a Kronecker product. So we introduce a new selection operator \op{Stripe(attr)} where a global measurement is composed by constructing the Kronecker product of HB measurements on the stripe dimension and Identity matrices on other dimensions. \op{HB-Striped\_kron} (\planref \planhbstripekron) is a sequence starting with the new \op{SS} selection operator, followed by Laplace measurement and LS inference.  The complete plan is shown in \cref{alg:Hb-striped-kron}.
This non-iterative alternative implementation is more efficient and we compare the efficiency and scalability of the two implementations in \cref{sec:sub:plan_efficiency}.

\begin{algorithm}[h]
\caption{\algoname{Hb-Striped\_kron}}\label{alg:Hb-striped-kron}
{\small
\begin{algorithmic}[1]
\State $D \gets$  \algoname{Protected}(source\_uri) \Comment{Init}
\State $\mathbf{x} \gets$ \algoname{T-Vectorize}($D$)		\Comment{Transform}

\State $\mathbf{M} \gets$ \algoname{StripeSelect}(Attr) 		\Comment{Query Selection}
\State $\mathbf{y} \gets$ \algoname{VecLaplace}($\mathbf{x}$, $\mathbf{M}$, $\epsilon_3$) 	\Comment{Query}

\State $\hat{\mathbf{x}}  \gets$ LS($\mathbf{M}, \mathbf{y}$)
\State \Return  $\hat{\mathbf{x}}$	\Comment{Output}
\end{algorithmic}
}
\end{algorithm}

%
%
%Our other novel plan is \algoname{DawaBayes} and as the name suggests it is composed from both low dimensional operators used in \algoname{DAWA}, as well as a high dimensional operator proposed in \algoname{PrivBayes}. Our intuition behind this plan is to use the best ideas from each algorithm to improve the quality of results for the task of answering queries in medium-sized dimensional data.
%\algoname{DawaBayes} consists of the following steps: (a) data-dependent domain transformation, (b) data-dependent measurement selection, (c) measurement answering via the Laplace mechanism, and (d) inference via the NNLS engine. Parts (a), (b), and (c) are performed on the sensitive data, so we assign budgets $\epsilon_1$, $\epsilon_2$, and $\epsilon_3$ respectively. For the domain transformation we project the data on the Income attribute, use the \algoname{DAWA} engine to learn a partitioning of the Income domain, and transform the full data based on this partitioning before running (b). For measurement selection we use the function $F$ defined in 4.3 of \cite{Zhang2014} to learn a Bayesian network on the attributes of the data, based on which we define what measurements to take (i.e., in a top-down fashion we measure the conditionals defined by the nodes). The measurements are realized using the Laplace mechanism and to get the workload answers we use the NNLS inference engine on the noisy answers. We set $\epsilon_1 = 0.05$, $\epsilon_2 = 0.05$ and $\epsilon_3 = 0.9$.
%

Our final plan is a variant of \algoname{PrivBayes} in which we replace the original inference method with least squares, retaining the original \algoname{PrivBayes} query selection and query steps.  We call this algorithm \algoname{PrivBayesLS} and it's fully described in \cref{alg:PrivBayesLS}.

\begin{algorithm}[h]
\caption{ \algoname{PrivBayesLS}}\label{alg:PrivBayesLS}
{\small
\begin{algorithmic}[1]
\State $D \gets$  \algoname{Protected}(source\_uri) \Comment{Init}
\State $\mathbf{x} \gets$ \algoname{T-Vectorize}($D$)		\Comment{Transform}
\State $\mathbf{M} \gets$ \algoname{PBSelect}($\mathbf{x}, \epsilon_2$) 		\Comment{Query Selection}
\State $\mathbf{y} \gets$ \algoname{VecLaplace}($\mathbf{x}$, $\mathbf{M}$, $\epsilon_3$) 	\Comment{Query}
\State $\hat{\mathbf{x}}$ $\gets$ LS($\mathbf{M}$, $\mathbf{y}$)			\Comment{Inference}
\State
\Return $\mathbf{W}\cdot \hat{\mathbf{x}}$	\Comment{Output}
\end{algorithmic}
}
\end{algorithm}

We evaluate the error incurred by these plans in \cref{sec:census:exp}, and show that the best of our plans outperforms the state-of-the-art \algoname{PrivBayes} by at least 10$\times$ in terms of error.

\subsection{Naive Bayes case-study}
We also demonstrate how \sys can be used for constructing a  Naive Bayes classifier. To learn a NaiveBayes classifier that predicts a binary label attribute $Y$ using predictor variables $(X_1, \ldots, X_k)$ requires computing 2k+1 1d histograms: a histogram on $Y$, histogram on each $X_i$ conditioned on each value on $Y$. We design \sys plans to compute this workload of 2k+1 histograms, and  use them to fit the classifier under the Multinomial statistical model \cite{kotsogiannis2017pythia}.

 % Given a training dataset $D = \{(x_1, x_2, \ldots, \allowbreak x_k, y)\}$, fitting

 %  Given a training dataset $D = \{(x_1, x_2, \ldots, \allowbreak x_k, y)\}$, fitting a NaiveBayes classifier that predicts a binary label attribute $Y$ using predictor variables $(X_1, \ldots, X_k)$ requires computing 2k+1 1d histograms: a histogram on $Y$, histogram on each $X_i$ conditioned on each value on $Y$. We design \sys plans to compute this workload of 2k+1 histograms, and  use them to fit the classifier under the Multinomial statistical model \cite{kotsogiannis2017pythia}.

We develop two new plans and compare them to two plans that correspond to algorithms considered in prior work. \algoname{Workload} represents the 2k+1 histograms as a matrix, and uses \op{Vector Laplace} to estimate the histogram counts. This corresponds to a technique proposed in the literature \cite{cormode2011personal}.   The other baseline is \algoname{Identity} (Plan \planidentity): it estimates all  point queries in the contingency table defined by the attributes, adds noise to it,  and  marginalizes the noisy contingency table to compute the histograms.

The first new plan is \algoname{WorkloadLS} which runs \algoname{Workload} followed by a least squares inference operator, which for this specific workload would make all  histograms have consistent totals. Our second plan is called \algoname{SelectLS} (fully described in \cref{alg:SelectLS}) and  selects a different algorithm (subplan) for estimating each of the histograms.
\begin{algorithm}[h]
\caption{\algoname{SelectLS}}\label{alg:SelectLS}
{\small
\begin{algorithmic}[1]
\State $D \gets$  \algoname{Protected}(source\_uri) \Comment{Init}
\State $\mathbf{x} \gets$ \algoname{T-Vectorize}($D$)		\Comment{Transform}
\State $\mathbf{R} \gets$ \algoname{MargReduction}($x$, Att)	\Comment{Partition Selection}

\State $\mathbf{M} \gets \emptyset$, $\mathbf{y} \gets \emptyset$

\For {$i = 1:k$} \Comment{Iterate over Dimensions}
	\State $\mathbf{x'} \gets $ \algoname{\reduceOp}($\mathbf{x}, \mathbf{R}_i$)
	% domain reduction R[d]
	\If {$\mathbf{DomainSize}_i > 80$}
		\State $\mathbf{R'} \gets$ \algoname{RDawa}  ($\mathbf{x'}, \epsilon_1$/k) \Comment{Partition Selection}
		\State $\mathbf{x'}_R \gets $ \algoname{\reduceOp}($\mathbf{x}$, $\mathbf{R'}$)
		\State $\mathbf{M} \gets $ \algoname{GreedyH}($\mathbf{x'}_R$) \Comment{Query Selection}
		\State $\mathbf{y} \gets$ $\mathbf{y} \cup$ \algoname{VecLaplace}($\mathbf{x'}_R$, $\mathbf{M}$, $\epsilon_2$/k) 	\Comment{Query}
	\Else
		\State $\mathbf{M} \gets$ $\mathbf{M} \cup$ \algoname{Identity}($\mathbf{x'}$) 		\Comment{Query Selection}
		\State $\mathbf{y} \gets$ $\mathbf{y} \cup$ \algoname{VecLaplace}($\mathbf{x'}$, $M$, $\epsilon$/k) 	\Comment{Query}
	\EndIf
	\State $\mathbf{x} \gets $ \algoname{\reduceOp}($\mathbf{x'}, R_i$) \Comment{Domain Expansion}
\EndFor
\State $\hat{\mathbf{x}}  \gets$ LS($\mathbf{M}, \mathbf{y}$)
\State \Return  $\hat{x}$	\Comment{Output}
\end{algorithmic}
}
\end{algorithm}
\algoname{SelectLS} first runs 2k+1 domain reductions to compute 2k+1 vectors, one for each histogram. Then, for each vector, \algoname{SelectLS} uses a conditional statement to select between two subplans: if the vector size is less than $80$, \algoname{Identity} is chosen, else a subplan that runs DAWA  \hl{partition} selection followed by \algoname{Identity} is chosen.
We combine the answers from all subplans and use least squares inference jointly on all measurements. The inputs to the inference operator are the noisy answers and the workload of effective queries on the full domain.  In \cref{sec:nb:exp} we show that our new plans not only outperform existing plans, but also approach the accuracy of the non-private classifier in some cases.

\eat{
\subsection{New plans for case studies}

\todo[inline]{integrate plans into case study section}

\label{appendix:plans}
Algorithms~\ref{alg:Hb-striped}, \ref{alg:PrivBayesLS} and \ref{alg:SelectLS} fully describe the new plans proposed to support the Census and Naive Bayes use cases.

}

\eat{\begin{algorithm}[h]
\caption{DAWA-Stripe}\label{alg:DAWA-stripe}
{\small \begin{algorithmic}[1]
\State $D \gets$  \algoname{Protected}(source\_uri) \Comment{Init}
\State $\mathbf{x} \gets$ \algoname{T-Vectorize}($D$)		\Comment{Transform}
\State $\mathbf{R} \gets$ \algoname{RStripe}($\mathbf{x}$, Att)	\Comment{Partition Selection}
\State $\mathbf{x}_R \gets$ \algoname{\reduceOp}($\mathbf{x}$, $\mathbf{R}$)
\State $\mathbf{M} \gets \emptyset$, $\mathbf{y} \gets \emptyset$
\For {$\mathbf{x'} \in \mathbf{x}_R$}
	\State $\mathbf{R} \gets$ \algoname{RDawa} ($\mathbf{x'}, \epsilon_1$) \Comment{Partition Selection}
	\State $\mathbf{x'}_R \gets $ \algoname{\reduceOp} ($\mathbf{x'}, \mathbf{R}$)
	\State $\mathbf{M} \gets \mathbf{M} \cup $ \algoname{GreedyH}($\mathbf{x'}_R$) \Comment{Query Selection}
	\State $\mathbf{y} \gets$ $\mathbf{y} \cup$ \algoname{VecLaplace}($\mathbf{M}$, $\epsilon_2$) 	\Comment{Query}
\EndFor
\State $\hat{\mathbf{x}}  \gets$ LS($\mathbf{M},\mathbf{y}$)
\State \Return  $\hat{\mathbf{x}}$	\Comment{Output}
\end{algorithmic}
}
\end{algorithm}
}

\eat{\begin{algorithm}[h]
\caption{MWEM}
{\small
\begin{algorithmic}[1]
\State $D \gets$  \algoname{Protected}(source\_uri) \Comment{Init}
\State $\hat{x} \gets$ \algoname{T-Vectorize}($D$)		\Comment{Transform}
\For {$i = 1:T$}
	\State $\mathbf{M} \gets$ \algoname{WorstApprox}($\hat{x}$, $\epsilon / 2T$) 		\Comment{Query Selection}
	\State $\mathbf{y} \gets$ \algoname{VectorLaplace}($\mathbf{M}$, $\epsilon/ 2T$) 	\Comment{Query}
	\State $\hat{\mathbf{x}}$ $\gets$ \algoname{MultWeights}($\mathbf{M}$, $\mathbf{y}$)			\Comment{Inference}
\EndFor
\State
\Return  $\hat{\mathbf{x}}$	\Comment{Output}
\end{algorithmic}
}
\end{algorithm}}

\eat{
\begin{algorithm}[h]
\caption{\algoname{WorkloadLS}}
{\small
\begin{algorithmic}[1]
\State $D \gets$  \algoname{Protected}(source\_uri) \Comment{Init}
\State $\mathbf{x} \gets$ \algoname{T-Vectorize}($D$)		\Comment{Transform}
\State $\mathbf{M} \gets$ \algoname{MargSelect}($\mathbf{x}$, Att)	\Comment{Query Selection}
\State $\mathbf{y} \gets$ \algoname{VecLaplace}($\mathbf{M}$, $\epsilon$) 	\Comment{Query}
\State $\hat{\mathbf{x}}$ $\gets$ LS($\mathbf{M}$, $\mathbf{y}$)			\Comment{Inference}
\State \Return  $\hat{\mathbf{x}}$	\Comment{Output}
\end{algorithmic}
}
\end{algorithm}
}

\eat{
\begin{algorithm}[h]
\caption{ DawaBayes}
\begin{algorithmic}[1]
\State $D \gets$  \algoname{Protected}(source\_uri) \Comment{Init}
\State $x \gets$ \algoname{T-Vectorize}($D$)		\Comment{Transform}
\State $R \gets$ DawaReduction($x$, $\epsilon_1$)	\Comment{Reduction Selection}
\State $x_R \gets$ reduce($x, R$)				\Comment{Vector Transform}
\State $M \gets$ PrivBayesSelect($x_R, \epsilon_2$) 		\Comment{Query Selection}
\State $y \gets$ VectorLaplace($M$, $\epsilon_3$) 	\Comment{Query}
\State $\hat{x}$ $\gets$ NNLS($M$, $y$)			\Comment{Inference}
\State
\Return dot\_product($\mathbf{W}$, $\hat{x}$)	\Comment{Output}
\end{algorithmic}
\end{algorithm}
}

%!TEX root = ms.tex

\section{Experimental evaluation} \label{sec:experiments}

Our prototype implementation of \sys, including all algorithms and variants used below, consists of $7.9k$ lines of code: $25\%$ is the framework itself, $46\%$ consist of operator implementations, $14\%$ consist of definitions of plans used in our experiments  and the remaining $15\%$ are tests and examples provided for the users.

In this section, we first report the results of using \sys in the case studies of \cref{sec:case_studies}. Then compare efficiency and scalability of plans and inference operators with different implementation choices. Finally, we evaluate the impact of workload-based domain reduction introduced in \cref{sec:sub:wkld}.

\eat{
\mh{SUGGESTED REORGANIZATION:}
{\color{blue}
\begin{itemize}
\item Case studies: accuracy of novel ektelo plans.  Subsections: MWEM variants, Census, NBayes.
\item Implementation comparison: scalability and efficiency.  Subsections on low-D plans (\cref{fig:plan_efficiency}), high-D plans (\ref{MISSING}), inference (updated version of \cref{fig:scaling_inference}).
\item WBDR (\cref{table:llreduce}) -- this feels a little orphaned
\item Summary
\end{itemize}
}
}

\subsection{Case studies}

% \mh{can we say anything about runtime for Census and NaiveBayes case studies?  It looks weird that we discuss runtime for MWEM but not the other two.}

% \dz{We don't have existing numbers for them. I can rerun and time them if we need detailed runtime numbers. The Census experiment takes around one day to run. I have a rough esitmation of how plans compare in runtime if disscussion with words are OK.}

\subsubsection{MWEM: improved query selection \& inference} \label{sec:sub:mwem}

We evaluate the three new plans described in \cref{sec:mwem_variants} which were variants of \sys plan for the MWEM~\cite{hardt2012a-simple} algorithm.  Recall that the variants were achieved by replacing key operators in the MWEM plan.  These algorithms are data-dependent algorithms so we evaluate them over a diverse collection of 10 datasets taken from DPBench \cite{Hay16Principled}.  The results are shown in \cref{tbl:mwem_variants}.

%{}
\begin{table}[h]
\centering
\caption{\label{tbl:mwem_variants} For three new algorithms, (b), (c), and (d), the multiplicative factors by which error is improved, presented as (min, mean, max) over datasets.  For runtime, the mean is shown, normalized to the runtime of standard MWEM. (1D, n=4096, W=RandomRange(1000), $\epsilon=0.1$)}
% \vspace{-2ex}
\includegraphics[width=.8\columnwidth]{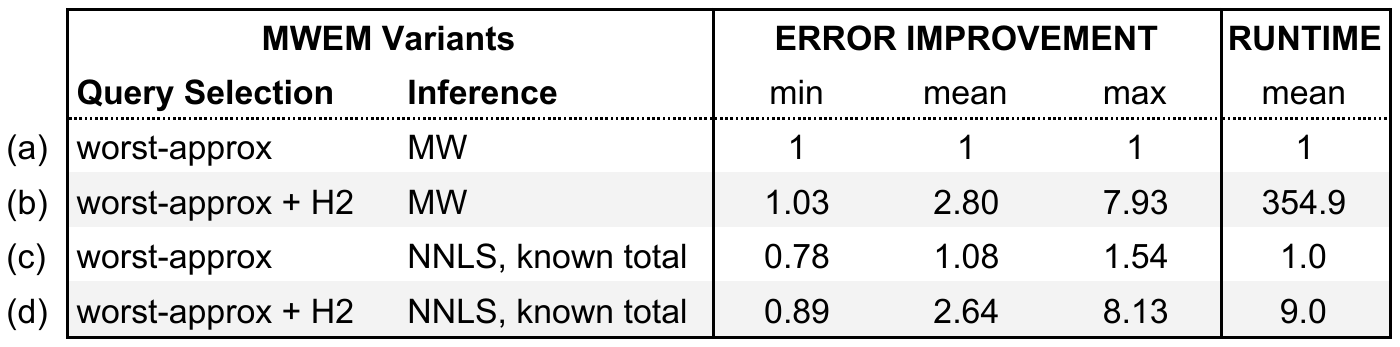}
% \vspace{-2ex}
\end{table}
%
%---\planref{\planmwemb} which
%
The performance of the first variant, line (b), shows that the augementing query selection with H2 can significantly improve error: by a factor of 2.8 on average (over various input datasets) and by as much as a factor of 7.9.
% , and it is never outperformed by standard MWEM.
(Error and runtime measures are normalized to the values for the original MWEM; min/mean/max error values represent variation across datasets.)
% \vspace{-1ex}
Unfortunately, this operator substitution has a considerable impact on performance: the added queries slow down by a factor of more than 300.  But combining augmented query selection with NNLS inference, line (d), reduces runtime significantly: it is still slower than the original MWEM algorithm, but by only a factor of 9.  Using the original MWEM query selection with NNLS inference, line (c), has largely equivalent error and runtime to the original MWEM.  Thus, the performance gains of NNLS inference over MW appear to be most pronounced when the number of measured queries is large.

%We believe that this additional runtime cost is directly caused by the fact that additional measurements simply require more work during inference (i.e., the increased runtime is a necessary cost for the gain in accuracy).  As evidence, we can look at line (b), which is MWEM variant (\planref{\planmwemd}) where the multiplicative weights inference is replaced with NNLS.  We see that the plan is roughly equivalent in both error and runtime to the plan using multiplicative weights.

%\gm{NNLS inference has another advantage not mentioned here: it can correctly include measurements with different noise magnitudes.}

%Below, we first address the problem of answering complex workloads of linear queries over high dimensional Census data, constructing new algorithms that offer state-of-the-art performance.

%This allows us to compare critical subroutines of algorithms, rather than whole algorithms, permitting the investigation of questions like the following, which have not been answered in the research literature: {\em What role does the multiplicative weights inference method play in the overall performance of the MWEM algorithm?} or {\em Among the algorithms that privately partition an input dataset into approximately uniform regions, which method is most effective}?  The examples of plan variants presented below also show the important role optimization will play in \sys, since they demonstrate the significant performance improvements possible from manipulating plans.

\subsubsection{Census data analysis}\label{sec:census:exp}

%In this section we compare the \sys plans proposed in \cref{sec:census_plans}, measuring their effectiveness in computing workloads inspired by Census tabulations. We compare our three new plans \algoname{PrivBayesLS},  \algoname{Hb-Striped}, and \algoname{DAWA-Striped} with baseline algorithms \algoname{Identity} (plan 1 in \cref{fig:plan_signatures}) and \algoname{PrivBayes}, our \sys reimplementation of a state-of-the-art algorithm for high dimensional data \cite{Zhang2014}.

In this section we compare the \sys plans proposed in \cref{sec:census_plans}, measuring their effectiveness in computing workloads inspired by Census tabulations.  We compare our three new plans \algoname{PrivBayesLS},  \algoname{Hb-Striped}, and \algoname{DAWA-Striped} with baseline algorithms \algoname{Identity} (\planref{1} in \cref{fig:plan_signatures}) and \algoname{PrivBayes}, our \sys reimplementation of a state-of-the-art algorithm for high dimensional data \cite{Zhang2014}.

\eat{A prime competitor for this task is PrivBayes \cite{Zhang2014}, a data-dependent algorithm suitable for multi-dimensional datasets.  Given an input dataset $D$ it synthesizes a private version of it, $\tilde{D}$. PrivBayes works in three steps: first it spends $\epsilon_1 = \epsilon /2 $ of its privacy budget to learn, using the Exponential Mechanism \cite{mcsherry2007mechanism,Dwork14Algorithmic}, a Bayesian network that best describes the underlying dependencies between the attributes;
%This is achieved by a novel scoring function on the possible pairs of attributes that emulates mutual information and by applying the exponential mechanism \cite{Dwork14Algorithmic} on the output of the scoring function.
second it uses the Laplace mechanism to estimate the conditional probability distributions for each of the nodes of the network by using a total of  $\epsilon_2 = \epsilon / 2$ privacy budget; and third, it uses the Bayes network to estimate the  full joint distribution of the data
%which can be expressed as the product of conditional distributions. After that PrivBayes
and samples $N$ tuples from the noisy distribution to return a synthetic dataset, where $N$ is the number of tuples in the original dataset.
%The first step of PrivBayes can be thought as a query selection operator, the second as a query operator, and the final step as a form of inference.
As a baseline, we also compare with the \algoname{Identity} algorithm, which measures each element of the domain using the Laplace mechanism (\planref{\planidentity} in Fig. \ref{fig:operator_index})}

\setlength{\tabcolsep}{12pt}

\def\num#1{\numx#1}\def\numx#1e#2{{#1}\mathrm{e}{#2}}
\makeatletter
\newcolumntype{B}[3]{>{\boldmath\DC@{#1}{#2}{#3}}c<{\DC@end}}

\begin{table}
\small
    \caption{\label{table:privbayes} Results on Census data; domain size 1,400,000; scale of error is indicated under each workload.}

  \centering

  %\begin{tabularx}{\columnwidth}{l*{3}{d{-3}}}cat asdfsdfasdf
  \begin{tabularx}{0.8\columnwidth}{l d{-3} d{-3} d{-3}}
%  \toprule
&  \multicolumn{3}{c}{\bf Workload}  \\
 \multicolumn{1}{c}{\textbf{Algorithm}}       &
 \multicolumn{1}{c}{\shortstack{Identity \\  {\small $(\num{1e-9})$} }} &
 \multicolumn{1}{c}{\shortstack{2-way Marg. \\ {\small $(\num{1e-7})$} }} &
 \multicolumn{1}{c}{\shortstack{ Prefix (Income)  \\ {\small $(\num{1e-7})$} }} \\ \hline

   \algoname{Identity}        & 24.18   & 12.04   & 18.97     \\ \hline
   \algoname{PrivBayes}   & 76.93   & 65.31 & 28.70   \\ \hline
   \algoname{PrivbayesLS}   & 5.86  & 13.29   & 36.81     \\ \hline
%   \algoname{DawaBayes}    & 4.55  & 1.79    & 3.85  \\ \hline
   \algoname{Hb-Striped}        & 70.31   & 21.91   & 4.13  \\ \hline
   \algoname{Dawa-Striped}   & \multicolumn{1}{B{.}{.}{-1}}{3.43}   & \multicolumn{1}{B{.}{.}{-1}}{1.96}  & \multicolumn{1}{B{.}{.}{-1}}{2.50}    \\ \hline
  \end{tabularx}

\end{table}

Table \ref{table:privbayes} presents the results for each workload. We use scaled, per-query L2 error to measure accuracy.  First, we find that  \algoname{PrivBayes} performs worse than \algoname{Identity} on all workloads.  Interestingly, on Identity and 2-way marginal workloads, it is improved by our new plan  \algoname{PrivBayesLS} that replaces its inference step with least squares.  \algoname{PrivBayes} may be more suitable to input data with higher correlations between the attributes.
% that it can exploit in modeling the data with a Bayes net.
%
Second, our striped plans \algoname{Hb-Striped} and \algoname{Dawa-Striped} offer significant improvements in error. \algoname{Dawa-Striped} is the best performer: the data-dependent nature of DAWA exploits uniform regions in the partitioned data vectors.  This shows the benefit from \sys in allowing algorithm idioms designed for lower-dimensional data to be adapted to high dimensional problems.

%\algoname{DawaBayes} is also highly competitive and when evaluated for other workloads it is constantly among the top 2 algorithms in terms of error. A key finding is that partitioning a large domain size does not hurt the network learning part of \algoname{PrivBayes} by too much.

\subsubsection{Naive Bayes classification}\label{sec:nb:exp}

\begin{figure}[h]
\centering
\includegraphics[width=.5\columnwidth]{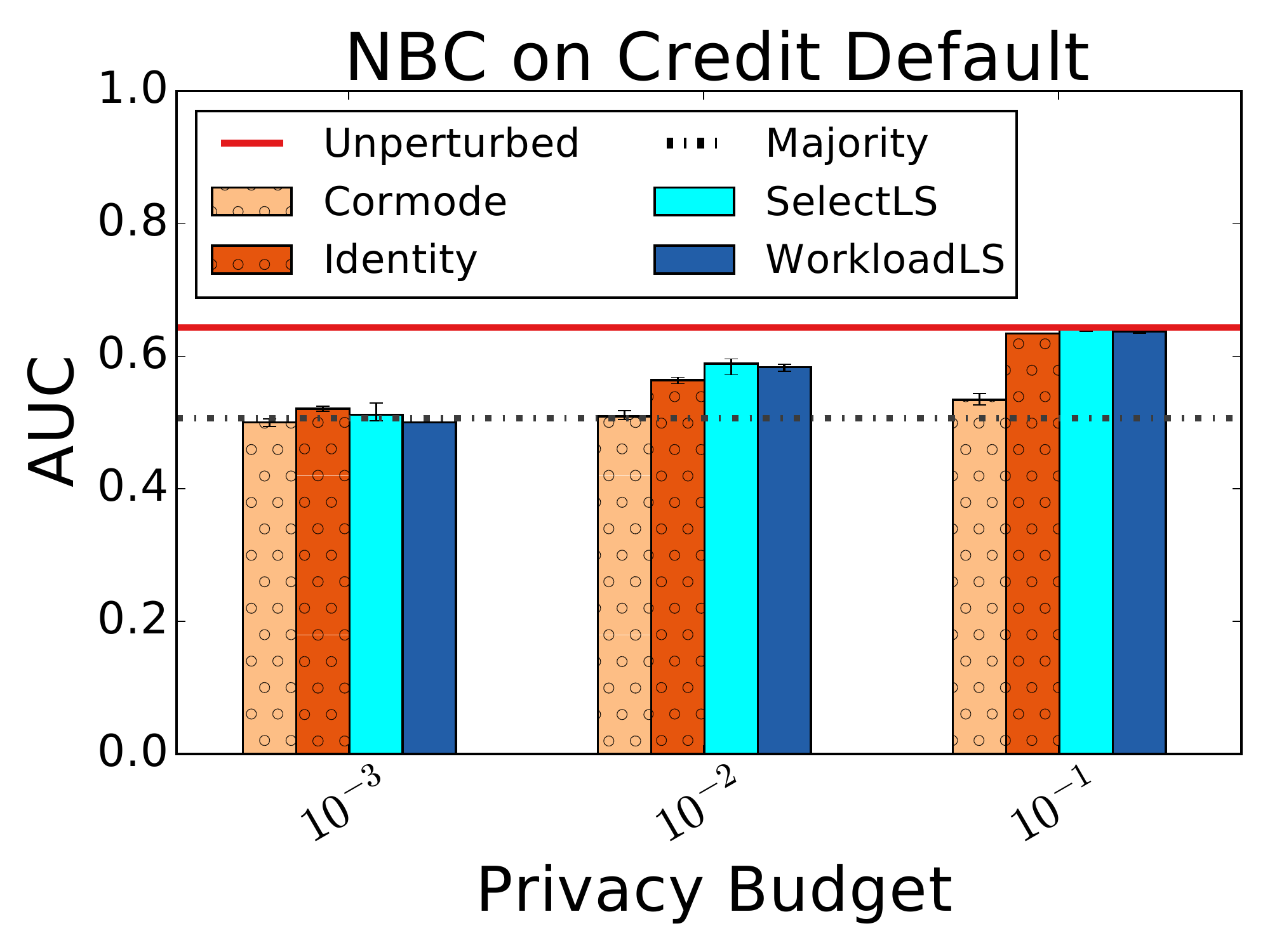}
\vspace{-2ex}
\caption{\label{fig:NBC} New \sys plans \algoname{WorkloadLS} and \algoname{SelectLS} result in NaiveBayes classifiers with lower error than plans that correspond to algorithms from prior work, and approach the accuracy of a non-private classifier for various $\epsilon$ values.}
\end{figure}

We evaluate the performance of the Naive Bayes classifier on \textit{Credit Default} \cite{Yeh20092473}, a  credit card clients dataset which we use to predict whether a client will default on their payment or not. The data consists of $30$k tuples and $24$ attributes from which one is the target binary variable ``Default'' and the rest are the predictive variables. We used the predictive variables $X_3-X_6$ for a total combined domain size of $17,248$.

In our experiments we measure the average area under the curve (AUC) of the receiver operating characteristic curve across a $10$-fold cross validation test. The AUC measures the probability that a randomly chosen positive instance will be ranked higher than a randomly chosen negative instance. We repeat  this process $10$ times  (for a total of $100$ unique testing/training splits) to account for the randomness of the differentially private algorithms and report the $\{25, 50, 75\}$-percentiles of the average AUC.
As a baseline we show the majority classifier, which always predicts the majority class of the training data and also show the unperturbed classifier as an upper bound for the utility of our algorithms.

In \cref{fig:NBC} we report our findings: each group of bars corresponds to a different $\epsilon$ value and each bar shows the median value of the AUC for an algorithm. For each DP algorithm we also plot the error bars at the $25$ and $75$ percentiles. The dotted line is plotted at $0.5067$ and shows the AUC of the majority classifier. The continuous red line is the performance of the non-private classifier (Unperturbed). For larger $\epsilon$ values we see that our plans significantly outperform the baseline and reach AUC levels close to the unperturbed. As $\epsilon$ decreases, the quality of the private classifiers degrades and for $\epsilon = 10^{-3}$ the noise added to the empirical distributions drowns the signal and the AUC of the private classifiers reach $0.5$, which is the performance of a random classifier. Our plan \algoname{WorkloadLS} is essentially the algorithm of \cite{Cormode11Differentially} with an extra inference operator.  This shows that the addition of an extra operator to a previous solution significantly increases its performance.

%%%%%%%%%%%%%%%%%%%%%%%%%%%%%%%%%%%%%%%%%%%%%%%%%%%%%%%%%%%%%%%%%%%%%%%%
\subsection{Implementation comparison}
As discussed in \cref{sec:implementation}, most \sys operators involve performing operations on matrices, which can be implemented using several different representations: dense, sparse, and implicit.  All of them are lossless representations of the underlying matrix, so the choice of implementation does not influence plan accuracy. However, it could impact the efficiency and scalability.

In this section, we compare these alternative implementations.  We first evaluate how the choice of matrix implemenation impacts scalability and efficiency of plans.  Then we have a focused experiment on a key matrix operation: inference.

\subsubsection{Scalability and efficiency of plans}
\label{sec:sub:plan_efficiency}

\begin{figure}
  \centering
  \begin{subfigure}{\textwidth}
    \includegraphics[width=\textwidth]{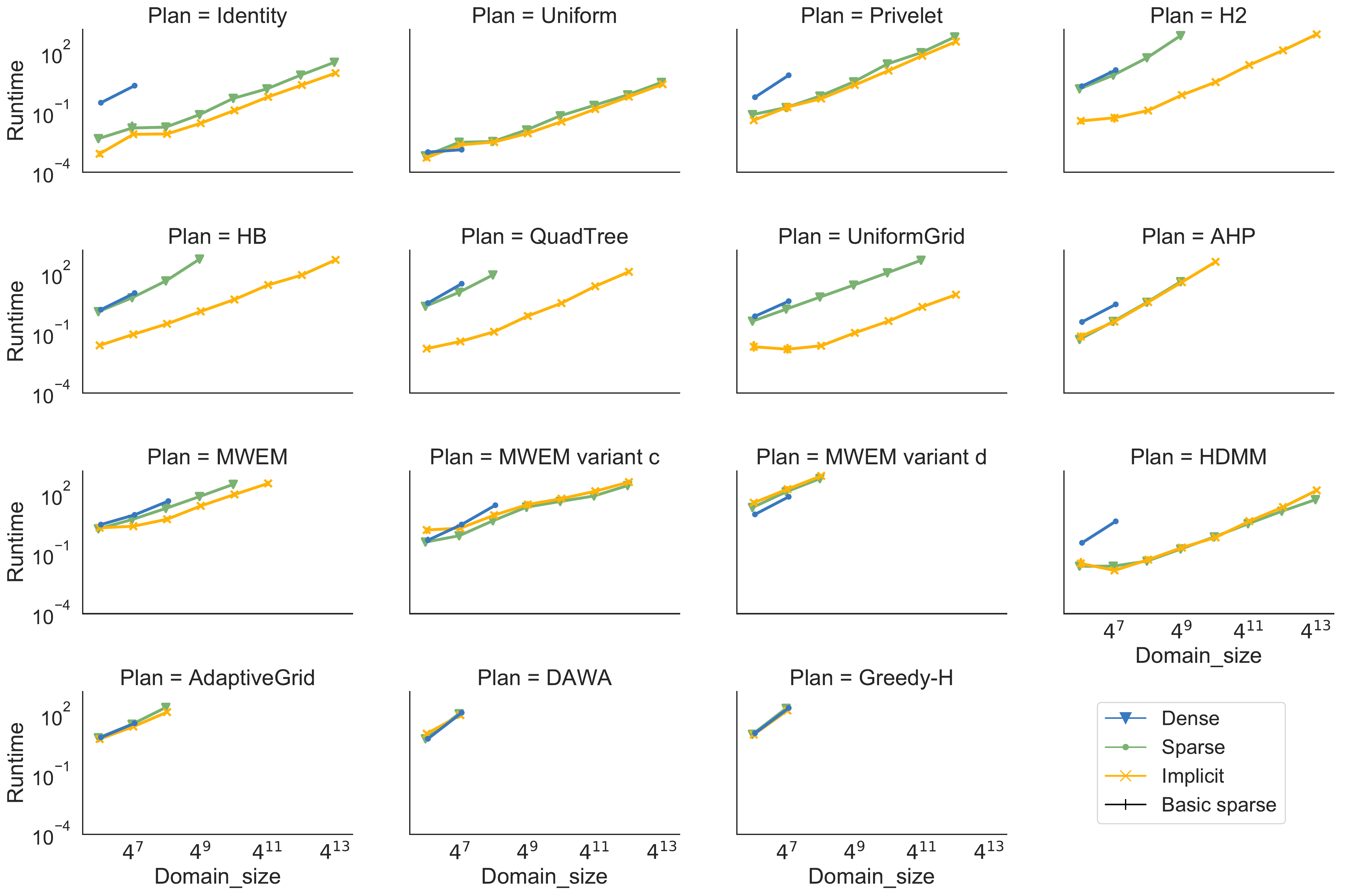}
    \caption{1-D and 2-D plans}
    \label{fig:plan_efficiency_ld}
  \end{subfigure}
  \begin{subfigure}{\textwidth}
    \includegraphics[width=\textwidth]{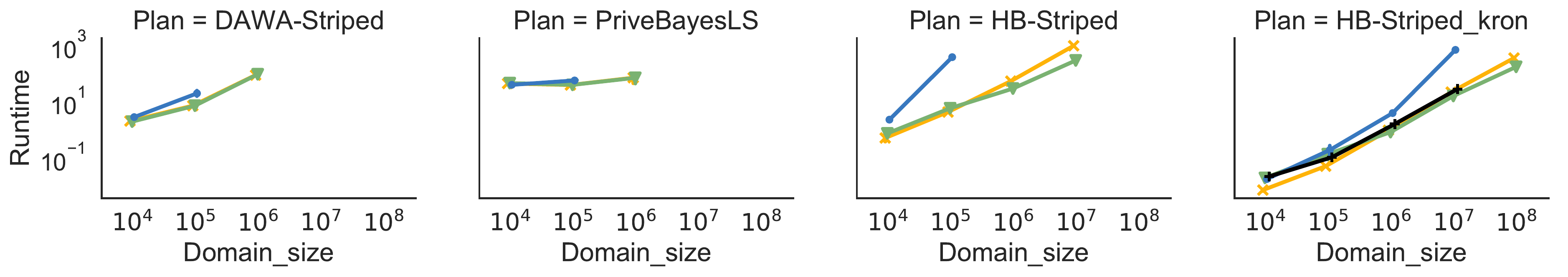}
    \caption{Multi-dimensional plans}
    \label{fig:plan_efficiency_hd}
  \end{subfigure}
  \caption{Plan execution time with different implementation of measurement matrices with Identity workload} % caption for whole figure
  \label{fig:plan_efficiency}

\end{figure}

To understand the impact of different physical implementations for plans in \cref{fig:plan_signatures}, we compare the runtime of plans using implicit measurement matrices, which are new, with previous implementations from \cite{Zhang:2018:EFD:3183713.3196921}, which can use either dense or sparse matrices. For the plans \op{HDMM} and \op{HB-striped\_kron}, which contain operators that were not supported in the previous implementation, we make the comparison by converting implicit matrices to their sparse and dense representations. We measure the average end-to-end execution time over 5 random trials for three implementations of each plan along increasing domain sizes. We stop any execution when it runs for more than 1000s.

\cref{fig:plan_efficiency_ld} shows runtime for low-dimensional plans.  All plans are applied on two-dimensional domain square domains, except for \op{DAWA} and \op{Greedy-H}, which are designed for one-dimensional domains.  \planref \planmwemc\ \op{MWEM variant b} is omitted because it timed out even at the smallest domain we tested here.
The results show that for most plans, the implicit implementation has the best scalability. Also, looking at a fixed domain size, the implicit representation usually leads to faster runtime than its dense and sparse counterparts.
% Plans using dense matrix will fail soon as we increase the domains size since we can not afford to represent the full matrix in memory. Using sparse representations, we could scale to larger domains.
%

The performance improvement is most pronounced with plans \op{HB}, \op{QuadTree} and \op{UniformGrid} where implicit representation can scale to domains larger by a factor of 1000x. These algorithms construct hierarchical/grid-based measurement matrices and can be represented as {\em Range Queries}, a special instance of the implicit matrices. As discussed in \cref
{sec:range_queries}, this representation is compact and supports faster matrix-vector products.
There are few cases where the difference between implementations is less pronounced. \op{DAWA} and \op{Greedy-H} share the same special selection subroutine which needs to materialize the matrix. \op{AdaptiveGrid} has a plan that requires iterating through a potentially large number of partitions, and this step appears to dominate the runtime.

Results for high-dimensional plans are shown in \cref{fig:plan_efficiency_hd}.
%
% displays runtime of three algorithms (\planref \planhbstripe\ \op{HB-Striped} and \planhbstripekron \op{HB-Striped\_kron} are different implementations of the the same algorithm) we introduced for high-dimensional data in the Census case study (\cref{sec:census_plans}).
%
For the first three plans, sparse and implicit representations exhibit similar performance and scale to domains at least 10x larger than using dense.
For the last plan, \op{HB-Striped\_kron}, recall from \cref{sec:census_plans} that this plan is an alternative way of expressing the same algorithm as the \op{HB-Striped} plan, but instead of partitioning the data, it uses Kronecker products to express queries compactly in terms of submatrices.  By comparing adjacent figures, we can see the approach based on Kronecker products allows plans to scale to at least 10x larger domains across implementations of the submatrices.  As another comparison point that illustrates the benefits of Kronecker products, in the last figure, we include ``Basic sparse'', an alternative implementation of the \op{HB-Striped\_kron} plan where the query Kronecker product matrix is replaced with a materialized sparse matrix over the full domain.

% Comparing the two plans for the striped HB algorithm, \op{HB-Striped\_kron} is more efficient than its iterative alternative \op{HB-Striped} and scales to 10x-100x larger domain. For \op{HB-Striped\_kron}, we vary the implementation of submatrices in the Kronecker product. ``Basic sparse'' in the figure represents the case where measurement is a materialized sparse matrix on the full domain. Results show that using Kronecker product of sparse or implicit matrices allows plans to scale to 10x larger domains compared to a full sparse matrix or measurement composed by dense submatrices.

\subsubsection{Scalability of inference}

Inference is one of the most compu\-tation-intensive operators in \sys\, especially for large domains resulting from multidimensional data. Next, we show the impact of implementation choices on the scalability of inference. \cref{fig:scaling_inference} shows the computation time for running our main inference operators (LS and NNLS) as a function of data vector size.

Recall that the methods described in  \cref{sec:improved_inference} provide efficiency improvements by using iterative solution strategies ({\em iterative} instead of {\em direct} in the figure) and exploiting sparsity in the measurement matrix ({\em sparse} or {\em implicit} as opposed to {\em dense} in the figure). For this experiment, we fix the measured query set to consist of binary hierarchical measurements \cite{hay2010boosting}.  \cref{fig:scaling_inference} shows that using sparse matrices and iterative methods allow inference to scale to data vectors consisting of millions of counts on a single machine in less than a minute. The use of implicit matrices permits additional scale-up for both \op{LeastSquares} and \op{NNLS}.  We also compare against the inference method introduced by Hay et al., denoted `Tree-based' in the figure.  It  is an algorithm that is logically equivalent to \op{LeastSquares} but specialized for hierarchically structured measurements.  The general-purpose \op{LeastSquares} implementation is able to scale to much larger domains.

\vspace{-1ex}
\begin{figure}[h]
\centering
\includegraphics[width=0.9\columnwidth]{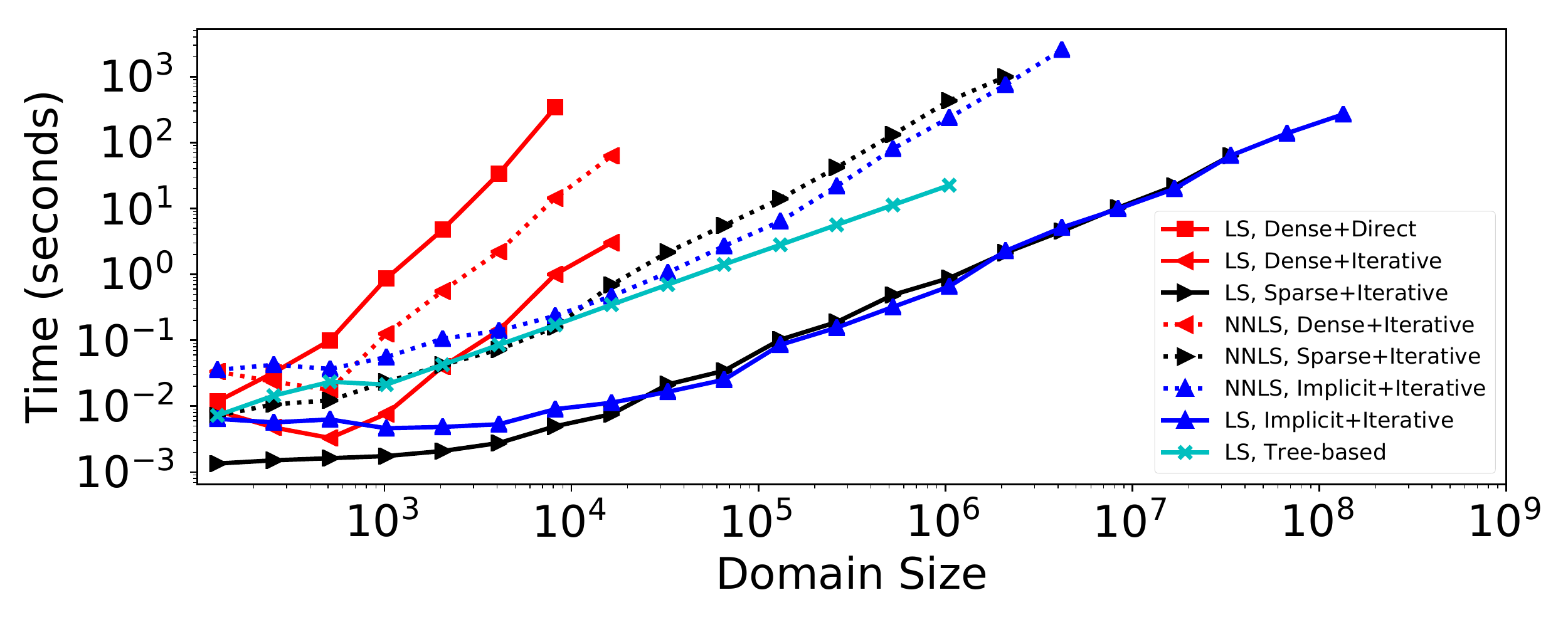}
\vspace{-2ex}
\caption{\label{fig:scaling_inference} For a given computation time, the proposed iterative and implicit inference methods permit scaling to data vector sizes as much as $1000\times$ larger than previous techniques using direct approaches and dense matrices.}
\end{figure}

%%%%%%%%%%%%%%%%%%%%%%%%%%%%%%%%%%%%%%%%%%%%%%%%%%%%%%%%%%%%%%%%%%%%%%%%
\subsection{Workload-driven data reduction}
\label{sec:sub:wbdr_exp}
%\mh{i'm not quite sure what to do with this...  move to the end for now.}
Next, we evaluate the impact of workload-driven data reduction, as described in Section~\ref{sec:sub:wkld}.  For selected algorithms, Table \ref{table:llreduce} shows that performing workload-driven data reduction improves \emph{both} error and runtime, almost universally.
\begin{table}[h]
\small
    \caption{\label{table:llreduce}Runtime (sec) and error improvements resulting from workload-based domain reduction. (W=RandomRange, small ranges. Original domain size: AHP (128,128), DAWA 4096, Identity (256,256), HB 4096)}

  \centering

  \begin{tabularx}{\columnwidth}{l  c  r | c  r | c  c }
%  \toprule
 \multirow{2}{*}{ \textbf{ Algorithm} \vspace{15mm}} &
 \multicolumn{2}{c}{\shortstack{Original   \\ Domain}}   &
 \multicolumn{2}{c}{\shortstack{Reduced     \\ Domain}}  &
 \multicolumn{2}{c}{\shortstack{Factor \\ Improved}}  \\ \cline{2-7}
 &
\multicolumn{2}{c}{Error/Runtime}  &
\multicolumn{2}{c}{Error/Runtime}  &
\multicolumn{2}{c}{Error/Runtime} \\ \cline{1-7}

   \algoname{AHP}    & 1.68$\num{e-5}$  & 777.10  & 1.30$\num{e-5}$ & 145.00  & 1.29  & 5.36      \\ \cline{1-7}
   \algoname{DAWA}     & 1.06$\num{e-5}$  & 0.23  & 1.07$\num{e-5}$ & 0.25    & 0.99    & 0.92    \\ \cline{1-7}
   \algoname{Identity}& 4.74$\num{e-5}$ & 0.66  & 1.64$\num{e-5}$ & 0.90  & 2.89    & 0.73      \\ \cline{1-7}
   \algoname{HB}      & 3.20$\num{e-5}$ &0.05   & 2.38$\num{e-5}$ & 0.08    & 1.34  & 0.62      \\ \cline{1-7}

\end{tabularx}
\end{table}

The biggest improvement in error (a factor of 2.89) is witnessed for the \algoname{Identity} algorithm.  Without workload-driven reduction, groups of elements of the domain are estimated independently even though the workload only uses the total of the group.  After reduction, the sum of the group of elements is estimated and will have lower variance than the sum of independent measurements.

The biggest improvement in runtime occurs for the AHP algorithm.  This algorithm has an expensive clustering step, performed on each element of the data vector.  Workload-driven reduction reduces the cost of this step, since it is performed on a smaller data vector.  It also tends to improve error because higher-quality clusters are found on the reduced data representation.

%\todo{update summary once experiments are in}

\subsection{Summary of Findings}

The experiments evaluate the accuracy, scalability, and efficiency of \sys.
The case studies show that \sys can lead to more accurate algorithms with relatively little effort from the programmer: the MWEM algorithm can be improved significantly by replacing a few key operators; for the Census and Naive Bayes case studies, \sys can be used to design novel algorithms from existing building blocks, offering state-of-the-art error rates.
The study of scalability and efficiency found that implicit matrix representation can lead to huge performance gains, increasing scalability by a factor 1000x in some cases.  The generalized inference implementation scales well and outperforms specialized algorithms.
Finally, the evaluation shows the workload-driven data reduction improves accuracy and runtime, almost universally, so that it can be added to all workload-based plans with little cost and significant potential for gains.

%!TEX root = ms.tex
\section{Related work} \label{sec:related}

\sys was first described by the authors in \cite{Zhang:2018:EFD:3183713.3196921}. The open-source codebase for \sys is publicly available and the results of \cite{Zhang:2018:EFD:3183713.3196921} are currently under SIGMOD reproducibility review. This manuscript extends \cite{Zhang:2018:EFD:3183713.3196921} by providing a unified and  improved approach to the efficient representation of matrix objects.  The new matrix representations are fundamental to \sys plans as they are used to represent workload queries, measurement queries, and partitions.  This innovation impacts many aspects of \sys, including inference and workload-based partition selection.  Existing plans were re-implemented using these new matrix techniques to measure their impact and were demonstrated to allow \sys plans to scale to far larger data vectors than previously possible.  %A number of formal results were added and editorial improvements made.

The implicit matrix representation furthers the notion that there can be a separation between the matrix as a logical representation and its physical implementation.  Implementation choices can be transparent to plan authors and open up interesting directions for developing highly optimized implementations.

A number of languages and programming frameworks have been proposed to make it easier for users to write private programs \cite{mcsherry2009pinq,Proserpio14Calibrating,Ebadi17Featherweight,Roy10Airavat:}.  The {\em Privacy Integrated Queries} (PINQ) platform began this line of work and is an important foundation for \sys.  We use the fundamentals of PINQ to ensure that plans implemented in \sys\ are differentially private.  In particular, we adapt and extend a formal model of a subset of PINQ features, called Featherweight PINQ~\cite{Ebadi17Featherweight}, to show that plans written using \sys operators satisfy differential privacy.  Our extension adds support for the partition operator, a valuable operator for designing complex plans.

%\cite{DBLP:elastic} presents a differentially private system for answering SQL queries, designed to run on top of a standard database system.  It 

Additionally, there is a growing literature on formal verification tools that prove that an algorithm satisfies differential privacy \cite{gaboardi:popl13,privinfer:ccs16,zhang:popl17}. For instance, LightDP \cite{zhang:popl17} is a simple imperative language in which differentially private programs can be written. LightDP allows for verification of sophisticated differentially private algorithms with little manual effort. LightDP's goal is orthogonal to that of \sys: it simplifies proofs of privacy, while \sys's goal is to simplify the design of algorithms that achieve high accuracy.
%In fact, \sys plans are at a much higher level than LightDP programs.
Nevertheless, an interesting future direction would be to implement \sys operators in LightDP to simplify both problems of verifying privacy and achieving high utility.

%To summarize, while privacy is clearly a requirement of \sys\ plans, our primary goal is to enable the construction of programs that are {\em accurate} and {\em efficient}.  \sys\ is designed to provide plan authors with the right abstractions to design high performance plans and to provide a framework that supports system algorithm development for more skilled privacy engineers.
% to allow accurate and efficient plans to be defined, and to support advanced algorithm design in the future.
%We view plan authors as comparatively skilled actors.
%To support users who are less sophisticated than plan authors, we would use an automated optimization capability, built on top of \sys, but this is future work.

%By itself, PINQ provides little support to a programmer in writing {\em accurate} differentially private programs and the techniques involved in writing programs that achieve state-of-the-art accuracy have become increasingly complex.

%We are not attempting to build a general purpose private programming language, in the same way that a database system is not a general-purpose language for computing on data. Instead, we specialize to a particular well-defined task and build a system that supports plan authors in achieving state-of-the-art accuracy and efficiency for this task.

Concurrently with \cite{Zhang:2018:EFD:3183713.3196921}, Kellaris et al.~\cite{kellaris2015differentially} observed that algorithms for single-dimensional histogram tasks share subroutines that perform common functions.  The authors compare a number of existing algorithms along with new variants formed by combining subroutines, empirically evaluating trade-offs between accuracy and efficiency.  The focus is exclusively one-dimensional tasks.

The use of inference in differentially private algorithm design is not new~\cite{Williams2010,hay2010boosting,barak2007privacy}, and is used in various guises throughout recent work~\cite{xiao2010differential,Li14Data-,li2015matrix,zhangtowards,Acs2012compression,Cormode11Differentially,proserpio2012workflow,Lee15Maximum,mckenna2018optimizing}.
%There has been some work on developing general-purpose inference machinery:
Proserpio et al.~\cite{proserpio2012workflow} propose a general-purpose inference engine based on MCMC that leverages properties of its operators to offset the otherwise high time/space cost of this form of inference.  Our work is complementary in that we focus on a different kind of inference (based on least squares) in part because it is used, often implicitly, in many published techniques.  A deeper investigation of alternative inference strategies is a compelling research direction.

Our use of implicit matrices was inspired by their use in \cite{mckenna2018optimizing}, where Kronecker products were used extensively to represent high-dimensional workloads and measurements.  \sys implicit matrices extend and generalize those matrix constructions. Techniques that use measurements based on wavelets \cite{barak2007privacy} (for range query workloads) and Fourier basis queries \cite{xiao2010differential} (for marginals), are examples in which measurement and inference is performed without materialization of a matrix, and so they could be seen as implicit methods. Hierarchical query sets \cite{hay2010boosting,Qardaji13Understanding} also admit inference methods that do not require materialization of a matrix.  Each of these examples relies on the special structure of the query sets they use in order to achieve scalability.  Our approach in \sys is more flexible: it allows the plan author to focus on \emph{what} to measure, rather than how to measure it and how to perform inference efficiently.  

\sys provides an execution framework for privacy algorithms, but does not perform plan-level optimization.  The matrix mechanism~\cite{li2015matrix} (and more recently the high-dimensional matrix mechanism~\cite{mckenna2018optimizing}) formulates an optimization problem that corresponds to query selection in \sys. The mechanism then estimates the selected queries and applies least squares inference.  This can be seen as a kind of optimization, but in a limited plan space which admits only data-independent plans.  Recent work~\cite{kotsogiannis2017pythia} examines the problem of algorithm selection---selecting the best algorithm for a given private dataset and task---and proposes a meta-algorithm, Pythia, capable of choosing among a set of ``black box'' algorithms. In contrast, \sys takes a ``white box'' approach, decomposing existing algorithms into modular operators and allowing plan authors to design new algorithms.  Pythia could be adapted to automatically select operators in \sys and, in fact, Pythia could be implemented as an \sys plan.

As noted above, our efficiency and scalability efforts have so far been focused on a centralized setting where \sys plans are executed on a single machine and data vectors fit in memory.  As such, \sys currently makes use Python's pandas and SciPy modules for relational and matrix processing, respectively.  In the future it may be beneficial for \sys to use a distributed data processing platform such as Apache's Hive\cite{hive} , Accumulo\cite{accumulo} , or Spark\cite{spark} , or to consider the innovations of academic projects such as Weld~\cite{shoumik:2017}, LaraDB~\cite{Hutchison:2017}, SPOOF~\cite{elgamal:2017}, and Samsara~\cite{schelter:2016}, many of which support parallelized matrix operations.  

However, these platforms do not provide easy scalability solutions for \sys, and are not substitutes for exploiting the special matrix structure present in \sys operators, which will be necessary and beneficial in any execution platform that is adopted.  (For example,  no computing platform is so efficient that it obviates the need to reduce a matrix from 56GB to less than 100 bytes, as in \cref{ex:implicit}.) Given our implicit matrix representations, the main bottleneck becomes the (always dense) representation of the data vector.  This could be distributed across a cluster and operated on using a variety of systems, however, there is reason to believe, due to limits of differentially private estimation, that most measurements will be performed on projections of the input relation, and will therefore result in a collection of in-memory data vectors rather than one monolithic vectorized dataset.

\section{Conclusions} \label{sec:conc}
\vspace{-1mm}
We have described the design and implementation of \sys: an extensible programming framework and system for defining and executing differentially private algorithms.  Many state-of-the-art differentially private algorithms can be specified as plans consisting of sequences of operators, increasing code reuse and facilitating more transparent algorithm comparisons.
% Algorithms implemented in \sys are often faster and return more accurate answers.
Algorithms implemented in \sys are often faster and scale to larger domains by leveraging \sys's compact internal data representations, based on implicit matrices. 
Using \sys, we designed new algorithms that outperform the state of the art in accuracy on linear query answering tasks.

By allowing plan authors to focus on the simpler problem of designing a plan, and shifting the burden of implementing privacy and accuracy-critical operators to privacy engineers, we hope \sys will be a key driver in the wider adoption of differential privacy.

\sys is extensible and, through the addition of new operators, we hope to continue to expand the classes of tasks that can be supported.  For example, we would like to use \sys to build a differentially-private SQL query-answering system.  This requires a number of extensions, including support for specifying and enforcing more complex privacy policies over multiple input relations, extended relational transformations, and accompanying stability analysis.  In addition, we believe even greater scalability could be achieved by the addition of new inference operators that do not require full vectorization of the input data.  Lastly, \sys allows for a wide range of plans to be expressed and implemented in multiple ways, but it lacks plan-level automated optimization (some operators like HDMM and Greedy-H do perform operator-level optimization).  An optimizer for \sys would need to balance efficiency and accuracy metrics and reason about the equivalence of plans that contain randomized components.

{
\bibliographystyle{ACM-Reference-Format}
\bibliography{ms}
}

%\appendix
%\section{Appendix} \label{sec:appendix}
%\input{previous_results}

%\input{editor_note}

\end{document}